\newtheorem{thm}{Theorem}[section]
\newtheorem{cor}[thm]{Corollary}
\newtheorem{lem}[thm]{Lemma}
\newtheorem{prop}[thm]{Proposition}
\theoremstyle{definition}
\theoremstyle{remark}
\newtheorem{rem}[thm]{Remark}
\numberwithin{equation}{section}
\numberwithin{thm}{section}
\newcommand{\R}{{\mathbb{R}}}
\renewcommand{\v}[1]{\ensuremath{\mathbf{#1}}}
\newcommand{\mveca}[2]{ \langle {#1} \rangle
\frac{{#1}\cdot {#2}}{|{#1}||{#2}|} \frac{|{#1}-{#2}|}{\langle {#1}-{#2} \rangle}}
\newcommand{\mvecc}[2] { |{#1}| \frac{ ({#1}-{#2})\cdot {#2} } {|{#1}-{#2}||{#2}|}}
\newcommand{\jnab}{\langle \nabla \rangle}
\newcommand{\psa}{ \frac i 4\langle \xi \rangle \langle \eta \rangle^{-1} |\eta|
\frac{\xi\cdot(\xi-\eta)}{|\xi||\xi-\eta|}}
\newcommand{\psb}{\frac i 8 |\xi|\frac{|\eta||\xi-\eta|}{\langle \eta \rangle \langle \xi-\eta\rangle}}
\newcommand{\psc}{\frac i 8 |\xi|\frac{\eta\cdot(\xi-\eta)}{|\eta||\xi-\eta|}}
\newcommand{\ed}{\end {document}}
\newcounter{smalllist}
\newcommand{\lnr}{\langle \nabla \rangle}
\newcommand{\lsr}{\langle s \rangle^{\delta_0}}
\title[global solution for Euler-Poisson]{The Cauchy problem for the two dimensional
Euler-Poisson system}
\author[D. Li]{Dong Li}
\address{Department of Mathematics, University of British Columbia, Vancouver, BC, V6T 1Z2}%
\email{dli@math.ubc.ca}
\author[Y. Wu]{Yifei Wu}
\address{School of Mathematical Science, Beijing Normal University, Beijing, China, 100875.}
\email{yerfmath@yahoo.cn}
\begin{document}
\maketitle
\begin{abstract}
The Euler-Poisson system is a fundamental two-fluid model to
describe the dynamics of the plasma consisting of compressible
electrons and a uniform ion background. In the 3D case Guo \cite{Guo98} first constructed a global
smooth irrotational solution by using the dispersive Klein-Gordon effect. It has
been conjectured that same results should hold in the
two-dimensional case. In our recent work \cite{jlz}, we proved the existence of a family of smooth
solutions by constructing the wave operators for the 2D system.
In this work we completely settle the 2D Cauchy problem.
\end{abstract}

\section{Introduction}

The Euler-Poisson system is one of the simplest  two-fluid models used to describe the
dynamics of a plasma consisting of moving electrons and ions. In this model the heavy ions
are assumed to be immobile and uniformly distributed in space, providing only a background of
positive charge. The light electrons are modeled as a charged compressible fluid
 moving against the ionic forces. Neglecting magnetic effects,
the governing dynamics of the electron fluid is given
by the following Euler-Poisson system in $(t,x) \in [0,\infty) \times \mathbb R^d$,
\begin{align} \label{eq_EP_1}
\begin{cases}
\partial_t n + \nabla \cdot (n \v u) =0, \\
m_e n (\partial_t \v u + (\v u\cdot \nabla) \v u ) + \nabla p(n) = e n \nabla \phi, \\
\Delta \phi = 4\pi e (n-n_0).
\end{cases}
\end{align}

Here $n=n(t,x)$ and $\v u=\v u(t,x)$ denote the density and average velocities of the electrons respectively.
The symbol $e$ and $m_e$ denote the unit charge and
mass of electrons. The pressure term $p(n)$ is assumed to obey the polytropic $\gamma$-law, i.e.
\begin{align} \label{eq_EP_2}
p(n) = A n^\gamma,
\end{align}
where $A$ is the entropy constant and $\gamma \ge 1$ is called the adiabatic index.
The term $en\nabla \phi=(-ne)\cdot(-\nabla \phi)$  quantifies the
electric force acting on the electron fluid
by the positive ion background. Note that the electrons carry negative charge $-ne$.
We assume at the equilibrium the density of ions and electrons are both a constant denoted by
$n_0$. To ensure charge neutrality it is natural to impose the condition
\begin{align*}
\int_{\mathbb R^d} (n-n_0) dx =0.
\end{align*}
The boundary condition for the electric potential $\phi$ is a decaying condition at infinity,
i.e.
\begin{align} \label{eq_EP_3}
\lim_{|x| \to \infty} \phi(t,x) = 0.
\end{align}

The first and second equations in \eqref{eq_EP_1} represent mass conservation and momentum balance of
the electron fluid respectively. The third equation in \eqref{eq_EP_1} is the usual Gauss law in electrostatics.
It computes the electric potential self-consistently through the charge distribution $(n_0e -ne)$.
The Euler-Poisson system is one of the simplest
two-fluid model in the sense that the ions are treated as uniformly distributed sources in space and they
appear only as a constant
$n_0$ in the Poisson equation. This is a very physical approximation since $m_{ion}\gg m_{e}$
and the heavy ions move much more slowly than the light electrons.

Throughout the rest of this paper, we shall consider an irrotational flow
\begin{align}
\nabla \times {\mathbf u}=0 \label{e_irrot}
\end{align}
 which
is preserved in time. For flows with nonzero curl the magnetic field is no longer negligible and it is more physical
to consider the full Euler-Maxwell system.

We are interested in constructing smooth global solution around the equilibrium $(n,\v u)\equiv (n_0,0)$. To do this
we first transform the system \eqref{eq_EP_1} in terms of certain perturbed
variables. For simplicity set all physical constants $e$, $m_e$, $4\pi$ and $A$ to be
one. To simplify the presentation, we also set $\gamma=3$ although other cases of $\gamma$ can be easily
treated as well. Define the rescaled functions
\begin{align*}
u(t,x) & = \frac{n(t/c_0, x) -n_0} {n_0}, \\
\v v(t,x) & = \frac 1 {c_0} \v u (t/c_0, x), \\
\psi(t,x) & = 3 \phi (t/c_0, x),
\end{align*}
where the sound speed is $c_0 = \sqrt 3 n_0$. For convenience we set $n_0 = 1/3$ so that the characteristic
wave speed is unity. The Euler-Poisson system \eqref{eq_EP_1} in new variables takes the form
\begin{align} \label{e927_450}
\begin{cases}
\partial_t u + \nabla \cdot \v v + \nabla \cdot( u \v v) =0, \\
\partial_t \v v + \nabla u + \nabla \left( \frac 12 u^2 + \frac 12 |\v v|^2 \right) =  \nabla \psi, \\
\Delta \psi = u.
\end{cases}
\end{align}
Taking one more time derivative and using \eqref{e_irrot} then transforms \eqref{e927_450}
into the following quasi-linear Klein-Gordon system:
\begin{align} \label{e927_450abc}
\begin{cases}
(\square +1) u= \Delta\left( \frac 12 u^2+ \frac 12 |\v v|^2 \right) -\partial_t \nabla \cdot (u\v v), \\
(\square+1) \v v = -\partial_t \nabla \left( \frac 12 u^2+\frac 12 |\v v|^2 \right)
+ (1-\Delta^{-1}) \nabla \nabla \cdot (u \v v).
\end{cases}
\end{align}

For the above system, in the 3D case Guo \cite{Guo98} first constructed a global
smooth irrotational solution by using dispersive Klein-Gordon effect and adapting Shatah's normal
form method. It has
been conjectured that same results should hold in the
two-dimensional case. In our recent work \cite{jlz}, we proved the existence of a family of smooth
solutions by constructing the wave operators for the 2D system. The 2D problem with radial data
was studied in \cite{Jang_radial}.  Note that for radial data\footnote{The vector function $\mathbf v$ is radial if it
is the gradient of a scalar radial function}, one has
$$ \Delta^{-1} \nabla \nabla \cdot ( u {\v v}) = u \v v$$
and the  result follows easily from \cite{Oz96}.
In this work we completely settle the 2D Cauchy problem for general non-radial data.
The approach we take in this paper is inspired
from a new set-up of normal form transformation developed by Gustafson, Nakanishi, Tsai \cite{GNT06} and also
Germain, Masmoudi and Shatah \cite{GMS1,GMS2,GMS3}.
Roughly speaking (and over-simplifying quite a bit), the philosophy of the normal form method is that \textit{one should integrate parts
whenever you can in either (frequency) space
or time}. The part where one cannot integrate by parts is called the set of space-time resonances which
can often be controlled by some finer analysis
provided the set is not so large or satisfies some frequency separation properties. The implementation of such ideas
is often challenging and depends heavily on the problem under study. In fact the heart of the whole analysis is to
choose appropriate functional spaces utilizing the fine structure of the equations.
The main obstructions in the 2D Euler-Poisson system are slow(non-integrable) $\langle t \rangle^{-1}$ dispersion,
quasilinearity and nonlocality caused by
the Riesz transform. Nevertheless we overcome all such difficulties in this paper.
After our work is completed, a similar result requiring at least $30+$ derivatives is obtained in \cite{IP11}.
To put things into
perspective, we review below some related literature as well as some technical developments
on this problem.

The main difficulty in constructing time-global smooth solutions for the Euler-Poisson system
comes from the fact that the Euler-Poisson system is a hyperbolic conservation law with zero dissipation
for which no general theory is available.
The "Euler"-part of the Euler-Poisson system is the well-known compressible Euler equations.
 Indeed in \eqref{eq_EP_1}
if the electric field term $\nabla \phi$ is dropped, one recovers the usual Euler equations
for compressible fluids. In \cite{Si85}, Sideris considered the 3D compressible Euler equation
for a classical polytropic ideal gas with adiabatic index $\gamma>1$. For a class of
initial data which coincide with a constant state outside a ball, he proved that the lifespan
of the corresponding $C^1$ solution must be finite. In \cite{Ra89} Rammaha extended this
result to the 2D case. For the Euler-Poisson system, Guo and Tahvildar-Zadeh \cite{GuoTZ99}
established a "Siderian" blowup result for spherically symmetric initial data.
Recently Chae and Tadmor \cite{CT08} proved finite-time blow-up for $C^1$ solutions of
a class of pressureless attractive Euler-Poisson equations in $\mathbb R^n$, $n\ge 1$.
These negative results showed the abundance of shock waves for large solutions.

The "Poisson"-part of the Euler-Poisson system has a stabilizing effect which makes the whole
analysis of \eqref{eq_EP_1} quite different from the pure compressible Euler equations.
This is best understood in analyzing small irrotational perturbations of the equilibrium state
$n\equiv n_0$, $\v u\equiv 0$. For the 3D compressible Euler equation with irrotational initial data
$(n_{\epsilon}(0),\v u_{\epsilon}(0)) = (\epsilon \rho_0 + n_0, \epsilon \v v_0)$, where
$\rho_0 \in \mathcal S(\mathbb R^3)$,
$\v v_0\in \mathcal S(\mathbb R^3)^3$ are fixed functions  ($\epsilon$ sufficiently small),  Sideris
\cite{Si91} proved that the lifespan of the classical solution $T_\epsilon > \exp( C/\epsilon)$.
For the upper bound it follows from his previous paper \cite{Si85} that $T_\epsilon < \exp(C/\epsilon^2)$.
Sharper results were obtained by Godin \cite{Godin05} in which he showed for radial initial
data as a smooth compact $\epsilon$-perturbation of the constant state, the precise asymptotic of
the lifespan $T_\epsilon$ is exponential in the sense
\begin{align*}
\lim_{\epsilon \to 0+} \epsilon \log T_\epsilon = T^*,
\end{align*}
where $T^*$ is a constant. All these results rely crucially on the observation that after some simple reductions,
the compressible Euler equation in rescaled variables is given by a vectorial nonlinear wave equation with pure
quadratic nonlinearities. The linear part of the wave equation decays at most at the speed $t^{-(d-1)/2}$ which
in 3D is not integrable. Unless the nonlinearity has some additional nice structure such as
the null condition \cite{C86, K86}, one cannot in general expect global existence of small solutions.
On the other hand, the situation for the Euler-Poisson system \eqref{eq_EP_1}
is quite different due to the additional Poisson coupling term. As was already explained before, the
Euler-Poisson system \eqref{eq_EP_1} expressed in rescaled variables is given by
the  quasi-linear Klein-Gordon system \eqref{e927_450abc}
 for which the linear solutions have an enhanced decay of $(1+t)^{-d/2}$.
 This is in sharp contrast with
the pure Euler case for which the decay is only $t^{-(d-1)/2}$. Note that in $d=3$, $(1+t)^{-d/2}=(1+t)^{-3/2}$
which is integrable in $t$. In a seminal paper \cite{Guo98}, by exploiting the crucial decay property of the
Klein-Gordon flow in 3D, Guo \cite{Guo98} modified Shatah's normal form method \cite{Sh85} and
constructed a smooth irrotational global solution to \eqref{eq_EP_1} around the
equilibrium state $(n_0,0)$ for which the perturbations decay at a rate $C_p \cdot (1+t)^{-p}$ for any
$1<p<3/2$ (here $C_p$ denotes a constant depending on the parameter $p$). Note in particular that
the sharp decay $t^{-3/2}$ is marginally missed here due to a technical complication caused by
the nonlocal Riesz operator in the nonlinearity.

Construction of smooth global solutions to \eqref{eq_EP_1} in the two-dimensional case was
open since Guo's work.
The first obstacle comes from slow dispersion
 since the linear solution to the Klein-Gordon system in
$d=2$ decays only at $(1+t)^{-1}$ which is not integrable, in particular making the
strategy of \cite{Guo98} difficult to apply. The other main technical difficulty comes
from the nonlocal nonlinearity in \eqref{e927_450abc} which involves a Riesz-type singular operator.
For general scalar quasi-linear Klein-Gordon equations
in 3D with quadratic type nonlinearities, global small smooth solutions were first constructed independently by
Klainerman \cite{K85} using the invariant vector field method and Shatah \cite{Sh85}
using a normal form method. Even in 3D there are
essential technical difficulties in employing Klainerman's invariant vector field method due to the Riesz
type nonlocal term in \eqref{e927_450abc}.
The Klainerman invariant vector fields consist of infinitesimal generators which commute well with
the linear operator $\partial_{tt} -\Delta +1$. The most problematic part comes from the Lorentz boost
$\Omega_{0j} = t \partial_{x_j} + x_j \partial_t$. While the first part $t \partial_{x_j}$ commutes
naturally with the Riesz operator $R_{ij}=(-\Delta)^{-1} \partial_{x_i} \partial_{x_j}$,
the second part $x_j \partial_t $ interacts rather badly with $R_{ij}$, producing a commutator
which scales as
\begin{align*}
[x_j \partial_t, R_{ij}] \sim \partial_t |\nabla|^{-1}.
\end{align*}
After repeated commutation of these operators one obtains in general terms of the form
$|\nabla|^{-N}$ which makes the low frequency part of the solution out of control.
It is for this reason that in 3D case Guo \cite{Guo98} adopted Shatah's method of normal
form in $L^p$ ($p>1$) setting for which the Riesz term $R_{ij}$ causes no trouble.
We turn now to the 2D Klein-Gordon equations with pure quadratic nonlinearities. In this case, direct
applications of either Klainerman's invariant vector field method or Shatah's normal form method
are not possible since the linear solutions only decay at a speed of $(1+t)^{-1}$ which is not integrable
and makes the quadratic nonlinearity quite resonant. In \cite{ST93}, Simon and Taflin constructed wave
operators for the 2D semilinear Klein-Gordon system with quadratic nonlinearities.
In \cite{Oz96}, Ozawa, Tsutaya and Tsutsumi considered the Cauchy problem and
constructed smooth global solutions by first transforming the quadratic nonlinearity into a cubic one
using Shatah's normal form method and then applying Klainerman's invariant vector field method to obtain
decay of intermediate norms.
Due to the nonlocal complication with the Lorentz boost
which we explained earlier, this approach
seems difficult to apply in the 2D Euler-Poisson system.

As was already mentioned, the purpose of this work is to settle the Cauchy problem for
\eqref{eq_EP_1} in the two-dimensional case. Before we state our main results,
we need to make some further simplifications.
Since $\v v$ is irrotational, we can write $\v v = \nabla \phi_1$ and obtain from \eqref{e927_450}
(here $\langle \nabla \rangle =\sqrt{1-\Delta}$, see \eqref{not_jnab}):
\begin{align} \label{e9927_732a}
\begin{cases}
\partial_t u + \Delta \phi_1 + \nabla \cdot ( u \nabla \phi_1) =0, \\
\partial_t \phi_1 + |\nabla|^{-2} \langle \nabla \rangle^2 u + \frac12 (u^2 + |\nabla \phi_1|^2) =0.
\end{cases}
\end{align}

We can diagonalize the system \eqref{e9927_732a} by introducing the complex scalar function
\begin{align}
h(t)&= \frac{\langle \nabla \rangle} {|\nabla|} u - i |\nabla| \phi_1 \notag \\
& =   \frac{\langle \nabla \rangle} {|\nabla|} u + i \frac{\nabla}{|\nabla|} \cdot \v v .
\label{e92_537a}
\end{align}
Note that since $\v v$ is irrotational, we have
\begin{align}
\v v =- \frac{\nabla}{|\nabla|} \text{Im}(h).  \label{eq_vh}
\end{align}
By \eqref{e927_450}, we have
\begin{align}
h(t)& = e^{it\langle \nabla \rangle} h_0 +\int_0^t e^{i(t-s)\langle \nabla \rangle} \Bigl(
-  \frac{\langle \nabla \rangle \nabla}{|\nabla|} \cdot (u \v v) \notag \\
& \qquad + \frac i2 |\nabla| ( u^2 +|\v v|^2) \Bigr) ds, \label{e92_537b}
\end{align}
where $h_0$ is the initial data given by
\begin{align*}
h_0 =  \frac{\langle \nabla \rangle} {|\nabla|} u_0 + i \frac{\nabla}{|\nabla|} \cdot \v v_0.
\end{align*}
Here $u_0$ is the initial density (perturbation) and $\v v_0$ is the initial velocity.

For $T\ge 0$, $\delta>0$, $N\ge 8$, $N^{\prime}=N-\frac 32$, we introduce
the norms
\begin{align}
\|h\|_{\tilde X_T}:= & \| \langle t \rangle |\nabla|^{\delta}
\langle \nabla \rangle h(t) \|_{L_{t,x}^{\infty}([0,T])}
+\|\langle t \rangle^{1-2\delta} \langle \nabla \rangle  h(t) \|_{L_t^{\infty} L_x^{\frac 1 {\delta}}
([0,T])}  \notag \\
& \qquad\; + \| x (1-\Delta)e^{-it \langle \nabla \rangle} h(t) \|_{L_t^{\infty} L_x^{2+\delta}([0,T])}, \notag
\end{align}
and
\begin{align}
\| h \|_{ X_T} := \|h\|_{\tilde X_T} + \| h(t) \|_{C_t^{0} H^{N^\prime}([0,T])} +
\| \langle t \rangle^{-\delta} h(t) \|_{C_t^{0} H^N([0,T])}.
\notag
\end{align}
Here for simplicity we have suppressed the notational dependence of the $X_T$ norm on $\delta$. We will use the notation
$X_{\infty}$ (resp. $\tilde X_{\infty}$) when the norms are evaluated on the time interval $[0,\infty)$.


Our result is expressed in the following

\begin{thm}[Smooth global solutions for the Cauchy problem] \label{thm_main}
There exists an absolute constant $\delta_*>0$ sufficiently small such that the following hold:

For any $0<\delta<\delta_*$, there exists $\epsilon>0$ sufficiently small such that if the initial
data $h_0$ satisfies $\| e^{it\langle \nabla \rangle} h_0 \|_{X_{\infty}}\le \epsilon$, then
there exists a unique smooth global solution
to the 2D Euler-Poisson system \eqref{e92_537a}--\eqref{e92_537b} satisfying
$\| h\|_{ X_\infty} \le const \cdot \epsilon$. Moreover the solution scatters
in the energy space $H^{N^\prime}$.
\end{thm}

\begin{rem}
A simple inspection of our proof shows that it suffices to take
$\delta_*=\frac 1{500}$. We do not make much effort to lower down the regularity
assumption ($N\ge 8$) on the initial data although the result here is already better
than many existing methods. The main point here is to construct a smooth and global in time
classical solution.
\end{rem}

To prove Theorem \ref{thm_main}, we shall establish an a priori estimate of the form
\begin{align}
\| h\|_{ X_t} \lesssim \|e^{i\tau\langle \nabla \rangle } h_0\|_{ X_{\infty}} +
\| h\|_{ X_t}^2 + \| h\|_{ X_t}^3+ \| h\|_{ X_t}^4,
\label{J22_70a}
\end{align}
where the implied constant depends only on the parameter $\delta$ and $N$. The function can be shown
to be continuous in $t$ (see Step 2 below). By a standard continuity argument, if
$ \|e^{i\tau\langle \nabla \rangle } h_0\|_{ X_{\infty}}$ is sufficiently small, then $\|h\|_{X_t}$ remains bounded
for all $t\ge 0$ which yields global wellposedness easily. Therefore our main work is to show \eqref{J22_70a}.
We sketch its proof in the following steps.

\texttt{Step 1:} Preliminary transformations and normal form.

In this step, we introduce $f(t)=e^{-it\langle \nabla \rangle} h(t)$ and rewrite \eqref{e92_537b}
as
\begin{align}
\hat f(t,\xi) = \widehat{h_0}(\xi) + \int_0^t \int e^{-is \phi_0(\xi,\eta)}
\langle \xi \rangle \frac{\xi}{|\xi|} \widehat{\mathcal R f}(s,\xi-\eta) \widehat{\mathcal Rf}(s,\eta) d\eta ds, \label{J22_49}
\end{align}
where $\mathcal R$ is some Riesz-type operator and
\begin{align*}
\phi_0(\xi,\eta) =\langle \xi \rangle \pm \langle \xi-\eta \rangle \pm \langle \eta \rangle.
\end{align*}

By using the fact that the Klein-Gordon phase $\phi_0(\xi,\eta)$ never vanishes, we perform a normal
form transformation and integrate by parts in the time variable $s$. After some simplifications, we arrived
at an equation of the form
\begin{align*}
\hat f(t,\xi)="\text{initial data}"+"\text{quadratic boundary terms}" + \widehat{f_{\text{cubic}}}(t,\xi),
\end{align*}
where $f_{\text{cubic}}$ is cubic in $h$ and has the form $f_{\text{cubic}}= \mathcal R f_3$ with
\begin{align}
\widehat{f_3}(t,\xi) &= \int_0^t \int e^{-is \phi(\xi,\eta,\sigma)} \frac{\langle \xi \rangle \cdot
\langle \eta \rangle}{\phi_0(\xi,\eta)}
\cdot \frac{\eta}{|\eta|} \widehat{\mathcal Rf}(s,\xi-\eta) & \notag\\
&\qquad \cdot \widehat{\mathcal Rf}(s,\eta-\sigma) \cdot \widehat{\mathcal Rf}(s,\sigma)
d\sigma d\eta ds. \label{J22_50}
\end{align}
Here
\begin{align*}
\phi(\xi,\eta,\sigma)=\langle \xi \rangle \pm \langle \xi-\eta \rangle
\pm \langle \eta-\sigma \rangle \pm \langle \sigma\rangle.
\end{align*}

The estimates of the initial data part and the boundary terms are given in Section 5.

\texttt{Step 2}: Local theory, continuity of the $X$-norm along the flow and $H^{N^{\prime}}$-estimate.

At first we carry out the (standard) $H^N$-energy estimate and obtain an estimate of the form
\begin{align*}
\frac d {dt} \Bigl( \| h(t) \|_{H^N}^2 \Bigr)
& \lesssim ( \| u(t)\|_{\infty} + \| \nabla u(t) \|_{\infty} + \| \nabla \v v(t) \|_{\infty} )
\cdot \| h(t) \|_{H^N}^2.
\end{align*}
The subtle point here is that $\| \v v(t)\|_{\infty}$ does not appear in the energy estimate.

Due to the slow ($1/t$) decay in 2D, we need to have a slight $\langle t \rangle^{\delta}$ growth
of the norm $\|h(t)\|_{H^N}$ in order  to close the estimates. Note that
$u= \frac{|\nabla|}{\langle \nabla \rangle}
\text{Re}(h)$ and $\v v = -\frac {\nabla}{|\nabla|} \text{Im}(h)$, hence
\begin{align*}
\| u(t) \|_{\infty} +\| \nabla u(t) \|_{\infty} + \| \nabla \v v(t) \|_{\infty}
\lesssim \| |\nabla|^{\delta} \langle \nabla \rangle h(t) \|_{\infty}.
\end{align*}

It remains to prove the sharp $1/t$ decay of the $L^{\infty}$-norm
$\| |\nabla|^{\delta} \langle \nabla \rangle h(t) \|_{\infty}$. For this and later estimates,
we need to show the time-continuity of the norm $\| x (1-\Delta) e^{-it\langle \nabla \rangle} h(t) \|_{2+\delta}$.
This is done in Section 4. The main idea there is a bootstrap estimate exploiting the finite speed
propagation property of the Klein-Gordon flow. In the last part of Section 4, we complete the
$H^{N^\prime}$ estimate of $h$. To lower the regularity assumption, we first introduce frequency cut-offs
$\chi_{\ge \langle s \rangle^{\delta_0}}$ and $\chi_{<\langle s \rangle^{\delta_0}}$ in \eqref{J22_49}.
For the high frequency part, we estimate it using energy smoothing (recall $N^{\prime}=N-\frac 32$) and
dispersive decay. For the low frequency piece, we use the normal form and obtain a cubic nonlinearity
localized to low frequencies. The $H^{N^{\prime}}$ estimate is used in controlling some  boundary terms
in Section 5.

\texttt{Step 3:} Reduction to low frequencies and the $(2+\delta)$-trick.

This is an important step in controlling the $X$-norm of $h$. We use a multiscale argument and introduce
the parameter $\delta_0=20\delta$.  We then decompose the cubic nonlinear term $f_{\text{cubic}}=\mathcal R f_3$
(see \eqref{J22_50}) into two pieces:
\begin{align*}
\widehat{f_3}(t,\xi) & = \int_0^t \int e^{-is \phi(\xi,\eta,\sigma)}
\cdot \frac{\langle \xi \rangle \cdot \langle \eta \rangle}{\phi_0(\xi,\eta)}
\cdot \frac{\eta}{|\eta|} \notag \\
&\qquad \cdot (m_{\text{low}}(\xi,\eta,\sigma,s) + m_{high}(\xi,\eta,\sigma,s)) \cdot
\widehat{\mathcal R f}(s,\xi-\eta) \notag \\
& \qquad \cdot \widehat{\mathcal R f}(s,\eta-\sigma) \cdot \widehat{\mathcal R f}(s,\sigma) d\sigma d\eta ds \notag \\
& =: \widehat{f_3^{(1)}} + \widehat{f_3^{(2)}},
\end{align*}
where
\begin{align*}
m_{\text{low}}(\xi,\eta,\sigma,s)&= \chi_{|\xi-\eta|\le \langle s \rangle^{\delta_0}}
\cdot \chi_{|\eta-\sigma| \le \langle s \rangle^{\delta_0}} \cdot \chi_{|\sigma| \le \langle s \rangle^{\delta_0}}, \notag\\
m_{high}(\xi,\eta,\sigma,s) &= 1- m_{\text{low}}(\xi,\eta,\sigma,s).
\end{align*}

We first show that the high frequency piece has good decay properties, namely
\begin{align}
\| e^{i\tau \langle \nabla \rangle} \mathcal R f_3^{(2)} (\tau)
\|_{\tilde X_t} \lesssim \| h \|_{X_t}^3. \label{J22_a70}
\end{align}

Thanks to the  frequency cut-off $m_{high}$, we must have either $|\xi-\eta|\gtrsim \langle s \rangle^{\delta_0}$,
$|\eta-\sigma|\gtrsim \langle s \rangle^{\delta_0}$, or $|\sigma| \gtrsim \langle s \rangle^{\delta_0}$.  This
frequency localization coupled with the energy norm and dispersive effects then produce strong decay estimates
for the $\tilde X_t$-norm of $e^{i\tau \langle \nabla \rangle} \mathcal R f_3^{(2)} (\tau)$.  By a delicate analysis
we are able to prove \eqref{J22_a70} under the weak assumption that $N\ge 8$. We emphasize that this is the main place
where the high derivative assumption is needed.

To control the $X$-norm of the low frequency piece, we must estimate several quantities including $\| |\nabla|^{\delta} \langle \nabla \rangle
e^{i\tau \langle \nabla \rangle} \mathcal R f_3^{(1)}(\tau) \|_{\infty}$,
$\|\langle \nabla \rangle e^{i\tau \langle \nabla \rangle} \mathcal R f_3^{(1)} (\tau) \|_{\frac 1 {\delta}}$,
 and $\| x (1-\Delta) \mathcal R f_3^{(1)} (\tau) \|_{2+\delta}$.  To do this
we show that all the above norms can be bounded by the $L^{2-}$ norm of some weighted integral produced from $f_3$.
More precisely, we show that
\begin{align}
\| e^{i\tau \langle \nabla \rangle} \mathcal R f_3^{(1)} (\tau) \|_{\tilde X_t}
\lesssim \| f_{\text{low}}(\tau) \|_{L_{\tau}^{\infty} L_x^{2-\frac{\delta}{100}} ([0,t])}
+ \| h\|_{X_t}^3, \label{J22_a71}
\end{align}
where
\begin{align}
f_{\text{low}}(t) &= \int_0^t \int e^{-is \phi} \cdot \frac{ s \partial_{\xi} \phi} {\phi_0(\xi,\eta)}
\cdot \langle \xi \rangle^{4+2\delta} \cdot \langle \eta \rangle
\cdot \frac{\eta}{|\eta|} \cdot m_{\text{low}}(\xi,\eta,\sigma,s)
 \notag \\
& \qquad  \widehat{\mathcal R f} (s,\xi-\eta) \cdot \widehat{\mathcal R f}(s,\eta-\sigma)
\cdot \widehat{\mathcal R f}(s,\sigma) d\sigma d\eta ds. \label{J22_a72}
\end{align}

We stress that the choice of the norm $\| x (1-\Delta) e^{-it\langle \nabla \rangle} h(t) \|_{2+\delta}$
($2+\delta$ trick) comes from this part of analysis. In particular, when bounding the quantity
$\| x \mathcal R f_3^{(1)}(\tau) \|_{2+\delta}$, we have to control the commutator
\begin{align*}
\|  [x,\mathcal R] f_3^{(1)} (\tau) \|_{2+\delta} \sim \| |\nabla|^{-1} f_3^{(1)}(\tau) \|_{2+\delta}.
\end{align*}
This latter quantity can be bounded by $\| f_{\text{low}}(\tau) \|_{2-\frac{\delta}{100}}$ thanks to the assumption
$\delta>0$.

\texttt{Step 4:} Control of the low frequency piece. The goal is to prove the bound
\begin{align}
\| f_{\text{low}}(\tau) \|_{L_{\tau}^{\infty} L_x^{2-\frac{\delta}{100}}([0,t])}
\lesssim \| h \|_{X_t}^3 + \| h\|_{X_t}^4. \label{J24_1}
\end{align}

The main difficulty in establishing this bound is the slow ($1/\langle s \rangle$) decay in \eqref{J22_a72}.
To see this point, we can perform a rough estimate as follows: the integral in \eqref{J22_a72} can be
written as (see \eqref{T_multiply})
\begin{align}
f_{\text{low}}(t) = \int_0^t s e^{-is\langle \nabla \rangle}
T_{\frac{\partial_{\xi} \phi}{\phi_0(\xi,\eta)} \langle \xi \rangle^{4+\delta}}
\Bigl( P_{\lesssim \langle s \rangle^{\delta_0}} \mathcal R h, \mathcal R
\bigl( P_{\lesssim \langle s \rangle^{\delta_0}} \mathcal R h \cdot P_{\lesssim \langle s \rangle^{\delta_0}}
\mathcal R h\bigr)\Bigr) ds. \notag
\end{align}

Ignoring the linear flow ($e^{-is\langle \nabla \rangle}$) and issues with the multipliers for the moment,
one has
\begin{align}
\| f_{\text{low}}(t) \|_{2-\frac{\delta}{100}}
& \lesssim \int_0^t \langle s \rangle \cdot \| h(s)\|_{2+} \| h(s)\|^2_{\infty-} ds \notag \\
& \lesssim \int_0^t \langle s \rangle^{1-2(1-O(\delta))} ds \cdot \| h\|_{X_t}^3 \notag \\
& \lesssim \int_0^t \langle s \rangle^{-1+O(\delta)} ds \cdot \| h\|_{X_t}^3. \label{J24_2}
\end{align}

Clearly this shows that the decay in $s$ is not enough to make the above time integral converge. To resolve
this difficulty we have to appeal to the specific form of the phase function $\phi=\phi(\xi,\eta,\sigma)$
in \eqref{J22_a72} and exploit some subtle cancelations in various cases. The main goal is to obtain
a strong decay $\langle s \rangle^{-1-\epsilon+O(\delta)}$ with $\epsilon \gg O(\delta)$ in \eqref{J24_2}.
For this we shall use some new ideas and devices which is discussed below.

$\bullet$ \textbf{Hidden derivatives}.  The first observation is that for phases of the form
$\phi(\xi,\eta,\sigma)= \langle \xi \rangle - \langle \xi -\eta\rangle \pm \langle \eta-\sigma \rangle
\pm \langle \sigma \rangle$, we have
\begin{align}
\partial_{\xi} \phi = \frac{\xi}{\langle \xi \rangle} - \frac{\xi-\eta}{\langle \xi -\eta \rangle}
= Q(\xi,\eta) \eta, \label{J26_tmp9200a}
\end{align}
where $Q$ is smooth in $(\xi,\eta)$. For $|\eta| \lesssim \langle s \rangle^{-C\delta_0}$, the factor
$\eta$ in \eqref{J26_tmp9200a} corresponds to a derivative and produces an extra decay $\langle s \rangle^{-C\delta_0}$
which will be enough to make the time integral in \eqref{J24_2} converge. Similarly for the phases
$\phi(\xi,\eta,\sigma)= \langle \xi \rangle +\langle \xi -\eta\rangle \pm \langle \eta-\sigma \rangle
\pm \langle \sigma \rangle$, the factor $\partial_{\xi} \phi$ will also produce an extra decay $\langle s \rangle^{-C\delta_0}$
in the low frequency regime $|\xi| \lesssim \langle s \rangle^{-C\delta_0}$, $|\eta| \lesssim \langle s \rangle^{-C\delta_0}$.

$\bullet$ \textbf{Normal form and the $\eta/|\eta|$ problem}. Consider phases of the form
$\phi(\xi,\eta,\sigma)= \langle \xi \rangle + \langle \xi-\eta \rangle + \langle \eta -\sigma \rangle
\pm \langle \sigma \rangle$. They have the property
\begin{align*}
\phi(\xi,\sigma,\sigma) \gtrsim \frac 1 { \langle \xi \rangle + \langle \xi-\eta\rangle + \langle \eta-\sigma\rangle
+ \langle \sigma \rangle}.
\end{align*}
By using this fact we can integrate by parts in the variable $s$ in \eqref{J22_a72}. Dropping boundary terms, we arrive
at an expression of the form
\begin{align*}
f_{\text{low}}(t) \sim & \int_0^t \int e^{-is \phi}
\cdot \frac{s \partial_{\xi} \phi}{ \phi_0(\xi,\eta)}
\cdot \frac{\langle \xi \rangle^{4+2\delta}}{\phi(\xi,\eta,\sigma)}
\cdot \langle \eta \rangle \cdot \frac{\eta}{|\eta|} \notag \\
& \qquad m_{\text{low}}(\xi,\eta,\sigma,s)
\cdot \partial_s (\widehat{\mathcal R f}(s,\xi-\eta)) \cdot \widehat{\mathcal R f}(s,\eta-\sigma)
\cdot \widehat{\mathcal R f}(s,\sigma) d\sigma d\eta ds \notag \\
& \qquad + \text{similar terms}.
\end{align*}
Note that by \eqref{J22_49} $\partial_{s}(\widehat{\mathcal R f}) \sim O((\mathcal R f)^2)$ which is quadratic in $f$. By this fact
one may hope to get $\langle s \rangle^{-2+O(\delta)}$ decay in \eqref{J24_2}. However this argument is only
correct in the regime $|\eta| \gtrsim \langle s \rangle^{-\delta_0}$. In the low frequency regime
$|\eta| \lesssim \langle s \rangle^{-\delta_0}$, the symbol $\frac 1 {\phi(\xi,\eta,\sigma)} \cdot \frac{\eta}
{|\eta|}$ is no longer smooth and one has to deal with it separately.

$\bullet$ \textbf{Partial normal form transform}.   To solve the $\eta/|\eta|$ problem, we will integrate by parts
using only \emph{part of the phase} to which we refer as \emph{partial normal form transform}. Consider for example
the phase $\phi(\xi,\eta,\sigma)= \langle \xi \rangle + \langle \xi-\eta \rangle + \langle \eta -\sigma\rangle
- \langle \sigma \rangle$. We use the identity
\begin{align*}
e^{-is (\langle \xi \rangle + \langle \xi-\eta \rangle)} = \frac{i}{\langle \xi \rangle + \langle \xi-\eta \rangle}
\frac {\partial} {\partial s} \Bigl( e^{-is(\langle \xi \rangle + \langle \xi-\eta \rangle)} \Bigr)
\end{align*}
to do integration by parts in $s$. When the derivative $\partial_s$ hits the term
$e^{-is(\langle \eta -\sigma\rangle -\langle \sigma\rangle)}$,
we obtain a factor $\langle \eta -\sigma \rangle -\langle \sigma \rangle \approx Q(\eta,\sigma) \eta$ which gains
extra decay $\langle s \rangle^{-C\delta_0}$. When the derivative hits the other terms we obtain a quintic nonlinearity.
Note that in this case all symbols are separable in the sense that they can be written as
\begin{align*}
\tilde m(\xi,\eta,\sigma) = a(\xi,\eta) b(\eta,\sigma)
\end{align*}
for some functions $a$ and $b$. The Riesz factor $\eta/|\eta|$ then causes no problem since we can deal with
the multipliers corresponding to $(\xi,\eta)$ and $(\eta,\sigma)$ separately.

$\bullet$ \textbf{Transformation of phase derivatives and frequency separation}. Consider for example the phase
$\phi(\xi,\eta,\sigma)= \langle \xi \rangle +
\langle \xi-\eta \rangle -\langle \eta-\sigma \rangle -\langle \sigma \rangle$. By Lemma \ref{lem_phase},
we can write for some smooth $Q_1$, $Q_2$,
\begin{align*}
\partial_{\xi} \phi= Q_1(\xi,\eta,\sigma) \partial_{\eta} \phi + Q_2(\xi,\eta,\sigma) \partial_{\sigma} \phi
\end{align*}
and
\begin{align*}
is e^{is\phi} \partial_{\xi} \phi= Q_1(\xi,\eta,\sigma) \partial_{\eta} (e^{is\phi})
+ Q_2(\xi,\eta,\sigma) \partial_{\sigma} (e^{is\phi}).
\end{align*}

Consequently one can integrate by parts in $\eta$ and $\sigma$ respectively which boosts the decay
in $s$ to $\langle s \rangle^{-2+O(\delta)}$.  Note that there is still a subtle issue when we perform
the above argument and integrate by parts in $\eta$.  Namely the $\partial_{\eta}$ derivative may
hit the Riesz term $\eta/|\eta|$ and produces an operator $|\nabla|^{-1}$ which is hard to control
for $|\eta| \lesssim \langle s \rangle^{-\delta_0}$. To solve this problem we have to do a multi-scale
partition of the $(\xi,\eta,\sigma)$-phase space and discuss several subcases (cf. Subcase 3a to
3d in Case 3). In particular for the low frequency regime $|\eta| \lesssim \langle s \rangle^{-\delta_0}$,
we have to discuss several situations and use the hidden derivatives, partial normal form together with several other tricks
to treat these cases (see in particular Subcase 3a to 3c in Case 3). This part of the analysis is quite
involved and uses the nonlinear structure in an essential way.

The above ideas together with some further delicate analysis completes the proof of Theorem \ref{thm_main}.
The rest of this paper is organized as follows. In Section 2 we gather some preliminary linear estimates.
In Section 3 we perform some preliminary transformations and decompose the solution into three parts:
the initial data, the boundary term $g$ and the cubic interaction term $f_{\text{cubic}}$.
In Section 4 we establish local theory, prove continuity of the $X$-norm along the flow and give the
$H^{N^{\prime}}$ estimate
of $h$.
Section 5 is devoted to the estimate of the boundary terms $g$ arising from the normal form transformation.
In Section 6 we control the
high frequency part of cubic interactions. In Section 7 we control the low frequency part of
cubic interactions which is the most delicate part of our analysis.
In Section 8 we complete the proof of our main theorem.

\section*{Acknowledgements}
D. Li is supported in part by NSF Grant-0908032 and NSF DMS-1128155. Y. Wu is partially supported
by NSF of China No. 11101042 and China Postdoctoral Science Foundation No. 20110490018. D. Li is also
supported in part by a Nserc discovery grant. We would like to thank
Prof. Weinan E,  Yan Guo,  Tom Spencer and  Zhouping Xin for their interest in this
work and some helpful remarks. The first author would like to thank Prof. Changxing Miao and the hospitality of
Institute for Applied Physics and Computational Mathematics where the work is conducted.


\section{Preliminaries}

\subsection{Some notations}
We write $X \lesssim Y$ or $Y \gtrsim X$ to indicate $X \leq CY$ for some constant $C>0$.  We use $O(Y)$ to denote any quantity $X$
such that $|X| \lesssim Y$.  We use the notation $X \sim Y$ whenever $X \lesssim Y \lesssim X$.
If $C$ depends upon some additional parameters, we will indicate this with
subscripts; for example, $X \lesssim_u Y$ denotes the assertion that $X \leq C_u Y$ for some $C_u$ depending on $u$. Sometimes
when the context is clear, we will suppress the dependence on $u$ and write $X \lesssim_u Y$ as $X \lesssim Y$.
We will write $C=C(Y_1, \cdots, Y_n)$ to stress that the constant $C$ depends on quantities $Y_1$, $\cdots$, $Y_n$.
We denote by $X\pm$ any quantity of the form $X\pm \epsilon$ for any $\epsilon>0$.

We use the `Japanese bracket' convention $\langle x \rangle := (1 +|x|^2)^{1/2}$. It is convenient to use
the notation $\langle \nabla \rangle =\sqrt{1-\Delta}$ to denote
\begin{align}
 \widehat{\langle \nabla \rangle f} (\xi) = (1+|\xi|^2)^{\frac 12} \hat f(\xi).  \label{not_jnab}
\end{align}
In a similar manner one can define $\langle \nabla \rangle^s$ and $|\nabla|^s$ for any $s\in \mathbb R$.

For any function $f$ on $\mathbb R^d$, we shall use the notation $\| f\|_{L^p}$ or $\| f \|_p$ to denote the
usual Lebesgue norm for $1\le p \le \infty$.

We write $L^q_t L^r_{x}$ to denote the Banach space with norm
$$ \| u \|_{L^q_t L^r_x(\R \times \R^d)} :=
\Bigl(\int_\R \Bigl(\int_{\R^d} |u(t,x)|^r\ dx\Bigr)^{q/r}\ dt\Bigr)^{1/q},$$
with the usual modifications when $q$ or $r$ are equal to infinity,
or when the domain $\R \times \R^d$ is replaced by a smaller
region of spacetime such as $I \times \R^d$.  When $q=r$ we abbreviate $L^q_t L^q_x$ as $L^q_{t,x}$.

We will use $\phi\in C^\infty(\R^d)$ to be a
radial bump function supported in the ball $\{ x \in \R^d: |x| \leq
\frac{25} {24} \}$ and equal to one on the ball $\{ x \in \R^d: |x|
\leq 1 \}$.  For any constant $C>0$, we denote $\phi_{\le C}(x):=
\phi \bigl( \tfrac{x}{C}\bigr)$ and $\phi_{> C}:=1-\phi_{\le C}$. We also
denote $\chi_{|x|>C} = \chi_{>C} =\phi_{>C}$ (resp. $\chi_{|x|\le C}$) sometimes.

We will often need the Fourier multiplier operators defined by the following:
\begin{align}
\mathcal F \Bigl({T_{m(\xi,\eta)}(f,g)} \Bigr)(\xi)& = \int m(\xi,\eta) \hat f(\xi-\eta) \hat g(\eta) d\eta, \notag \\
\mathcal F\Bigl( {T_{m(\xi,\eta,\sigma)}(f,g,h)} \Bigr)(\xi) &= \int
m(\xi,\eta,\sigma) \hat f(\xi-\eta) \hat g(\eta-\sigma) \hat h(\sigma) d\eta d\sigma. \label{T_multiply}
\end{align}
Similarly one can define $T_{m}(f_1,\cdots,f_n)$ for functions $f_1,\cdots,f_n$ and a general symbol
$m=m(\xi,\eta_1,\cdots, \eta_{n-1})$.

\subsection{Basic harmonic analysis}\label{ss:basic}

For each number $N > 0$, we define the Fourier multipliers
\begin{align*}
\widehat{P_{\leq N} f}(\xi) &:= \phi_{\leq N}(\xi) \hat f(\xi)\\
\widehat{P_{> N} f}(\xi) &:= \phi_{> N}(\xi) \hat f(\xi)\\
\widehat{P_N f}(\xi) &:= (\phi_{\leq N} - \phi_{\leq N/2})(\xi) \hat
f(\xi)
\end{align*}
and similarly $P_{<N}$ and $P_{\geq N}$.  We also define
$$ P_{M < \cdot \leq N} := P_{\leq N} - P_{\leq M} = \sum_{M < N' \leq N} P_{N'}$$
whenever $M < N$.  We will usually use these multipliers when $M$ and $N$ are \emph{dyadic numbers} (that is, of the form $2^n$
for some integer $n$); in particular, all summations over $N$ or $M$ are understood to be over dyadic numbers.  Nevertheless, it
will occasionally be convenient to allow $M$ and $N$ to not be a power of $2$.  As $P_N$ is not truly a projection, $P_N^2\neq P_N$,
we will occasionally need to use fattened Littlewood-Paley operators:
\begin{equation}\label{PMtilde}
\tilde P_N := P_{N/2} + P_N +P_{2N}.
\end{equation}
These obey $P_N \tilde P_N = \tilde P_N P_N= P_N$.

Like all Fourier multipliers, the Littlewood-Paley operators commute with the propagator $e^{it\Delta}$, as well as with
differential operators such as $i\partial_t + \Delta$. We will use basic properties of these operators many times,
including

\begin{lem}[Bernstein estimates]\label{Bernstein}
 For $1 \leq p \leq q \leq \infty$,
\begin{align*}
\bigl\| |\nabla|^{\pm s} P_M f\bigr\|_{L^p_x(\R^d)} &\sim M^{\pm s} \| P_M f \|_{L^p_x(\R^d)},\\
\|P_{\leq M} f\|_{L^q_x(\R^d)} &\lesssim M^{\frac{d}{p}-\frac{d}{q}} \|P_{\leq M} f\|_{L^p_x(\R^d)},\\
\|P_M f\|_{L^q_x(\R^d)} &\lesssim M^{\frac{d}{p}-\frac{d}{q}} \| P_M f\|_{L^p_x(\R^d)}.
\end{align*}
\end{lem}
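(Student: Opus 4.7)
The plan is to realize each Littlewood-Paley operator as convolution with a Schwartz kernel at a dyadic scale, and then derive everything from Young's convolution inequality together with elementary dilation. Concretely, $P_{\leq M}$ is the operator of convolution with $K_M(x) := M^d \check{\phi}(Mx)$, and $P_M$ is convolution with $K_M - K_{M/2}$; both are Schwartz functions, and a change of variable gives
\begin{align*}
\|K_M\|_{L^r_x(\R^d)} = M^{d(1 - 1/r)} \|\check{\phi}\|_{L^r_x(\R^d)}.
\end{align*}
I will also freely use the identities $\tilde P_{\leq M} P_{\leq M} = P_{\leq M}$ and $\tilde P_M P_M = P_M$ (the latter recorded in \eqref{PMtilde}), which allow us to insert a fattened cutoff for free.

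For the two $L^p \to L^q$ Bernstein gains (second and third displays), I would write $P_{\leq M} f = \tilde K_{\leq M} * P_{\leq M} f$, where $\tilde K_{\leq M}$ is the convolution kernel of $\tilde P_{\leq M}$, and apply Young's inequality with exponents satisfying $1 + 1/q = 1/r + 1/p$:
\begin{align*}
\|P_{\leq M} f\|_{L^q_x} \leq \|\tilde K_{\leq M}\|_{L^r_x} \|P_{\leq M} f\|_{L^p_x} \lesssim M^{d/r'} \|P_{\leq M} f\|_{L^p_x},
\end{align*}
and the Young relation gives $d/r' = d(1 - 1/r) = d/p - d/q$. The third inequality is identical with $\tilde P_M$ in place of $\tilde P_{\leq M}$.

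For the first (equivalence) statement, I would exploit the annular support of $P_M$. Define $m_M(\xi) := |\xi|^{\pm s} \psi_M(\xi)$, where $\psi_M$ is a smooth cutoff to the fattened annulus $\{M/4 \leq |\xi| \leq 4M\}$ that equals $1$ on the frequency support of $P_M$. Then $|\nabla|^{\pm s} P_M f = \check{m}_M * P_M f$, and writing $m_M(\xi) = M^{\pm s} m_1(\xi/M)$ for a fixed Schwartz function $m_1$, the kernel $\check{m}_M$ is an $M$-dilate of $\check{m}_1$ whose $L^1$-norm equals $M^{\pm s} \|\check{m}_1\|_{L^1}$. Young's inequality then yields
\begin{align*}
\bigl\||\nabla|^{\pm s} P_M f\bigr\|_{L^p_x} \lesssim M^{\pm s} \|P_M f\|_{L^p_x}.
\end{align*}
The reverse inequality follows by applying the same argument to the symbol $|\xi|^{\mp s} \psi_M(\xi)$, using $P_M f = \bigl(|\xi|^{\mp s} \psi_M\bigr)^\vee * \bigl(|\nabla|^{\pm s} P_M f\bigr)$.

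The only genuine subtlety, and what I would identify as the main obstacle, is the $|\xi|^{-s}$ singularity at the origin in the first equivalence. This is precisely why the statement is phrased with $P_M$ (annular support) rather than $P_{\leq M}$: multiplying $|\xi|^{-s}$ against the fattened annular cutoff $\psi_M$ produces a smooth, compactly supported symbol at scale $M$, so the scaling bookkeeping goes through cleanly. Everything else is standard harmonic analysis.
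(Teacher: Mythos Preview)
Your argument is correct and is the standard proof of Bernstein's inequalities. The paper itself does not supply a proof of this lemma; it is stated as a basic harmonic-analysis fact and used without further justification. So there is nothing to compare against: your Young's-inequality-plus-dilation argument is exactly the textbook route, and your remark about why the first equivalence requires the annular projector $P_M$ rather than $P_{\leq M}$ is the right point to flag.
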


We shall use the following lemma several times which allows us to commute the $L^p$ estimates
with the linear flow $e^{it \langle \nabla \rangle}$. Roughly speaking it says that for $t\gtrsim 1$,
\begin{align*}
\|P_{<t^{C}} e^{it\langle \nabla \rangle} f \|_{p} \lesssim t^{0+} \|f\|_{p}, \quad p=2+ \text{ or } p=2-.
\end{align*}
\begin{lem} \label{lem_rbj_3}
For any $1\le p \le \infty$, $t\ge 0$ and dyadic $M>0$, we have
\begin{align}
\| e^{it\langle \nabla \rangle } P_{<M} g \|_p \lesssim \langle Mt \rangle^{|1-\frac 2p|} \| g\|_p.
\label{rbj_3e1}
\end{align}
Also for any $1\le p \le \infty$, $t\ge 0$,  $s>|1-\frac 2p|$, we have
\begin{align}
\| e^{it \jnab} g \|_p \lesssim \langle t \rangle^{|1-\frac 2p|} \| \jnab^s g \|_p.
\label{rbj_3e2}
\end{align}
In particular for any $0 \le \epsilon <1$, we have
\begin{align}
&\| e^{it \jnab} g \|_{2+\epsilon} \lesssim_{\epsilon} \langle t \rangle^{\frac{\epsilon}{2+\epsilon}}
\| \jnab^{\frac{\epsilon}2} g\|_{2+\epsilon}, \notag \\
&\| e^{it \jnab} g \|_{2-\epsilon} \lesssim_{\epsilon} \langle t \rangle^{\frac{\epsilon}{2-\epsilon}}
\| \jnab^{\epsilon} g \|_{2-\epsilon}.  \label{cor_rbj_3a}
\end{align}
\end{lem}
\begin{proof}
We first prove \eqref{rbj_3e1}.
The idea is to use interpolation between $p=1$, $p=2$ and $p=\infty$. We consider only the case
$p=\infty$. The other
case $p=1$ is similar. To establish the inequality it suffices to bound the
$L_x^1$ norm of the kernel $e^{it\langle \nabla \rangle} P_{<M}$.

Note that $e^{it\langle \nabla \rangle} P_{<M} f = K*f$,
where
\begin{align*}
\hat K(\xi) = e^{it \langle \xi \rangle} \phi(\frac{\xi} M).
\end{align*}
Observe $\| K\|_{L_x^2} \lesssim M$ and for $t>0$,
\begin{align*}
\| |x|^2 K(x) \|_{L_x^2}= \| \partial_{\xi}^2 ( \hat K(\xi) ) \|_{L_{\xi}^2} \lesssim t^2 M + t + \frac 1 M.
\end{align*}
Then
\begin{align*}
\| K\|_{L_x^1} \lesssim \| K\|_{L_x^2}^{\frac 12} \| |x|^2 K \|_{L_x^2}^{\frac 12} \lesssim \langle Mt \rangle.
\end{align*}
The desired inequality then follows from Young's inequality.

Next we show \eqref{rbj_3e2}. By \eqref{rbj_3e1} and the inequality $\langle Mt \rangle
\le \langle M \rangle \langle t \rangle$, we have
\begin{align*}
\| e^{it \langle \nabla \rangle} g \|_p
& \lesssim \| e^{it\langle \nabla \rangle} P_{<1} g \|_p + \sum_{M>1}
\| e^{it \langle \nabla \rangle} P_M g \|_p \notag \\
& \lesssim \langle t \rangle^{|1-\frac 2p|} \| g\|_p
+\sum_{M>1} M^{|1-\frac 2p|} \langle t \rangle^{|1-\frac 2p|}
\| P_M g \|_p \notag \\
& \lesssim \langle t \rangle^{|1-\frac 2p|} \| \langle \nabla \rangle^{s} g \|_p.
\end{align*}

\end{proof}

\begin{lem} \label{lem_rbj_1}
Suppose $m=m(\xi,\eta) \in C^3(\mathbb R^2 \times \mathbb R^2)$ satisfies
\begin{align}
|m| + | \partial_{\xi}^3 m| + | \partial_{\eta}^3 m | \in L^2_{\xi,\eta}(\mathbb R^2 \times
\mathbb R^2). \label{rbj_1e1}
\end{align}
Then
\begin{align}
\| T_m(f,g)\|_r \lesssim \| f\|_{p_1} \| g \|_{p_2}, \label{rbj_1e2}
\end{align}
for any $\frac 1 r = \frac 1 {p_1} + \frac 1 {p_2}$, $1\le r, p_1,p_2 \le \infty$.
\end{lem}

\begin{proof}[Proof of Lemma \ref{lem_rbj_1}]
Let
\begin{align*}
K(x,y) = \frac {1} {(2\pi)^4} \int m(\xi,\eta) e^{i (x\cdot \xi + y \cdot \eta)} d\xi d\eta.
\end{align*}

By \eqref{rbj_1e1}, easy to check that
\begin{align*}
\| K\|_{L_{x,y}^1(\mathbb R^2 \times \mathbb R^2)}
& \lesssim \| (1+|x|^3 + |y|^3) K(x,y) \|_{L_{xy}^2(\mathbb R^2 \times \mathbb R^2) } \notag \\
& \lesssim \| m \|_{L^2_{\xi, \eta}(\mathbb R^2 \times \mathbb R^2)}
+ \| \partial_{\xi}^3 m \|_{L_{\xi,\eta}^2 (\mathbb R^2 \times \mathbb R^2)}
+ \| \partial_{\eta}^3 m \|_{L_{\xi,\eta}^2 (\mathbb R^2 \times \mathbb R^2)}
<\infty.
\end{align*}

Define
\begin{align*}
F(x,y) = \frac 1 {(2\pi)^4} \int m(\xi,\eta)
\hat f(\xi-\eta) \hat g(\eta) e^{i (x \cdot \xi + y \cdot \eta)} d\xi  d\eta.
\end{align*}

By Fourier transform,
\begin{align*}
F(x,y) = \int K(x-x^{\prime}, y -y^{\prime}) h(x^{\prime}, y^{\prime}) dx^{\prime} dy^{\prime},
\end{align*}
where
\begin{align*}
h(x^{\prime}, y^{\prime}) & = \frac 1 {(2\pi)^2} \int \hat f (\xi-\eta) \hat g(\eta) e^{i(x^{\prime} \cdot \xi
+ y^{\prime} \cdot \eta)} d\xi d\eta \notag \\
& = f(x^{\prime}) g(x^{\prime} +y^{\prime}).
\end{align*}

By Young's inequality and H\"older, we then have
\begin{align*}
\| (T_{m}(f,g))(x) \|_{L_x^r} & = \| F(x,0) \|_{L_x^r} \notag \\
& \le \int \| \int  K(x-x^{\prime}, y-y^{\prime}) f(x^{\prime}) g(x^{\prime}+y^{\prime})
d x^{\prime}\|_{L_x^r } dy^{\prime} \notag \\
& \le \int \| K(\cdot, y-y^{\prime} ) \|_{L_x^1} \| f\|_{L_x^{p_1}} \| g \|_{L_x^{p_2}} d y^{\prime} \notag \\
&= \| K\|_{L_{x,y}^1} \| f \|_{p_1} \| g\|_{p_2}.
\end{align*}

\end{proof}

By a similar proof we have
\begin{cor} \label{cor_rbj_1}
Suppose $m=m(\xi,\eta,\sigma) \in C^4 (\mathbb R^2 \times \mathbb R^2 \times \mathbb R^2)$ satisfies
\begin{align}
\| m \|_{L_{\xi,\eta,\sigma}^2} + \| \partial_{\xi}^4  m \|_{L_{\xi,\eta,\sigma}^2}
+ \| \partial_{\eta}^4  m \|_{L_{\xi,\eta,\sigma}^2}
+ \| \partial_{\sigma}^4  m \|_{L_{\xi,\eta,\sigma}^2} \le A <\infty,
\label{cond_three}
\end{align}
then
\begin{align*}
\| T_{m}(f,g,h) \|_r \le C \cdot A \cdot \| f \|_{p_1} \cdot \| g \|_{p_2} \cdot \| h \|_{p_3},
\end{align*}
for any $\frac 1 r = \frac 1 {p_1} + \frac 1 {p_2} + \frac 1{p_3}$, $1\le r, p_1,p_2,p_3\le \infty$. Here
$C>0$ is an absolute constant.
\end{cor}

We shall need to use the following simple  Sobolev embedding lemma.

\begin{lem} \label{lem_J15_2}
Let the numbers $(r,p)$ satisfy $2<r<\infty$, $r>p$, $p \ge (\frac 12 +\frac 1r)^{-1}$. Then
for any smooth $f$ on $\mathbb R^2$,  we have
\begin{align}
\| |\nabla|^{-1} f \|_{r} \lesssim \| \langle x \rangle f \|_p. \label{J15_2_tmp020_e50a}
\end{align}
In particular, for any $2 \le p<r<\infty$, we have
\begin{align*}
\| |\nabla|^{-1} f \|_r \lesssim \| \langle x \rangle f \|_p.
\end{align*}
\end{lem}
\begin{proof}[Proof of Lemma \ref{lem_J15_2}]
We only need to prove \eqref{J15_2_tmp020_e50a}. By Sobolev embedding and H\"older, we have
\begin{align*}
\| |\nabla|^{-1} f \|_r &\lesssim \| f \|_{ (\frac 12 + \frac 1r )^{-1}}  \notag \\
& \lesssim \| \langle x \rangle f \|_{p} \cdot \| \langle x \rangle^{-1} \|_{
\bigl(\frac 12 + \frac 1r -\frac 1p\bigr)^{-1}} \notag \\
& \lesssim \| \langle x \rangle f \|_p.
\end{align*}

\end{proof}

\begin{lem}[Bounds on the phase function] \label{lem_rbj_5}
Let $\psi(x,y) = \frac 1 {\langle x \rangle + \langle y \rangle - \langle x + y \rangle}$
for $x,y \in \mathbb R^2$. Then
\begin{align}
|\partial_x^{\alpha} \partial_y^{\beta}
\psi(x,y) | \lesssim_{\alpha,\beta}
\min\{ \langle x \rangle, \langle y \rangle, \langle x +y \rangle \},
\qquad\forall\, x,y \in \mathbb R^2. \label{rbj_5e0}
\end{align}
\end{lem}

\begin{proof}
Write
\begin{align}
\psi(x,y) & = \frac{\langle x \rangle + \langle y \rangle + \langle x+y \rangle}
{(\langle x \rangle + \langle y \rangle)^2- (\langle x +y\rangle)^2} \notag \\
& = \frac{\langle x \rangle + \langle y \rangle + \langle x+y \rangle}
{1+2(\langle x \rangle \langle y \rangle -x\cdot y )} \notag \\
&=: \frac{\langle x \rangle + \langle y \rangle + \langle x +y \rangle} B.
\label{rbj_5e1}
\end{align}

We first show that
\begin{align}
| \partial_x^{\alpha} \partial_y^{\beta} ( \frac 1 B) | \lesssim_{\alpha,\beta}
\frac 1 B. \quad \label{rbj_5e2}
\end{align}

We begin with the estimate
\begin{align}
\frac{|\partial_x B|}{B} \lesssim 1. \label{rbj_5e3}
\end{align}

This is equivalent to
\begin{align}
| \frac x {\langle x \rangle} \langle y \rangle -y | \lesssim 1 +
 (\langle x \rangle \langle y \rangle - x \cdot y). \label{rbj_5e4}
\end{align}

Denote $\theta = \frac{x\cdot y}{|x||y|}$. It is obvious that \eqref{rbj_5e2} holds
for $-1\le \theta \le 0$. Therefore we only need to consider the case
$0<\theta \le 1$. Taking the square on both sides of \eqref{rbj_5e4}, we see that it suffices
to prove for some $0<\epsilon<1$ the inequality
\begin{align}
\frac{|x|^2}{\langle x \rangle^2} \langle y \rangle^2 +|y|^2
-2 \frac{\langle y \rangle}{\langle x \rangle} |x||y| \theta
\le \frac 1 {\epsilon} \Bigl( 1+ ( \langle x \rangle \langle y \rangle -|x||y|\theta)^2
\Bigr). \label{rbj_5e5}
\end{align}

Now consider the function
\begin{align*}
F(\theta) = |x|^2 |y|^2 \theta^2 - 2 |x||y| \langle y \rangle ( \langle x \rangle - \frac{\epsilon}{\langle x \rangle}
) \theta.
\end{align*}

By using the obvious inequality $$\langle x \rangle - |x| \ge \frac 1 {2 \langle x \rangle},$$
it is not difficult to check that for $0<\epsilon\le \frac 12$,
\begin{align*}
\frac{\langle y \rangle (\langle x \rangle - \frac{\epsilon}{\langle x \rangle})}{|x||y|} >1.
\end{align*}

Since $0\le \theta \le 1$, clearly $F(\theta)$ achieves its minimum at $\theta=1$. Therefore it suffices
to prove \eqref{rbj_5e5} or equivalently \eqref{rbj_5e4} for $\theta=1$.

Consider \eqref{rbj_5e4} for $\theta=1$. We have
\begin{align*}
| \frac x {\langle x \rangle} \langle y \rangle - y| & =
\left | \frac{|x|}{\langle x \rangle} \langle y \rangle - |y| \right|
\notag \\
& \lesssim 1 + |y| \cdot \left| \frac{|x|}{\langle x \rangle}-1 \right| \notag \\
& = 1+ \frac{|y|}{\langle x \rangle}(\langle x \rangle -|x|).
\end{align*}

On the other hand
\begin{align}
\langle x \rangle \langle y \rangle - |x| |y| \ge (\langle x \rangle -|x| ) |y|.
\label{rbj_5e10}
\end{align}

Therefore \eqref{rbj_5e4} holds and consequently \eqref{rbj_5e3} is proved. By using
an estimate similar to \eqref{rbj_5e10}, we have
\begin{align}
\langle x \rangle \langle y \rangle - |x||y| \gtrsim \max\{
\frac{|y|}{\langle x \rangle}, \, \frac{|x|}{\langle y \rangle} \}. \label{rbj_5e10a}
\end{align}

This together with \eqref{rbj_5e3} obviously implies that
\begin{align}
\frac{|\partial_x B| + |\partial_y B| + \frac{\langle x \rangle}{\langle y \rangle}
+ \frac{\langle y \rangle}{\langle x \rangle}} B \lesssim 1. \label{rbj_5e11}
\end{align}

It is easy to check that
\begin{align}
\left| \partial_x^{\alpha} \partial_y^{\beta} B\right| \lesssim_{\alpha,\beta} \frac{\langle y \rangle}{\langle x \rangle}
+ \frac{ \langle x \rangle}{\langle y \rangle}, \quad\forall\, |\alpha|+|\beta|\ge 2. \label{rbj_5e11_tmpa}
\end{align}

The estimate \eqref{rbj_5e2} now follows from \eqref{rbj_5e11}, \eqref{rbj_5e11_tmpa} and an induction argument.

By \eqref{rbj_5e1} and \eqref{rbj_5e2}, we have
\begin{align*}
| \partial_x^{\alpha} \partial_y^{\beta} \psi(x,y)| \lesssim \psi(x,y).
\end{align*}

It remains for us to prove \eqref{rbj_5e0} for $\alpha=\beta=0$.
If $\langle x +y \rangle \ll \langle x \rangle$ or $\langle x +y \rangle \ll \langle y \rangle$,
the estimate is obvious. Without loss of generality assume $\langle y \rangle \ge \langle  x\rangle$
and $\min \{ \langle x \rangle, \langle y \rangle, \langle x+y \rangle \} \sim \langle x \rangle$.
Then by \eqref{rbj_5e10a} and \eqref{rbj_5e1}, we have
\begin{align*}
\psi(x,y) \le \frac{\langle x \rangle + \langle y \rangle}{ 1 + \frac{|y|}{\langle x \rangle} }
\lesssim \langle x \rangle.
\end{align*}

Therefore \eqref{rbj_5e0} is proved.

\end{proof}

We need a simple lemma from vector algebra.

\begin{lem} \label{lem_rbj_4}
For any $x\in \mathbb R^2$, $y\in \mathbb R^2$, we have
\begin{align}
\frac{x}{\langle x \rangle} - \frac{y}{\langle y \rangle} = Q(x,y) (x-y), \label{rbj_4e1}
\end{align}
where $Q(x,y)=Q$ is a matrix given by the expression
\begin{align}
Q_{ij} = \frac 1 {\langle y \rangle}
\Bigl( I- \frac{x (x+y)^T}{\langle x \rangle (\langle x \rangle +\langle y \rangle)}
\Bigr)_{ij} = \frac 1 {\langle y \rangle}
\Bigl( \delta_{ij} - \frac{x_i (x_j+y_j)} { \langle x \rangle (\langle x \rangle +\langle y \rangle)}
\Bigr), \qquad 1\le i,j\le 2. \label{rbj_4e1a}
\end{align}
Denote $\tilde x = (-x_2,x_1)^T$, $\tilde y = (-y_2,y_1)^T$. Then
\begin{align}
Q^{-1} = \langle x \rangle \langle y \rangle (\langle x \rangle +\langle y \rangle)
\bigl(1+ \langle x \rangle \langle y \rangle -x \cdot y \bigr)^{-1}
\Bigl( I-\frac{(\tilde x +\tilde y ) (\tilde x)^T}{ \langle x \rangle(\langle x \rangle +\langle y \rangle)}
\Bigr). \label{rbj_4e2}
\end{align}
We have the pointwise bounds:
\begin{align}
& | \partial_x^{\alpha} \partial_y^{\beta} Q(x,y) | \lesssim_{\alpha,\beta} \langle y \rangle^{-1}, \; \forall\,
\alpha,\beta; \notag \\
& | \partial_x^{\alpha} \partial_y^{\beta} ( Q^{-1}(x,y)) |
\lesssim_{\alpha,\beta} \langle x \rangle^3 + \langle y \rangle^3, \qquad\forall\,\alpha,\beta. \label{rbj_4e3}
\end{align}

\end{lem}
\begin{proof}
We first show \eqref{rbj_4e1}:
\begin{align*}
\frac x {\langle x \rangle} - \frac  y{\langle y \rangle}
& = x ( \frac 1 {\langle x \rangle} - \frac 1 {\langle y \rangle} ) + \frac 1 {\langle y \rangle}
( x-y) \notag \\
& = x \frac{(y+x)^T (y-x)} {\langle x \rangle \langle y \rangle (\langle x \rangle + \langle y \rangle)}
+\frac 1 {\langle y \rangle} (x-y) \notag \\
&= \frac 1 {\langle y \rangle}
\Bigl( I - \frac {x (x+y)^T} {\langle x \rangle (\langle x \rangle + \langle y \rangle )} \Bigr)
(x-y).
\end{align*}
Since $Q$ is a two by two matrix, the expression for $Q^{-1}$ is a straightforward computation.
The bounds \eqref{rbj_4e3} follow easily from \eqref{rbj_4e1a}, \eqref{rbj_4e2} and a similar estimate
as in \eqref{rbj_5e2}.
\end{proof}

We shall need to exploit some subtle cancelations of the phases. The following lemma will be useful
in our nonlinear estimates.

\begin{lem}[Transformation of phase derivatives] \label{lem_phase}
Consider the following phases:
\begin{align}
\phi_1(\xi,\eta,\sigma) &= \langle \xi \rangle + \langle \xi-\eta\rangle -\langle \eta-\sigma\rangle
-\langle \sigma\rangle, \notag \\
\phi_2(\xi,\eta,\sigma) & = \langle \xi \rangle - \langle \xi-\eta \rangle +\langle \eta-\sigma \rangle
-\langle \sigma \rangle, \notag \\
\phi_3(\xi,\eta,\sigma) & = \langle \xi \rangle - \langle \xi -\eta \rangle
-\langle \eta -\sigma \rangle + \langle \sigma \rangle. \notag
\end{align}
There exist smooth matrix functions
$Q_{11}=Q_{11}(\xi,\eta,\sigma)$, $Q_{12}=Q_{12}(\xi,\eta,\sigma)$, $Q_{21}=Q_{21}(\xi,\eta)$,
$Q_{22}=Q_{22}(\eta,\sigma)$, $Q_{31}=Q_{31}(\xi,\eta)$, $Q_{32}=Q_{32}(\eta,\sigma)$ such that
\begin{align*}
\partial_{\xi} \phi_1 &= Q_{11}(\xi,\eta,\sigma) \partial_{\eta} \phi_1 +
Q_{12}(\xi,\eta,\sigma) \partial_{\sigma} \phi_1,  \notag \\
\partial_{\xi} \phi_2 & = Q_{21}(\xi,\eta) Q_{22}(\eta,\sigma) \partial_{\sigma} \phi_2, \notag \\
\partial_{\xi} \phi_3 & = Q_{31}(\xi,\eta) Q_{32} (\eta,\sigma) \partial_{\sigma} \phi_3.
\end{align*}

Moreover we have the point-wise  bounds
\begin{align}
| \partial_{\xi}^{\alpha} \partial_{\eta}^{\beta} \partial_{\sigma}^{\gamma}
Q_{11}(\xi,\eta,\sigma)| + |\partial_{\xi}^{\alpha} \partial_{\eta}^{\beta} \partial_{\sigma}^{\gamma}
Q_{12}(\xi,\eta,\sigma)|&\lesssim_{\alpha,\beta,\gamma}
\langle |\xi|+|\eta|+|\sigma| \rangle^{3}, \quad\forall\, \alpha,\,\beta,\, \gamma;\notag \\
| \partial_{\xi}^{\alpha} \partial_{\eta}^{\beta}
Q_{21}(\xi,\eta)| +| \partial_{\xi}^{\alpha} \partial_{\eta}^{\beta}
Q_{31}(\xi,\eta)|  & \lesssim_{\alpha,\beta} 1, \quad\forall\, \alpha,\,\beta; \notag \\
|\partial_{\eta}^{\alpha} \partial_{\sigma}^{\beta} Q_{22}(\eta,\sigma)|+
|\partial_{\eta}^{\alpha} \partial_{\sigma}^{\beta} Q_{32}(\eta,\sigma)|
& \lesssim_{\alpha,\beta} \langle |\eta|+|\sigma| \rangle^{3},\quad\forall\, \alpha,\,\beta.
\label{9233000}
\end{align}
\end{lem}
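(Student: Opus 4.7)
The plan is to reduce everything to Lemma~\ref{lem_deform} applied to $v(x):=x/\langle x\rangle$, which gives a smooth matrix $Q(x,y)$ satisfying $v(x)-v(y)=Q(x,y)(x-y)$, $\|Q\|\lesssim 1$, and $\|Q^{-1}\|\lesssim\langle|x|+|y|\rangle^3$. Computing the partial derivatives of each phase explicitly, then using $v(-x)=-v(x)$ to convert every sum of $v$'s into a difference, and applying Lemma~\ref{lem_deform}, I would obtain the factorizations
\begin{align*}
\partial_\xi\phi_1 &= Q(\xi,\eta-\xi)(2\xi-\eta),\\
\partial_\eta\phi_1 &= Q(\eta-\xi,\eta-\sigma)(\sigma-\xi),\qquad
\partial_\sigma\phi_1 = Q(\eta-\sigma,\sigma)(\eta-2\sigma),\\
\partial_\xi\phi_2 = \partial_\xi\phi_3 &= Q(\xi,\xi-\eta)\,\eta,\\
\partial_\sigma\phi_2 &= -Q(\sigma-\eta,\sigma)\,\eta,\qquad
\partial_\sigma\phi_3 = Q(\eta-\sigma,-\sigma)\,\eta.
\end{align*}

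For $\phi_2$ and $\phi_3$ the conclusion is now essentially immediate: the two displayed expressions are respectively a matrix depending only on $(\xi,\eta)$ times $\eta$, and a matrix depending only on $(\eta,\sigma)$ times $\eta$. Setting $Q_{21}(\xi,\eta):=Q(\xi,\xi-\eta)$, $Q_{22}(\eta,\sigma):=-Q(\sigma-\eta,\sigma)^{-1}$, and likewise $Q_{31}=Q(\xi,\xi-\eta)$, $Q_{32}=Q(\eta-\sigma,-\sigma)^{-1}$, reads off the required factorization with the prescribed split dependence on $(\xi,\eta)$ and $(\eta,\sigma)$.

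The nontrivial case is $\phi_1$. Here the key algebraic step is the elementary identity
\[
2\xi-\eta = -2(\sigma-\xi) - (\eta-2\sigma),
\]
which allows one to rewrite
\[
\partial_\xi\phi_1 = -2\,Q(\xi,\eta-\xi)\,Q(\eta-\xi,\eta-\sigma)^{-1}\partial_\eta\phi_1 \;-\; Q(\xi,\eta-\xi)\,Q(\eta-\sigma,\sigma)^{-1}\partial_\sigma\phi_1,
\]
identifying $Q_{11}$ and $Q_{12}$. The structural reason such an identity must exist is that the space-resonant set $\{\partial_\eta\phi_1=\partial_\sigma\phi_1=0\}$ equals $\{\xi=\sigma,\ \eta=2\sigma\}$, on which $2\xi-\eta$ also vanishes automatically; hence $\partial_\xi\phi_1$ lies globally in the span of $\partial_\eta\phi_1$ and $\partial_\sigma\phi_1$.

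Finally, the polynomial bounds~\eqref{9233000} follow by bookkeeping. The factors $Q(x,y)$ and all their derivatives are bounded by universal constants since $Q(x,y)=\int_0^1 Dv(y+t(x-y))\,dt$ and $D^k v$ decay. The inverse factors satisfy $\|Q^{-1}\|\lesssim\langle|x|+|y|\rangle^3$ by Lemma~\ref{lem_deform}, and differentiating $QQ^{-1}=I$ through $\partial(Q^{-1})=-Q^{-1}(\partial Q)Q^{-1}$ and iterating pushes the polynomial degree up by a fixed amount per derivative, yielding the stated bounds with $C$ an absolute constant. The main obstacle throughout is locating the identity that exhibits $\partial_\xi\phi_1$ in the span of $\partial_\eta\phi_1,\partial_\sigma\phi_1$; once it is in hand, every remaining step is mechanical linear algebra.
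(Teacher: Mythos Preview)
Your argument is correct and follows essentially the same route as the paper: compute the gradients, apply Lemma~\ref{lem_deform} to write each as an invertible matrix times a linear combination of the frequency variables, use the identity $2\xi-\eta=2(\xi-\sigma)-(\eta-2\sigma)$ for $\phi_1$, and read off the matrices (the paper only writes out $\phi_1$ explicitly and calls the other two simpler). Your treatment of the polynomial bounds via the integral formula for $Q$ and iterated differentiation of $Q^{-1}$ is slightly more detailed than the paper's, but the content is the same.
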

\begin{proof}
We prove it for $\phi_1$. The other two cases are simpler. By Lemma \ref{lem_rbj_4}, we write
\begin{align*}
\partial_{\xi} \phi_1 &= \frac{\xi}{\langle \xi\rangle} + \frac{\xi-\eta}{\langle \xi-\eta\rangle}
=\tilde Q_1(\xi,\eta) \cdot (2\xi-\eta), \\
\partial_{\eta} \phi_1 & = \frac{\eta-\xi}{\langle \eta-\xi\rangle}
- \frac{\eta-\sigma}{\langle \eta-\sigma \rangle} = \tilde Q_2(\xi,\eta,\sigma)\cdot (\xi-\sigma), \\
\partial_{\sigma} \phi_1 & = \frac{\eta-\sigma}{\langle \eta-\sigma \rangle} - \frac{\sigma}
{\langle \sigma \rangle} = \tilde Q_3(\eta,\sigma) \cdot (\eta-2\sigma).
\end{align*}
Hence
\begin{align*}
\partial_{\xi} \phi_1 & = \tilde Q_1 \left( 2 {\tilde Q_2}^{-1} \partial_{\eta} \phi_1 - {\tilde Q_3}^{-1}
\partial_{\sigma} \phi_1 \right) \\
& =: Q_{11} \partial_{\eta} \phi_1 + Q_{12} \partial_{\sigma} \phi_1.
\end{align*}

The bound \eqref{9233000} is obvious.
\end{proof}

\section{Preliminary transformations}
Since the function $h=h(t,x)$ is complex-valued, we write it as
\begin{align*}
h(t,x)=h_1(t,x)+ i h_2(t,x).
\end{align*}
By \eqref{e92_537a} and \eqref{eq_vh}, we have
\begin{align*}
u & = \frac{|\nabla|}{\jnab} h_1, \\
\v v &= - \frac {\nabla} {|\nabla|} h_2.
\end{align*}

In Fourier space, \eqref{e92_537b} then takes the form
\begin{align*}
&\hat h(t,\xi) \notag \\
= & e^{it\langle \xi \rangle} \widehat{h_0}(\xi)
- \int_0^t \int e^{i(t-s) \langle \xi \rangle}
\langle \xi \rangle \langle \eta \rangle^{-1}
|\eta| \frac{\xi \cdot (\xi-\eta)}{|\xi||\xi-\eta|}
\widehat{h_1}(s,\eta) \widehat{h_2}(s,\xi-\eta) d\eta ds \notag \\
& \qquad +
\frac {i} 2 \int_0^t \int e^{i(t-s) \langle \xi \rangle}
|\xi| \frac{|\eta||\xi-\eta|}{\langle \eta \rangle \langle \xi-\eta \rangle}
\widehat{h_1}(s,\eta) \widehat{h_1}(s,\xi-\eta) d\eta ds \notag \\
& \qquad - \frac i 2
\int_0^t \int e^{i(t-s) \langle \xi \rangle}
|\xi| \frac{\eta \cdot(\xi-\eta)}{|\eta| |\xi-\eta|}
\widehat{h_2}(s,\eta) \widehat{h_2}(s,\xi-\eta) d\eta ds.
\end{align*}

Denote
\begin{align*}
f(t)=e^{-it \langle \nabla \rangle} h(t).
\end{align*}
Then after a tedious calculation,
\begin{align}
\hat f(t,\xi) &= \hat h_0(\xi) +
\int_0^t \int e^{-is(\langle \xi \rangle -\langle \eta \rangle -\langle \xi-\eta \rangle)}
\Bigl( \psa \notag \\
& \qquad + \psb + \psc \Bigr)
\hat f (s,\eta) \hat f(s,\xi-\eta) d\eta ds \notag \\
& \;\;
+\int_0^t \int e^{-is(\langle \xi \rangle -\langle \eta \rangle +\langle \xi-\eta \rangle)}
\Bigl( -\psa \notag \\
& \qquad +  \psb -\psc \Bigr)
\hat f (s,\eta) \overline{\hat f(s,\eta-\xi)} d\eta ds \notag \\
& \;\;
+\int_0^t \int e^{-is(\langle \xi \rangle +\langle \eta \rangle -\langle \xi-\eta \rangle)}
\Bigl( \psa \notag \\
& \qquad +  \psb -\psc \Bigr)
\overline{\hat f (s,-\eta)} {\hat f(s,\xi-\eta)} d\eta ds \notag \\
& \;\;
+\int_0^t \int e^{-is(\langle \xi \rangle +\langle \eta \rangle +\langle \xi-\eta \rangle)}
\Bigl( -\psa \notag \\
& \qquad +  \psb +\psc \Bigr)
\overline{\hat f (s,-\eta)} \;\overline{\hat f(s,\eta-\xi)} d\eta ds. \label{eq_complicated}
\end{align}
Here $\overline{\hat f}$ denote the complex conjugate of $\hat f$. Note that
\begin{align*}
\overline{\hat f(t,-\xi)} &= e^{it\langle \xi \rangle} \hat{\bar h}(t,\xi), \\
\hat f(t,\xi) &= e^{-it\langle \xi \rangle} \hat h(t,\xi).
\end{align*}
To simplify matters, we shall write \eqref{eq_complicated} collectively as
\begin{align}
\hat f(t,\xi) = \hat h_0(\xi) + \int_0^t \int e^{-is\phi_0(\xi,\eta)} m_0(\xi,\eta)
\hat f(s,\xi-\eta) \hat f(s,\eta) d\eta ds, \label{e927_546a}
\end{align}
where
\begin{align}
\phi_0(\xi,\eta) = \langle \xi \rangle \pm \langle \xi-\eta\rangle \pm \langle \eta \rangle, \label{phase1}
\end{align}
and $m_0(\xi,\eta)$ is given by (after some symmetrization between $\eta$ and $\xi-\eta$)
\begin{align*}
m_0(\xi,\eta) & = const\cdot \mveca {\xi} {\eta}
+ const \cdot \langle \xi \rangle \frac{\xi \cdot (\xi-\eta)} {|\xi| |\xi-\eta|} \frac{|\eta|}{\langle \eta \rangle}
\notag \\
&\qquad+ const\cdot |\xi| \cdot \frac{|\eta|}{\langle \eta\rangle}
\cdot \frac{|\xi-\eta|} {\langle \xi-\eta\rangle} + const \cdot \mvecc {\xi} {\eta} \notag \\
& := \sum_{i=1}^4 m_i(\xi,\eta).
\end{align*}

Here and in the rest of this paper we shall abuse slightly the notations and
denote $\hat f(t,\xi)$ to be either itself or its complex conjugate (i.e. $\overline{\hat f(t,-\xi)}$,
see \eqref{eq_complicated}).
Note that in the expression
of $m_0(\xi,\eta)$ there are four types of symbols. For $w=(w_1,w_2) \in \mathbb R^2$, define
\begin{align*}
r_1(w)= \frac{w_1}{|w|}, \; r_2(w) = \frac{w_2}{|w|}, \; r_3(w)= \frac{|w|}{\langle w \rangle}.
\end{align*}
We write $m_0(\xi,\eta)$ collectively as
\begin{align}
m_0(\xi,\eta) = \sum_{1\le j,k,l\le 3} a_{jkl} \cdot
\langle \xi \rangle \cdot r_j(\xi) r_k(\xi-\eta) r_l(\eta), \label{p10_J11_e1}
\end{align}
where $a_{jkl}$ are some constant coefficients.
For example
\begin{align*}
m_3(\xi,\eta) & = const \cdot \langle \xi \rangle \cdot \frac {|\xi|}{\langle \xi \rangle}
\cdot \frac{|\xi-\eta|} {\langle \xi -\eta \rangle} \cdot \frac{|\eta|} {\langle \eta \rangle} \notag \\
& = const \cdot \langle \xi \rangle \cdot r_3(\xi) r_3(\xi-\eta) r_3(\eta).
\end{align*}

Although the frequency variables $(\xi,\eta)$ are vectors, this fact will play no role in our analysis.
The actual value of the constants $a_{jkl}$  will also not be important.
Therefore we shall suppress the subscript notations and summation in \eqref{p10_J11_e1},
pretend everything is scalar valued
and regard $m_0(\xi,\eta)$ as any one of the summand in \eqref{p10_J11_e1}.
Observe $m_0(\xi,\eta)$ is symmetric in the sense that
\begin{align}
m_0(\xi,\eta)=m_0(\xi,\xi-\eta). \label{m0_sym}
\end{align}

The nice feature of Klein-Gordon is (cf. Lemma \ref{lem_rbj_5})
\begin{align*}
|\phi_0(\xi,\eta)| \gtrsim 1/
\langle |\xi| +|\eta| \rangle, \qquad \text{ for any $(\xi,\eta)$.}
\end{align*}

By the simple identity
\begin{align*}
e^{-is \phi_0(\xi,\eta)} = \frac 1 {-i \phi_0(\xi,\eta)} \frac {\partial}{\partial s}
\Bigl( e^{-is \phi_0(\xi,\eta)} \Bigr),
\end{align*}
we can then integrate
by parts in the time variable $s$ in \eqref{e927_546a}. By \eqref{m0_sym},
\begin{align}
&\int_0^t \int e^{-is \phi_0(\xi,\eta)} \frac{m_0(\xi,\eta)}{\phi_0(\xi,\eta)} \partial_s \hat f(s,\xi-\eta) \hat f(s,\eta) d \eta \notag \\
= & \int_0^t \int e^{-is \phi_0(\xi,\eta)}
\frac{ m_0(\xi,\eta)}{\phi_0(\xi,\eta)} \partial_s \hat
f(s,\eta) \hat f(s,\xi-\eta) d \eta \label{not_change}
\end{align}
using the change of variable $\eta\to \xi-\eta$. In the above equality we have
abused again the notation and denote $\phi_0(\xi,\eta)=\phi_0(\xi,\xi-\eta)$ since
 it will remain the same form as \eqref{phase1}. By \eqref{e927_546a}, we have
 \begin{align}
 \partial_s \hat f (s,\eta)
 = \int e^{-is \phi_0(\eta,\sigma)}
 m_0(\eta,\sigma)
 \hat f(s,\eta-\sigma) \hat f (s,\sigma) d\sigma. \label{p10_J11_e2}
 \end{align}

Integrating by parts in the time variable $s$ in \eqref{e927_546a},
using \eqref{p10_J11_e2} and \eqref{not_change}, we obtain
\begin{align}
\hat f(t,\xi)& = \widehat{\tilde h_0}(\xi) + \hat g(t,\xi) \notag \\
& \qquad\quad+ \int_0^t \int e^{-is \phi(\xi,\eta,\sigma)}
m_1(\xi,\eta,\sigma) \hat f(s,\xi-\eta) \hat f(s,\eta-\sigma) \hat f(s,\sigma) d\sigma d\eta ds\notag\\
&=: \widehat{\tilde  h_0}(\xi) +\hat g(t,\xi) + \hat f_{\text{cubic}}(t,\xi), \label{e926_552}
\end{align}
where $\tilde h_0$ collects the contribution from the boundary term $s=0$ and data $h_0$:
\begin{align}
\widehat{\tilde h_0}(\xi) &= \widehat{h_0}(\xi) + \int \frac{m_0(\xi,\eta)} {i \phi_0(\xi,\eta)}
\hat h_0(\xi-\eta) \hat h_0(\eta) d\eta \notag \\
&=\widehat{h_0}(\xi) - \hat g(0,\xi); \label{e27_111}
\end{align}
the term $g$ denotes the boundary term arising from $s=t$:
\begin{align}
\hat g(t,\xi) = \int e^{-it \phi_0(\xi,\eta)} \cdot \frac{m_0(\xi,\eta)} {-i \phi_0(\xi,\eta)}
\hat f(t,\xi-\eta) \hat f(t,\eta) d\eta; \label{e27_111a}
\end{align}
$m_1(\xi,\eta,\sigma)$ is given by
\begin{align*}
m_1(\xi,\eta,\sigma) = \frac{m_0(\xi,\eta) m_0(\eta,\sigma)} {i\phi_0(\xi,\eta)}   ;
\end{align*}
 and also
\begin{align*}
\phi(\xi,\eta,\sigma)= \langle \xi \rangle \pm \langle \xi -\eta \rangle
\pm \langle \eta-\sigma \rangle \pm \langle \sigma \rangle.
\end{align*}

Note that
\begin{align*}
&m_0(\xi,\eta) m_0(\eta,\sigma) \notag \\
 =\; &
\sum_{1\le j,k,l,j^{\prime},k^{\prime},l^{\prime}  \le 3}
\langle \xi \rangle \langle \eta \rangle r_j(\xi) r_{k}(\xi-\eta)
r_l(\eta) r_{j^{\prime}}(\eta) r_{k^{\prime}}(\eta-\sigma) r_{l^{\prime}}(\sigma) \notag\\
=\; & \sum_{1\le j,k,l,j^{\prime},k^{\prime},l^{\prime} \le 3}
\langle \xi \rangle \langle \eta \rangle r_{l}(\eta) r_{j^{\prime}}(\eta)
r_j(\xi) r_k(\xi-\eta) r_{k^{\prime}}(\eta-\sigma) r_{l^{\prime}}(\sigma).
\end{align*}

We shall abuse slightly the notations and denote
\begin{align*}
&\widehat{\mathcal R f}(\xi) = r(\xi) \hat f(\xi), \qquad
r(\xi)=r_1(\xi),\; r_2(\xi),\; r_3(\xi), \; \text{or $r_j(\xi)r_{j^{\prime}}(\xi)$, }\\
&\frac {\eta}{|\eta|} = r_l(\eta) r_{j^{\prime}}(\eta).
\end{align*}

The notations $\mathcal R$ and $\frac{\eta}{|\eta|}$ suggest that the functions
$r_j$ and $r_{l}r_{j^{\prime}}$ are essentially the symbols of some Riesz-type
operators or better. Their estimates are the same and the actual form plays no role
in the proof. By adopting the above notations we can simplify greatly the presentation
and also the analysis. In this notation, we shall write
\begin{align}
\widehat{f_{\text{cubic}}}(t,\xi) = const \cdot \widehat{\mathcal R f_3}(t,\xi),
\notag
\end{align}
and
\begin{align}
\widehat{f_3}(t,\xi) & = \int_0^t  \int e^{-is \phi(\xi,\eta,\sigma)}
\cdot \frac{\langle \xi \rangle \cdot \langle \eta \rangle}{\phi_0(\xi,\eta)}
 \widehat{\mathcal R f}(s,\xi-\eta) \notag \\
 & \qquad \frac{\eta}{|\eta|}
 \Bigl( \widehat{\mathcal R f}(s,\eta-\sigma) \widehat{\mathcal R f}(s,\sigma)
 \Bigr) d\sigma d\eta ds. \label{p12_J11_e1}
 \end{align}

 In a similar way, we write the boundary terms as
 \begin{align}
 \hat g (t,\xi)& = const \cdot \widehat{\mathcal R g_1}(t,\xi), \notag \\
\widehat{g_1}(t,\xi) & = \int
e^{-it \phi_0(\xi,\eta)} \cdot \frac{\langle \xi \rangle}{\phi_0(\xi,\eta)}
\widehat{\mathcal R f}(t,\xi-\eta) \widehat{\mathcal R f}(t,\eta) d\eta.
\label{p12_J11_e2}
\end{align}

\section{Local theory, continuity of $X$-norm and $H^{N^\prime}$-estimate}
\label{sec_local}

We recall that
\begin{align}
\partial_th& = i\langle \nabla \rangle h
-  \frac{\langle \nabla \rangle \nabla}{|\nabla|} \cdot (u \v v) +
\frac i2 |\nabla| ( u^2 +|\v v|^2) , \label{e92tmp_537b}
\end{align}
where $h=h_1+i h_2$, and
$$
u=\frac{|\nabla|}{\langle \nabla\rangle}h_1,\qquad \v
v=-\frac{\nabla}{|\nabla|}h_2.
$$
\begin{thm} \label{thm_local}
For any $k\ge 4$, $h_0\in H^k(\R^2)$, there exists
$T_0=T_0(\|h_0\|_{H^k})>0$, and a unique smooth local solution $h\in
C_t^0H^k([0,T_0]\times \R^2)$ to \eqref{e92_537b}.

Moreover, if $h_0 \in H^7(\mathbb R^2)$ and $\|x(1-\Delta)h_0\|_{2+\delta}<\infty$, then
$$
\tilde a(t):=\|x(1-\Delta)e^{-it\langle
\nabla\rangle}h(t)\|_{2+\delta}<\infty,
$$
for any $0\le t\le T_0$, and $\tilde a(t)$ is a continuous function of $t$.

We also have
\begin{align*}
\| h (\tau) \|_{C_{\tau}^0 H^{N^{\prime}} ([0,t])} \lesssim \|h_0\|_{H^{N^{\prime}}} + \| h \|_{X_t}^2 + \| h\|_{X_t}^3.
\end{align*}

\end{thm}
The rest of this section is devoted to the proof of this theorem. We
begin with the $H^k$-local well-posedness theory which is quite
standard. We sketch the details here for the sake of completeness.

\subsection{Energy estimates}

Let $m$ be an integer. By \eqref{e92tmp_537b}, we compute
\begin{align}
\frac12\frac d{dt}\int\partial^mh \>\partial^m\bar{h}&=-\int
\partial^m\Big(\frac{\lnr}{|\nabla|}\nabla\cdot(u\v v)\Big)
\partial^m\frac{\lnr}{|\nabla|}u \notag\\
&\quad+\frac12\int\partial^m|\nabla|\big(u^2+|\v
v|^2\big)\partial^m\left(\frac{\nabla}{|\nabla|}\cdot \v v\right)\notag\\
&=-\int
\partial^m\nabla\cdot(u\v v)
\partial^m\frac{1-\Delta}{-\Delta}u\notag\\
&\quad+\frac12\int\partial^m\big(u^2+|\v
v|^2\big)\partial^m\left({\nabla}\cdot \v v\right).\label{J18_e2}
\end{align}
\underline{$L^2$-estimate}. Taking $m=0$ in \eqref{J18_e2}, we get
\begin{align*}
\frac12\frac d{dt}\left(\|h(t)\|_{L^2}^2\right)&=\int u\v v\cdot \nabla\frac{1-\Delta}{-\Delta}u+\frac12\int(u^2+|\v v|^2)(\nabla \cdot
\v v)\\
&\lesssim
\|u\|_{\infty}\|\v v\|_{L^2}\Big\|\frac{\lnr^2}{|\nabla|}u\Big\|_2+\|\nabla\cdot
\v v\|_\infty\big(\|u\|_2^2+\|\v v\|_2^2\big)\\
&\lesssim \big(\|u\|_\infty+\|\nabla\cdot\v
v\|_{\infty}\big)\|h\|_{H^k}^2.
\end{align*}
\underline{$H^k$-estimate}. Taking $m=k$ in \eqref{J18_e2},  we
have
\begin{align}
\frac12\frac d{dt}\left(\|\partial^k
h(t)\|_{L^2}^2\right)=&-\int\partial^k \nabla\cdot( u\v v)\>
(\partial^k (-\Delta)^{-1}u)\label{J18_e3a}\\
&-\int\partial^k \nabla\cdot( u\v v)\>
\partial^k u\label{J18_e3b}\\
&+\frac12\int\partial^k(u^2)\partial^k(\nabla\cdot \v
v)\label{J18_e3c}\\
&+\frac12\int\partial^k(|\v v|^2)\partial^k(\nabla\cdot \v
v)\label{J18_e3d}.
\end{align}
For \eqref{J18_e3a}, we estimate it as
\begin{align*}
\eqref{J18_e3a}=& - \int\partial^k \nabla u\cdot\v v\> \big(\partial^k (-\Delta)^{-1}u\big)
-\int u\>(\partial^k \nabla\cdot \v v)\>\partial^k (-\Delta)^{-1}u\\
&+\sum\limits_{1\le l \le k} O\Big(\int
\partial^lu\>\partial^{k+1-l}\v v\>\partial^k (-\Delta)^{-1}u
\Big)\\
=&  \;\frac 12 \int \big| \partial^k (-\Delta)^{-1} \nabla u\big|^2 (\nabla\cdot \v v)
+\int u\>\big(\partial^k\v v\cdot \nabla\partial^k(-\Delta)^{-1}u\big)\\
& \;+ O\Big( \int (-\Delta)^{-1} \partial^{k+2} u \cdot \partial \v v \partial^k (-\Delta)^{-1} u \Big) \notag \\
&\;+\sum\limits_{1\le l \le k} O\Big(\int
\partial^lu\>\partial^{k+1-l}\v v\>\partial^k (-\Delta)^{-1}u
\Big)\\
\lesssim & \; \|\nabla\cdot \v
v\|_\infty\>\|u\|_{H^k}^2+\|u\|_\infty\>\|\v
v\|_{H^k}\>\|u\|_{H^k} + \| \partial \v v \|_{\infty} \| u \|_{H^k}^2 \notag \\
&\; +\sum\limits_{l=1}^k \|\partial^l
u\|_{\frac{2(k-1)}{l-1}}\>\|\partial^{k+1-l}\v
v\|_{\frac{2(k-1)}{k-l}}\>\|u\|_{H^k}\\
\lesssim &\; \big(\|u\|_\infty+\|\partial  \v
v\|_{\infty}\big)(\|u\|_{H^k}^2+\|\v
v\|_{H^k}^2)\\
&+\sum\limits_{l=1}^k \|\partial^ku\|_2^{\frac{l-1}{k-1}}\|\partial
u\|_\infty^{\frac{k-l}{k-1}}\>\|\partial^k\v
v\|_2^{\frac{k-l}{k-1}}\|\partial \v
v\|_\infty^{\frac{l-1}{k-1}}\>\|u\|_{H^k}\\
&\lesssim \big(\|u\|_\infty+\|\partial u\|_\infty+\|\partial \v
v\|_\infty \big)\big(\|u\|_{H^k}^2+\|\v
v\|_{H^k}^2\big).
\end{align*}

For \eqref{J18_e3b}, we write it as
\begin{align*}
\eqref{J18_e3b}=&-\int\big(\partial^k \nabla \cdot\v v\big)\>
u\>\big(\partial^ku\big) -\int\big(\partial^k \nabla u \cdot\v
v\big)\>\big(\partial^ku\big) \\
&\quad + \sum_{1\le l \le k} O \Big( \int \partial^l u \partial^{k+1-l} \v v \partial^k u dx \Bigr) \notag \\
=&-\int\big(\partial^k \nabla \cdot\v v\big)\>
u\>\partial^ku +\frac12\int\nabla \cdot\v v
\big(\partial^ku \big)^2+\cdots,
\end{align*}
where ``$\cdots$" denote terms which  can be estimated in a similar
way as that in \eqref{J18_e3a}.

Similarly,
$$
\eqref{J18_e3c}=\int (\partial^k\nabla\cdot \v v)u\partial^k
u+\cdots.
$$
Also, using the fact that $\text{curl} \v v=0$,
$$
\eqref{J18_e3d} = \frac 1 4 \int |\partial^k \v v |^2 (\nabla \cdot \v v)+ \cdots.
$$
Collecting all the estimates, we obtain
$$
\frac12\frac d{dt}\left(\|h(t)\|_{H^k}^2\right)\lesssim
\big(\| u\|_\infty+\|\partial u\|_\infty+\|\partial \v
v\|_{\infty}\big)\|h(t)\|_{H^k}^2.
$$
This concludes the energy estimates.

\subsection{Continuity of $X$-norm along the flow.}

Now we  show that $$\tilde a(t)=\|x(1-\Delta)e^{-it\langle
\nabla\rangle}h(t)\|_{2+\delta}$$  is a continuous function of $t$ (so
that we can use the continuity argument later). Without loss
of generality we shall assume $0\le t \le 1$.

\texttt{Step 1.} For any dyadic $R$, define
$$
A_R=\Big\|\chi_{\frac R2\le |x|\le 2R}\left( \aligned
    &u\\
    &\v v
   \endaligned
  \right)
  \Big\|_p+\Big\|\chi_{\frac R2\le |x|\le 2R}\left( \aligned
    &\nabla u\\
    &\nabla \v v
   \endaligned
  \right)
  \Big\|_p,
$$
where we fix some $p$ such that $2+\delta<p<2(2+\delta)$. Here
$$
\chi_{\frac R2 \le |x| \le 2R} = \chi_{|x| \le 2R} - \chi_{|x| \le \frac R2}.
$$

 We first show that
\begin{align}\label{J18_ee0a}
A_R\lesssim \frac 1R,  \qquad \mbox { for }  R\ge R_0,
\end{align}
and $R_0$ is sufficiently large.

\underline{Linear flow estimate}. For $0\le t\le 1$, by Lemma \ref{lem_rbj_3} and
Lemma \ref{lem_J15_2}, we have
\begin{align*}
&\Big\|\chi_{\frac R2\le |x|\le 2R}\frac{|\nabla|}{\langle
  \nabla\rangle}e^{it\langle \nabla\rangle}h_0
  \Big\|_p + \Big\|\chi_{\frac R2\le |x|\le 2R}\frac{\nabla}{|
  \nabla|}e^{it\langle \nabla\rangle}h_0
  \Big\|_p\\
\lesssim &\frac 1R \Bigl( \Big\|x\frac{|\nabla|}{\langle
  \nabla\rangle}e^{it\langle \nabla\rangle}h_0
  \Big\|_p + \Big\|x\frac{\nabla}{|
  \nabla|}e^{it\langle \nabla\rangle}h_0
  \Big\|_p \Bigr) \\
\lesssim &\frac 1R \Big(\Big\| |\nabla|^{-1} e^{it\langle \nabla\rangle}h_0
  \Big\|_p+\langle t\rangle \|e^{it\langle \nabla\rangle}h_0
  \|_p+\|e^{it\langle \nabla\rangle}(xh_0)
  \|_p\Big)\\
\lesssim &\frac 1R\langle
  t\rangle^{|1-\frac2p|}\Big(\| |\nabla|^{-1} \jnab^{|1-\frac 2p|+} h_0\|_p+\langle t\rangle
  \|\langle \nabla\rangle h_0\|_p+ \|\langle
  \nabla\rangle (xh_0)\|_p\Big)\\
\lesssim &\frac 1R \Big(\|\langle x\rangle h_0\|_{2+\delta}+
  \| h_0\|_{H^3}+\|x\Delta h_0\|_{2+\delta}\Big).
\end{align*}

Similarly,
\begin{align*}
&\Big\|\chi_{\frac R2\le |x|\le 2R}\frac{|\nabla|}{\langle
  \nabla\rangle}\nabla e^{it\langle \nabla\rangle}h_0
  \Big\|_p + \Big\|\chi_{\frac R2\le |x|\le 2R}\frac{\nabla}{|
  \nabla|}\nabla e^{it\langle \nabla\rangle}h_0
  \Big\|_p\\
\lesssim &\frac 1R\Bigl(  \Big\|x\frac{|\nabla|}{\langle
  \nabla\rangle}\nabla e^{it\langle \nabla\rangle}h_0
  \Big\|_p +  \Big\|x\frac{\nabla}{|
  \nabla|}\nabla e^{it\langle \nabla\rangle}h_0
  \Big\|_p \Bigr) \\
\lesssim &\frac 1R \Big(\|e^{it\langle \nabla\rangle}h_0
  \Big\|_p+ \|\nabla e^{it\langle \nabla\rangle}h_0
  \|_p+\|\nabla e^{it\langle \nabla\rangle}(xh_0)\|_p
  \Big)\\
\lesssim &\frac 1R \Big(\|h_0\|_{H^3}+
  \|\nabla\langle \nabla\rangle^{(1-\frac 2p)+} (xh_0)\|_p\Big).
\end{align*}
Now note that by Sobolev embedding,
$$
\|\nabla \langle \nabla\rangle^{(1-\frac 2p)+} (xh_0)\|_{p}\lesssim
\|\nabla \langle \nabla\rangle^{(1-\frac 2p)+} \langle
\nabla\rangle^{\frac{2}{2+\delta}-\frac 2p} (xh_0)\|_{2+\delta}.
$$
Since
$$
2+\frac{2}{2+\delta}-\frac 4p<2,
$$
we get
$$
\|\nabla \langle \nabla\rangle^{(1-\frac 2p)+} (xh_0)\|_{p}\lesssim
\|h_0\|_{H^3}+\|x\Delta h_0\|_{2+\delta}.
$$
So the contribution from the linear flow $\lesssim \frac 1R$.

\underline{Nonlinear flow estimate.}

Denote
\begin{align*}
\mathcal {N}_u(t)&= \int_0^t e^{i(t-s)\langle \nabla\rangle}
\Big[-\nabla\cdot (u\v v)+\frac i2 \frac{-\Delta}{\langle
\nabla\rangle}(u^2+|\v v|^2)\Big]\,ds;\\
\mathcal {N}_{\v v}(t)&= \int_0^t e^{i(t-s)\langle \nabla\rangle}
\left[\frac{\nabla}{|\nabla|}\Big(\frac{\jnab}{|\nabla|}
\nabla\cdot(u\v v)\Big) -\frac i2\nabla(u^2+|\v v|^2)\right]\,ds.
\end{align*}

We discuss two cases.

\underline{Low frequency piece.}

First note that by using the finite speed propagation of the Klein-Gordon
propagators $\cos \tau \lnr, \frac{\sin \tau\lnr}{\lnr}$, we have for all
$0\le \tau \le1$ and $R\ge 100$,
\begin{align}\label{J18_ee0}
\chi_{\frac R2\le|\cdot |\le 2R} \cos\tau\lnr=\chi_{\frac R2\le|\cdot|\le 2R}
\cos\tau\lnr\big[\chi_{\frac 25 R\le|\cdot |\le \frac 52 R}\big];\notag\\
\chi_{\frac R2\le|\cdot |\le 2R} \frac{\sin \tau\lnr}{\lnr}=\chi_{\frac
R2\le|\cdot |\le 2R} \frac{\sin \tau\lnr}{\lnr}\big[\chi_{
\frac 25R \le|\cdot|\le \frac 52 R}\big].
\end{align}
Consider the operators
\begin{align*}
K_{<1}^{(1)}f&=\chi_{\frac
25 R \le|x|\le \frac 52 R}\nabla P_{<1}(\widetilde{\chi}f),\notag\\
K_{<1}^{(2)}f&=\chi_{\frac
25 R \le|x|\le \frac 52 R}\frac{\Delta}{\lnr} P_{<1}(\widetilde{\chi}f),\notag\\
K_{<1}^{(3)}f&=\chi_{\frac
25 R \le|x|\le \frac 52 R}\frac{\nabla}{|\nabla|}\frac{\nabla}{|\nabla|}\langle
\nabla\rangle P_{<1}(\widetilde{\chi}f),
\end{align*}
where $\widetilde{\chi}=\chi_{\le \frac R4}$ or $\chi_{\ge 4R}$. We
claim that
\begin{equation}\label{J18_ee1}
\|K_{<1}^{j}f\|_p\lesssim \frac 1R
\|f\|_{(\frac12+\frac1p)^{-1}},\quad \mbox{for any}\quad j=1,2,3.
\end{equation}
Indeed, we shall prove it for $j=3$ and
$\widetilde{\chi}=\chi_{\le \frac R4}$. The others are similar. For
any dyadic $N<1$, it is not difficult to check that for some $\Psi(\xi)=\phi_{\le 1}(\xi)-\phi_{\le \frac 12}(\xi)$
\begin{align*}
\left[\mathcal
{F}^{-1}\Big(\frac{\nabla}{|\nabla|}\frac{\nabla}{|\nabla|}\lnr P_N\Big)\right](z)&=\int
e^{i\xi\cdot z}\Psi(\frac\xi
N)\frac{\xi}{|\xi|}\frac{\xi}{|\xi|}\langle
\xi\rangle\,d\xi\\
&=N^2\int e^{i\xi\cdot Nz}\Psi(\xi
)\frac{\xi}{|\xi|}\frac{\xi}{|\xi|}\langle N\xi\rangle\,d\xi\\
&=N^2\tilde{\phi}(N, z),
\end{align*}
where $\tilde{\phi}\in C^\infty$ satisfies
$$
|\tilde{\phi}(N, z)|\lesssim_k \langle N z \rangle^{-k},\quad\mbox{ for
any } z\in \mathbb R^2,\, N<1.
$$
We then have
\begin{align*}
\|K^{(3)}_{< 1} f\|_p&\lesssim \sum\limits_{N<1} \|\chi_{\frac
25 R \le|x|\le \frac 52 R}\frac{\nabla}{|\nabla|}\frac{\nabla}{|\nabla|}\langle
\nabla\rangle P_{N}(\widetilde{\chi}f)\|_p\\
&\lesssim \sum\limits_{N<1}\langle NR\rangle^{-10} N
\|f\|_{(\frac12+\frac1p)^{-1}}\\
&\lesssim \frac1R \|f\|_{(\frac12+\frac1p)^{-1}}.
\end{align*}
This settles the estimate \eqref{J18_ee1}. By using \eqref{J18_ee0}
and \eqref{J18_ee1}, we have
\begin{align*}
&\|\chi_{\frac R2\le|x|\le 2R} P_{< 1}\mathcal
{N}_{u}\|_p+\|\chi_{\frac R2\le|x|\le 2R} P_{< 1}\mathcal {N}_{\v
v}\|_p\\
&\qquad +\|\chi_{\frac R2\le|x|\le 2R} \nabla P_{< 1}\mathcal
{N}_{u}\|_p+\|\chi_{\frac R2\le|x|\le 2R} \nabla P_{< 1}\mathcal
{N}_{\v v}\|_p\\
\le &\eta(T_0)\|\chi_{\frac R4\le|x|\le
4R}u^2\|_p+\eta(T_0)\|\chi_{\frac R4\le|x|\le 4R}|\v v|^2\|_p+\frac CR\\
\le &\eta(T_0)\Big(\|\chi_{\frac R4\le|x|\le 4R}u\|_p+\|\chi_{\frac
R4\le|x|\le 4R}\v v\|_p\Big)+\frac CR.
\end{align*}
Here $\eta(T_0) \to 0$ as we take $T_0\to 0$.

\underline{High frequency piece.}

By \eqref{J18_ee0} and a similar computation as in the low frequency
case, we have
\begin{align*}
&\|\chi_{\frac R2\le|x|\le 2R} P_{>1}\mathcal
{N}_{u}\|_p+\|\chi_{\frac R2\le|x|\le 2R} P_{>1}\mathcal {N}_{\v
v}\|_p\\
&\qquad +\|\chi_{\frac R2\le|x|\le 2R} \nabla P_{>1}\mathcal
{N}_{u}\|_p+\|\chi_{\frac R2\le|x|\le 2R} \nabla P_{>1}\mathcal
{N}_{\v v}\|_p\\
\le
&\eta(T_0)\Big[\left\|\lnr^{3}(\chi_{\frac R4\le|x|\le
4R}u\>\chi_{\frac R4\le|x|\le 4R}\v v)\right\|_2\\
&\qquad
+\left\|\lnr^{3} \Big[(\chi_{\frac R4\le|x|\le 4R}u)^2 \Bigr]\right\|_2+
\left\|\lnr^{3}\Big[ (\chi_{\frac R4\le|x|\le 4R}\v v)^2 \Big]\right\|_2\Big]\\
&\qquad +\sum\limits_{N>1}(NR)^{-10}N^4(\|u\|_{H^4}^2+\|\v v\|_{H^4}^2)\\
\le &\eta(T_0)\Big(\|\chi_{\frac R4\le|x|\le 4R}u\|_p+\|\chi_{\frac
R4\le|x|\le 4R}\v v\|_p\Big)+\frac CR.
\end{align*}
Collecting the estimates, we obtain
$$
A_R\lesssim \eta(T_0)\Big(\|\chi_{\frac R4\le|x|\le
4R}u\|_p +\|\chi_{\frac R4\le|x|\le 4R}\v v\|_p \Big) +\frac CR.
$$
Now denote
$$
a_m=\Big\|\chi_{2^{m-1}\le |x|\le 2^{m+1}}\left( \aligned
    & u\\
    & \v v
   \endaligned
  \right)
  \Big\|_p+\Big\|\chi_{2^{m-1}\le |x|\le 2^{m+1}}\left( \aligned
    &\nabla u\\
    &\nabla \v v
   \endaligned
  \right)
  \Big\|_p.
$$
Clearly by choosing $T_0$ sufficiently small, we have
\begin{equation}\label{J18_ee}
a_m\le \frac 18\big(a_{m-1}+a_m+a_{m+1}\big)+C \cdot 2^{-m}.
\end{equation}
Note that $a_m\lesssim 1$ for any $m$. Iterating \eqref{J18_ee}
gives us
$$
a_m\lesssim 2^{-m}.
$$
Therefore \eqref{J18_ee0a} is proved.

\texttt{Step 2.} We show
$$
\|x(1-\Delta)e^{-it\langle
\nabla\rangle}h(t)\|_{2+\delta}
\mbox{ is continuous in $t$.}
$$
We first prove that
\begin{equation}\label{J18_ee4}
\Big\|x\left( \aligned
    &u\\
    &\v v
   \endaligned
  \right)
  \Big\|_\infty\lesssim 1.
\end{equation}
This is equivalent to
$$
\Big\|\chi_{|x|\sim R}\left( \aligned
    &u\\
    &\v v
   \endaligned
  \right)
  \Big\|_\infty\lesssim \frac1R,\qquad \mbox{ for any }R\ge 100.
$$
From Step 1 and Sobolev embedding, we have
\begin{align*}
\Big\|\chi_{|x|\sim R}\left( \aligned
    &u\\
    &\v v
   \endaligned
  \right)
\Big\|_\infty
&\lesssim\Big\|\chi_{|x|\sim R}\left( \aligned
    &u\\
    &\v v
   \endaligned
  \right)
  \Big\|_p+\Big\|\nabla \Big[ \chi_{|x|\sim R}\left( \aligned
    &u\\
    &\v v
   \endaligned
  \right)\Big]
  \Big\|_p\\
&\lesssim \frac1R.
\end{align*}
Hence \eqref{J18_ee4} holds.

To continue we need a simple lemma.
\begin{lem}\label{lem-J18_ee5}
For any $s\ge 0$,
\begin{equation}\label{J18_ee5}
\|x\lnr^s(fg)\|_{2+\delta}\lesssim
\|xf\|_\infty\>\|g\|_{H^{s+3}}+\|xg\|_\infty\>\|f\|_{H^{s+3}}
+\|f\|_{H^{s+3}}\>\|g\|_{H^{s+3}}.
\end{equation}
\end{lem}
\begin{proof}
We write
\begin{align*}
\widehat{\lnr^s(fg)}(\xi)&=\langle\xi\rangle^{s}\int\chi_{\frac{\langle\eta\rangle}{\langle\xi-\eta\rangle}\le
1}\hat f(\xi-\eta)\hat
g(\eta)\,d\eta+\langle\xi\rangle^{s}\int\chi_{\frac{\langle\eta\rangle}{\langle\xi-\eta\rangle}>
1}\hat f(\xi-\eta)\hat g(\eta)\,d\eta\\
=&\langle\xi\rangle^{s}\int\chi_{\frac{\langle\xi-\eta\rangle}{\langle\eta\rangle}\le
1}\hat f(\eta)\hat
g(\xi-\eta)\,d\eta+\langle\xi\rangle^{s}\int\chi_{\frac{\langle\eta\rangle}{\langle\xi-\eta\rangle}>
1}\hat f(\xi-\eta)\hat g(\eta)\,d\eta.
\end{align*}
Differentiating  in $\xi$ gives
\begin{align}
\mathcal  F\Big( {x\lnr^s(fg)} \Big)(\xi)=&O(\langle\xi\rangle^{s-1})\int\chi_{\frac{\langle\xi-\eta\rangle}{\langle\eta\rangle}\le
1}\hat f(\eta)\hat
g(\xi-\eta)\,d\eta\label{J18_ee6a}\\
&+\langle\xi\rangle^{s}\int\partial_{\xi}\chi_{\frac{\langle\xi-\eta\rangle}{\langle\eta\rangle}\le
1}\hat f(\eta)\hat
g(\xi-\eta)\,d\eta\label{J18_ee6b}\\
&+\langle\xi\rangle^{s}\int\chi_{\frac{\langle\xi-\eta\rangle}{\langle\eta\rangle}\le
1}\hat f(\eta)\widehat
{xg}(\xi-\eta)\,d\eta\label{J18_ee6c}\\
&+\langle\xi\rangle^{s}\int\chi_{\frac{\langle\eta\rangle}{\langle\xi-\eta\rangle}>
1}\widehat {xf}(\xi-\eta) \hat g(\eta)\,d\eta\label{J18_ee6d}\\
&+\cdots, \notag
\end{align}
where $``\cdots"$ denote similar terms.

It is not difficult to show that
\begin{align*}
\|\mathcal {F}^{-1}(\eqref{J18_ee6a})\|_{2+\delta}+\|\mathcal
{F}^{-1}(\eqref{J18_ee6b})\|_{2+\delta}\lesssim
\|f\|_{H^{s+3}}\>\|g\|_{H^{s+3}}.
\end{align*}
We shall only estimate \eqref{J18_ee6c}. The estimate of
\eqref{J18_ee6d} is similar. By Lemma \ref{lem_rbj_1}, we have
\begin{align*}
\|\mathcal {F}^{-1}(\eqref{J18_ee6c})\|_{2+\delta}
&\lesssim
\left\|T_{\chi_{\frac{\langle\xi-\eta\rangle}{\langle\eta\rangle}\le
1}\>\langle\xi\rangle^s\langle\eta\rangle^{-(s+2)-}}\big(\lnr^{(s+2)+}f,xg\big)\right\|_{2+\delta}\\
&\lesssim \|\lnr^{(s+2)+}f\|_{2+\delta}\>\|xg\|_\infty\\
&\lesssim \|xg\|_\infty\>\|f\|_{H^{s+3}}.
\end{align*}
The lemma is proved.
\end{proof}

By \eqref{e92_537b}, observe that
\begin{align*}
(1-\Delta)e^{-it\lnr }h& =(1-\Delta)h_0+\int_0^t e^{-is\lnr }(1-\Delta) \Bigl(- \frac{\langle \nabla \rangle \nabla}{|\nabla|} \cdot (u \v v)\\
&\qquad +
\frac i2 |\nabla| ( u^2 +|\v v|^2) \Bigr) \,ds.
\end{align*}
By Lemma \ref{lem-J18_ee5} and \eqref{J18_ee4}, we have
\begin{align*}
&\left\|x\Big((1-\Delta)e^{-it\lnr} h(t)\Big)-x(1-\Delta)h_0\right\|_{2+\delta}\\
\lesssim&
|t|\|u\|_{L^\infty_tH^6}\>\|\v v\|_{L^\infty_tH^6}+\int_0^t \|x\lnr^4(u\v v)\|_{2+\delta}\,ds\\
&\qquad+\int_0^t \big(\|x\lnr^4(u^2)\|_{2+\delta}+\|x\lnr^4(|\v v|^2)\|_{2+\delta}\big)\,ds\\
\lesssim &
|t|+|t|\Big(\|xu\|_\infty\big(\|u\|_{H^7}+\|\v v\|_{H^7}\big)\\
&\qquad+\|x\v v\|_\infty\big(\|u\|_{H^7}+\|\v v\|_{H^7}\big)+\big(\|u\|_{H^7}+\|\v v\|_{H^7}\big)^2\Big)\\
\lesssim & |t|.
\end{align*}
Clearly this gives continuity in $t$.

\subsection{$H^{N^{\prime}}$ estimate of $h$}

By \eqref{J22_49},  we decompose $f$ as
\begin{align}
\hat f(t,\xi) & = \widehat{h_0}(\xi) +
\int_0^t \int e^{-is\phi_0} \frac {\xi}{|\xi|}  \langle \xi \rangle \chi_{|\xi-\eta| \le \langle s \rangle^{10\delta}}
\chi_{|\eta| \le \langle s \rangle^{10\delta}}
\widehat{\mathcal R f}(s,\xi-\eta) \widehat{\mathcal R f}(s,\eta) d\eta ds \label{JL4_1a} \\
& \qquad + \int_0^t \int e^{-is \phi_0} \frac {\xi}{|\xi|}  \langle \xi \rangle
\chi_{|\xi-\eta| > \langle s \rangle^{10\delta}}
\chi_{|\eta| \le \langle s \rangle^{10\delta}}
\widehat{\mathcal R f}(s,\xi-\eta) \widehat{\mathcal R f}(s,\eta) d\eta ds \label{JL4_1b} \\
& \qquad + \int_0^t \int e^{-is \phi_0} \frac {\xi}{|\xi|}
\langle \xi \rangle \chi_{|\eta|> \langle s \rangle^{10\delta}}
\widehat{\mathcal Rf}(s,\xi-\eta) \widehat{\mathcal Rf}(s,\eta) d\eta ds.  \label{JL4_1c}
\end{align}

For \eqref{JL4_1c}, we compute
\begin{align*}
\| \mathcal F^{-1}  &( \eqref{JL4_1c} ) \|_{H^{N^{\prime}}} \notag \\
& \lesssim \int_0^t \| P_{> \langle s \rangle^{10\delta}} \mathcal R h(s) \cdot \mathcal
R h(s) \|_{H^{N^{\prime}+1}} ds \notag \\
& \lesssim \int_0^t \Bigl( \| P_{>\langle s \rangle^{10\delta}} h(s) \|_{H^{N^{\prime}+1}}
\cdot \| \mathcal R h(s) \|_{\infty}
+ \| h (s) \|_{H^{N^{\prime}+1}} \| P_{>\langle s \rangle^{10\delta}} \mathcal R h(s) \|_{\infty}
\Bigr) ds \notag \\
& \lesssim \int_0^t \langle s \rangle^{-5\delta}
\cdot \| h(s) \|_{H^{N^{\prime} + \frac 32}} \cdot \| \langle \nabla \rangle
h(s) \|_{\frac 1 {\delta}} ds \notag \\
& \lesssim \int_0^t \langle s \rangle^{-5\delta+\delta - (1-2\delta)} ds
\| h\|_{X_t}^2 \lesssim \| h\|_{ X_t}^2.
\end{align*}
Here we used the fact $N^{\prime}=N-\frac 32$.

Similarly
\begin{align*}
\| \mathcal F^{-1} ( \eqref{JL4_1b} ) \|_{H^{N^{\prime}}} \lesssim \| h \|_{ X_t}^2.
\end{align*}

For \eqref{JL4_1a}, we use the identity
\begin{align*}
e^{-is \phi_0} = \frac {i}{\phi_0} \frac {\partial}{\partial s } \left( e^{-is \phi_0} \right)
\end{align*}
to integrate by parts in $s$, and this gives
\begin{align}
\eqref{JL4_1a} & = i \int \frac{e^{-is \phi_0}}{\phi_0} \frac {\xi}{|\xi|}  \langle \xi \rangle
\chi_{|\xi-\eta| \le \langle s \rangle^{10\delta}}
\chi_{|\eta| \le \langle s \rangle^{10\delta}}
\widehat{\mathcal R f}(s,\xi-\eta) \widehat{\mathcal R f}(s,\eta) d\eta  \Biggr|_{s=0}^{s=t}
\label{JL4_2a} \\
& \quad - \int_0^t \int \frac{i e^{-is\phi_0}} {\phi_0} \frac {\xi}{|\xi|}  \langle \xi \rangle
\partial_{s} \bigl( \chi_{|\xi-\eta| \le \langle s \rangle^{10\delta}}
\chi_{|\eta| \le \langle s \rangle^{10\delta}}  \bigr) \widehat{\mathcal R f}(s,\xi-\eta)
\widehat{\mathcal R f}(s,\eta) d\eta ds \label{JL4_2b} \\
& \quad - \int_0^t \int \frac{i e^{-is\phi_0}} {\phi_0} \frac {\xi}{|\xi|}
\langle \xi \rangle \chi_{|\xi-\eta| \le \langle s \rangle^{10\delta}}
\chi_{|\eta| \le \langle s \rangle^{10\delta}}
\partial_s ( \widehat{\mathcal R f}(s,\xi-\eta)
\widehat{\mathcal R f}(s,\eta) ) d\eta ds. \label{JL4_2c}
\end{align}

For \eqref{JL4_2a}, we have
\begin{align*}
\| \mathcal F^{-1} ( \eqref{JL4_2a} ) \|_{H^{N^{\prime}}}
& \lesssim \| T_{\frac 1 {\phi_0} } ( P_{\le 1} \mathcal R h_0, P_{\le 1} \mathcal R h_0 ) \|_{H^{N^{\prime}+1}} \notag \\
& \qquad + \| T_{\frac 1 {\phi_0}} ( P_{\le \langle t \rangle^{10\delta}} \mathcal R h(t), P_{\le \langle t \rangle^{10\delta}}
\mathcal R h(t) ) \|_{H^{N^{\prime}+1}} \notag \\
& \lesssim \| h_0\|_2^2 + \| P_{\le \langle t \rangle^{10\delta}} h(t) \|_{H^{N^{\prime}+4+\delta}} \cdot
\| \mathcal R h(t) \|_{\infty} \notag \\
& \lesssim \| h_0 \|_2^2 + \langle t \rangle^{40\delta-(1-2\delta)} \| h \|_{X_t}^2 \lesssim  \| h\|_{ X_t}^2.
\end{align*}

For \eqref{JL4_2b}, we note that
\begin{align*}
\partial_s ( \chi_{|\xi-\eta| \le \langle s \rangle^{10\delta}}
\chi_{|\eta| \le \langle s \rangle^{10\delta}} ) &=
\chi^{(1)}_{|\xi-\eta| \lesssim \langle s \rangle^{10\delta}} \cdot
\chi_{|\eta| \le \langle s \rangle^{10\delta}} \cdot \langle s
\rangle^{-1} \notag \\
& \quad+ \chi_{|\xi-\eta| \le \langle s \rangle^{10\delta}} \cdot
\chi^{(2)}_{|\eta| \lesssim \langle s \rangle^{10\delta}} \cdot \langle s
\rangle^{-1},
\end{align*}
where $\chi^{(1)}$, $\chi^{(2)}$ ar some modified cut-offs. Therefore
\begin{align*}
\| \mathcal F^{-1} ( \eqref{JL4_2b} ) \|_{H^{N^{\prime}}}
& \lesssim \int_0^t \langle s \rangle^{-1}
\cdot \| \langle \nabla \rangle^{N^{\prime}+4+\delta}
P_{\le \langle s \rangle^{10\delta}} h(s) \|_2 \cdot \| \mathcal R P_{\le \langle s \rangle^{10\delta}}
h(s) \|_{\infty} ds \notag \\
& \lesssim \int_0^t \langle s \rangle^{-1-(1-2\delta) +40\delta} ds \cdot \| h\|_{ X_t}^2 \notag \\
& \lesssim \| h \|^2_{ X_t}.
\end{align*}

For \eqref{JL4_2c}, we observe that (see \eqref{p10_J11_e2})
\begin{align*}
e^{is \langle \nabla \rangle} \mathcal F^{-1} \bigl( \partial_s ( \widehat{\mathcal R f}(s) ) \bigr)
= \mathcal R \langle \nabla \rangle ( \mathcal R h(s) \cdot \mathcal R h(s) ).
\end{align*}
Therefore
\begin{align*}
\| \mathcal F^{-1} ( \eqref{JL4_2c} ) \|_{H^{N^{\prime}}}
& \lesssim \int_0^t \langle s \rangle^{-2(1-2\delta) +50\delta} ds \cdot \| h \|_{ X_t}^3 \notag \\
& \lesssim \| h\|_{X_t}^3.
\end{align*}

\section{Estimates of the boundary term $g$} \label{sec_g100}
In this section we control the boundary term $g$ coming from integration by parts in the time variable $s$
(see \eqref{e27_111a}).

We have the following
\begin{prop}  \label{prop_e926_310}
\begin{align*}
\| \langle \tau \rangle (1-\Delta) e^{i \tau \jnab} g(\tau)\|_{L_{\tau}^\infty L_x^{\frac 1 {\delta}}([0,t])}
+ \|  x (1-\Delta) g (\tau)\|_{L_{\tau}^\infty L_x^{2+\delta}([0,t])} \lesssim \|h\|_{ X_t}^2.
\end{align*}
\end{prop}

By Proposition \ref{prop_e926_310} and Sobolev embedding, it is easy to show that
\begin{align*}
\| \langle \tau \rangle e^{i\tau \jnab} g(\tau) \|_{L_{\tau,x}^{\infty} ([0,t])}
\lesssim \| h\|_{X_t}^2.
\end{align*}

The rest of this section is devoted to the proof of Proposition \ref{prop_e926_310}.
We begin with a simple lemma.

\begin{lem} \label{lem-J15-1}
For any $1\le s^{\prime}\le 7$, $t\ge 0$, we have
\begin{align}
\| \jnab^{s^{\prime}} h(t) \|_{\frac{16}{s^{\prime}}} \lesssim \langle t \rangle^{-(1-\frac{s^{\prime}}8 -\delta)}
\| h \|_{X_t}. \label{JL10_e1a} \\
\end{align}
Similarly for any $1\le s^{\prime} \le 6$, $t\ge 0$, we have
\begin{align}
\| \jnab^{s^{\prime}} h(t) \|_{\frac {13} {s^{\prime}}} \lesssim \langle t\rangle^{-(1-\frac{s^{\prime}}{6.5})} \| h \|_{X_t}.
\label{JL10_e1b}
\end{align}
\end{lem}

\begin{proof}[Proof of Lemma \ref{lem-J15-1}]
Observe that by interpolation we have
\begin{align*}
\| \jnab^{s^{\prime}} P_{<1} h(t) \|_{\frac {16}{s^{\prime}}} \lesssim \| h(t) \|_{\frac {16}{s^{\prime}}} \lesssim \langle t \rangle^{-(1-\frac {s^{\prime}} 8)} \| h \|_{X_t}.
\end{align*}

On the other hand, for any dyadic $M\ge 1$,
\begin{align*}
\| \jnab^{s^{\prime}} P_M h(t) \|_{\frac{16}{s^{\prime}}} & \lesssim M^{-(1-\frac{s^{\prime}}8)}
\Big( M^8 \| P_M h(t) \|_2 \Big)^{\frac {s^{\prime}} 8} \Big( M \| P_M h(t) \|_{\infty} \Big)^{1 -\frac{s^{\prime}} 8} \notag \\
& \lesssim M^{-(1-\frac{s^{\prime}} 8)} \cdot \langle t \rangle^{-(1-\frac {s^{\prime}}8 - \delta)} \| h \|_{X_t}.
\end{align*}

Summing in $M$ gives \eqref{JL10_e1a}.

The estimate of \eqref{JL10_e1b} is similar except that we use $\|h(t) \|_{H^{6.5}} \lesssim 1$ for all $t\ge 0$.

\end{proof}

 We begin
with the estimate of $\| (1-\Delta) e^{it \jnab} g(t) \|_{\frac 1 {\delta}}$.  By
\eqref{p12_J11_e2}, Lemma \ref{lem_rbj_1}, Lemma \ref{lem_rbj_5} and Lemma \ref{lem-J15-1},  we have
\begin{align*}
\| (1-\Delta) e^{it \jnab} g(t) \|_{\frac 1 {\delta}}
& \lesssim \| T_{\frac{\langle \xi \rangle^3}{\phi_0}} (\mathcal R h(t), \mathcal R h(t) ) \|_{\frac 1 {\delta}}
\notag \\
& \lesssim \| \jnab^{5+\delta} \mathcal R h(t) \|_{\infty} \cdot \| \jnab \mathcal R h(t) \|_{\frac 1 {\delta}}
\notag \\
& \lesssim \; \| \jnab^6 h(t) \|_{\frac {13} 6} \cdot \| \jnab h(t) \|_{\frac 1 {\delta} } \notag \\
& \lesssim \frac 1 {\langle t \rangle} \| h \|_{X_{t}}^2.
\end{align*}

It remains to control $\| x (1-\Delta) g(t) \|_{2+\delta}$. By \eqref{p12_J11_e2}, we have
\begin{align*}
\| x(1-\Delta) g \|_{2+\delta} \lesssim \| x (1-\Delta) \mathcal R g_1 \|_{2+\delta}.
\end{align*}
Note that
\begin{align*}
\partial_{\xi} \left( \frac{\xi}{|\xi|}
\langle \xi \rangle^2 \widehat{g_1}(\xi) \right) \sim
\frac{\langle \xi \rangle^2}{|\xi|} \widehat{g_1}(\xi)
+ \frac {\xi}{|\xi|}
\langle \xi \rangle \widehat{g_1}(\xi) +
\frac{\xi}{|\xi|} \langle \xi \rangle^2
\widehat{x g_1}(\xi).
\end{align*}
Therefore by Lemma \ref{lem_J15_2},
\begin{align*}
& \| x (1-\Delta) \mathcal R g_1 \|_{2+\delta} \notag \\
\lesssim & \| |\nabla|^{-1} \langle \nabla \rangle^2 g_1\|_{2+\delta}
+ \| \jnab g_1 \|_{2+\delta} + \| \jnab^2 ( x g_1 ) \|_{2+\delta} \notag \\
\lesssim & \| g_1 \|_{H^2} + \| x \jnab^2 g_1 \|_2 + \| \jnab^2 (x g_1) \|_{2+\delta} \notag \\
\lesssim & \| g_1 \|_{H^2} + \| \jnab^{2+\delta} (x g_1) \|_2.
\end{align*}
It is easy to check that $\|g_1 \|_{H^2} \lesssim \| h \|_{X_t}^2$.  We only need to estimate
$ \jnab^{2+\delta} (x g_1)$. We decompose $g_1$ as
\begin{align}
\widehat{g_1}(t,\xi) & = \int e^{-it \phi_0} \frac{\langle \xi
\rangle}{\phi_0} \chi_{\frac{|\xi-\eta|}{\langle \eta \rangle} \le
1} \widehat{\mathcal R f}(t,\xi-\eta) \widehat{\mathcal R f}(t,\eta)
d\eta \label{JL7_e1a}\\
& \quad+ \int e^{-it \phi_0}
\frac{\langle \xi \rangle}{\phi_0}
\chi_{\frac{|\xi-\eta|}{\langle \eta \rangle} > 1}
\widehat{\mathcal R f}(t,\xi-\eta)
\widehat{\mathcal R f}(t,\eta) d\eta. \label{JL7_e1b}
\end{align}

We shall only estimate the contribution of \eqref{JL7_e1a}. The term
\eqref{JL7_e1b} can be dealt in the same way as \eqref{JL7_e1a} using the
change of variable $\eta \to \xi-\eta$.

Now we have
\begin{align}
 & \langle \xi \rangle^{2+\delta} \widehat{x g_1}(t,\xi) \notag \\
 = &\; (-it) \cdot
 \int \partial_{\xi} \phi_0 e^{-it \phi_0}
 \cdot \frac{\langle \xi \rangle^{3+\delta}}{\phi_0}
 \cdot \chi_{\frac{|\xi-\eta|}{\langle \eta \rangle} \le 1}
 \widehat {\mathcal R f}(t,\xi-\eta)
 \widehat{\mathcal R f}(t,\eta) d\eta \label{JL7_e2a} \\
 & \qquad + \int e^{-it \phi_0}
 \langle \xi \rangle^{2+\delta}
 \partial_{\xi} \biggl( \frac{\langle \xi \rangle}{\phi_0}
 \chi_{\frac{|\xi-\eta|}{\langle \eta \rangle} \le 1} \biggr)
 \widehat{\mathcal R f}(t,\xi-\eta) \widehat{\mathcal R f}(t,\eta) d\eta
 \label{JL7_e2b} \\
 & \qquad + \int e^{-it \phi_0} \frac{\langle \xi \rangle^{3+\delta}}{\phi_0}
 \chi_{\frac{|\xi-\eta|}{\langle \eta \rangle} \le 1}
 \partial_{\xi} ( \widehat{\mathcal R f}(t,\xi-\eta) ) \widehat{\mathcal R f}(t,\eta) d\eta
 \label{JL7_e2c} \\
 & \qquad + \cdots, \notag
 \end{align}
 where $\cdots$ denote similar terms.

By Lemma \ref{lem_rbj_1} and Lemma \ref{lem-J15-1},  we estimate \eqref{JL7_e2a} as
\begin{align*}
\| \mathcal F^{-1} ( \eqref{JL7_e2a} ) \|_2 &
\lesssim |t| \left\| T_{\frac{\langle \xi \rangle^{3+\delta}}{\phi_0}
\chi_{\frac{|\xi-\eta|}{\langle \eta \rangle} \le 1} \partial_{\xi} \phi_0}
(\mathcal R h(t), \mathcal R h(t) ) \right\|_2 \notag \\
& \lesssim \| \jnab^{5+2\delta} h(t) \|_{\frac{13}6} \cdot \| \jnab h(t) \|_{\frac{13}{0.5}} \notag\\
& \lesssim\; |t| \cdot \langle t \rangle^{-(1-\frac 6{6.5})} \cdot \langle t \rangle^{-\frac{6}{6.5}}
\| h\|_{X_t}^2
\notag \\
& \lesssim \| h\|_{X_t}^2.
\end{align*}

Similarly
\begin{align*}
\| \mathcal F^{-1} ( \eqref{JL7_e2b} ) \|_2 \lesssim \|h \|_{X_t}^2.
\end{align*}

For \eqref{JL7_e2c}, we note that by Lemma \ref{lem_rbj_3} and Lemma \ref{lem_J15_2},
\begin{align*}
 & \| \jnab^{2-20\delta} e^{it\jnab} \mathcal F^{-1} ( \partial_{\xi} (\widehat {\mathcal R f} ) ) \|_{2+2\delta}
 \notag \\
 \lesssim & \langle t \rangle^{\delta} \Bigl( \| \jnab^{2-19\delta} |\nabla|^{-1} f \|_{2+2\delta}
 + \| \jnab^{2-19\delta} \mathcal R (xf) \|_{2+2\delta} \Bigr)\notag \\
 \lesssim & \;  \langle t \rangle^{\delta} \Bigl( \| x (1-\Delta) f \|_{2+\delta} + \| f \|_{H^2} \Bigr).
 \end{align*}

Therefore
\begin{align*}
\| \mathcal F^{-1} ( \eqref{JL7_e2c} ) \|_{2}
&\lesssim \| h\|_{X_t} \cdot \langle t \rangle^{\delta} \cdot \| \jnab^{4+22\delta} h(t)
\|_{(\frac 12 -\frac 1{2+2\delta})^{-1}} \notag \\
& \lesssim \|h\|_{ X_t}^2.
\end{align*}

The proposition is proved.

\section{Reduction to low frequency}
In this section we control the high frequency part of the solution. The main result of
this section is
\begin{prop}
\begin{equation*}
\|e^{i\tau \lnr}f_{\text{cubic}} (\tau) \|_{\tilde X_t}\lesssim \|h\|_{X_t}^3+\|f_{\text{low}} (\tau)
\|_{L_{\tau}^{\infty} L_x^{2-\frac{\delta}{100}} ([0,t]) },
\end{equation*}
where
\begin{align}
\hat f_{\text{low}}(t,\xi) =&  \int_0^t \int e^{-is\phi} \frac{s
\>\partial_{\xi} \phi}{\phi_0(\xi,\eta)}\langle \xi
\rangle^{4+2\delta}\langle\eta\rangle \>
m_{\text{low}}(\xi,\eta,\sigma)\notag\\
&\quad \> \widehat{\mathcal R f}(s,\xi-\eta)\frac{\eta}{|\eta|}
\Bigl( \widehat{\mathcal Rf}(s,\eta-\sigma) \widehat{\mathcal R
f}(s, \sigma) \Bigr) d\sigma d\eta ds \label{sec6_J19_e1}
\end{align}
and
\begin{align*}
m_{\text{low}}(\xi,\eta,\sigma)  = \chi_{|\xi-\eta| \le \langle s
\rangle^{\delta_0}}\chi_{|\eta-\sigma| \le \langle s
\rangle^{\delta_0}} \chi_{|\sigma| \le \langle s
\rangle^{\delta_0}}.
\end{align*}
Here $\delta_0 =20\delta$.
\end{prop}

The rest of this section is devoted to the proof of this proposition.

\underline{Estimate of $\||\nabla|^\delta\lnr
(e^{it\lnr}f_{\text{cubic}})\|_\infty$ and $\|\lnr
(e^{it\lnr}f_{\text{cubic}})\|_{\frac 1 {\delta}} $.}

By using the dispersive inequality and noting that $f_{\text{cubic}}=const \cdot \mathcal R f_3$ (see \eqref{p12_J11_e1}), we have
\begin{align*}
&\||\nabla|^\delta\lnr (e^{it\lnr}f_{\text{cubic}}(t) )\|_\infty \notag \\
\lesssim
&\sum\limits_{M<1}M^\delta \|P_M e^{it\lnr}f_3 (t) \|_\infty+
\sum\limits_{M\ge1}M^{1+\delta} \|P_M e^{it\lnr}f_3(t) \|_\infty\\
\lesssim & \frac{1}{\langle t\rangle}\|f_3\|_1+\frac{1}{\langle
t\rangle}\sum\limits_{M\ge 1}M^{3+\delta}\|P_Mf_3\|_1\\
\lesssim &\frac{1}{\langle t\rangle}\|\lnr^{3+2\delta}f_3\|_1.
\end{align*}
Similarly
\begin{align*}
\|\lnr (e^{it\lnr}f_{\text{cubic}} (t) )\|_{\frac 1 {\delta}} \lesssim &\|\lnr
(e^{it\lnr}f_{3} (t) )\|_{\frac 1 {\delta}}  \\
\lesssim &\langle t\rangle^{-(1-2\delta)}\|\lnr^{3}f_3\|_{ (1-\delta)^{-1} }\\
\lesssim &\langle t\rangle^{-(1-2\delta )}\|\lnr^{3+2\delta}f_3\|_1.
\end{align*}
Since
$$
\|\lnr^{3+2\delta}f_3\|_1\lesssim \|\langle x\rangle
\big(\lnr^{3+2\delta}f_3\big)\|_{2-\frac{\delta}{100}},
$$
we obtain
\begin{align*}
&\||\nabla|^\delta\lnr (e^{it\lnr}f_{\text{cubic}} (t) )\|_\infty + \|\lnr
(e^{it\lnr}f_{\text{cubic}} (t) )\|_{\frac 1 {\delta}}  \notag \\
\lesssim &\|\langle
x\rangle(\lnr^{3+2\delta}f_3 (t) )\|_{2-\frac{\delta}{100}}.
\end{align*}

\underline{Estimate of $\|x (1-\Delta) f_{\text{cubic}}\|_{2+\delta}$.}

By Lemma \ref{lem_J15_2}, we have
\begin{align*}
&\| x (1-\Delta) f_{\text{cubic}} \|_{2+\delta} \notag \\
\lesssim &\; \| x \mathcal R \jnab^2 f_3 \|_{2+\delta} \notag \\
 \lesssim & \;\| |\nabla|^{-1} \jnab^2 f_3 \|_{2+\delta} + \| \jnab f_3 \|_{2+\delta}
+ \| \jnab^2 (x f_3 ) \|_{2+\delta} \notag \\
 \lesssim &\; \| |\nabla|^{-1} \jnab^{3+2\delta} f_3 \|_{2+\delta} +
\| \langle x \rangle \jnab^{3+2\delta} f_3 \|_{2-\frac{\delta}{100}} \notag \\
 \lesssim &\; \| \langle x \rangle \jnab^{3+2\delta} f_3 \|_{2 -\frac{\delta}{100}}.
\end{align*}

\underline{Estimate of $\|\langle x \rangle \lnr^{3+2\delta}f_3\|_{2-\frac{\delta}{100}}$.}

We shall only estimate $\| x  \lnr^{3+2\delta}f_3\|_{2-\frac{\delta}{100}}$. The estimate
of $\| \jnab^{3+2\delta} f_3 \|_{2-\frac {\delta}{100}}$ is simpler and omitted.

Observe that by \eqref{p12_J11_e1},
\begin{align*}
\mathcal F \left( { \lnr^{3+2\delta}f_3} \right)(\xi)
=&\int_0^t \int e^{-is\phi} \frac{1}{\phi_0(\xi,\eta)}\langle \xi
\rangle^{4+2\delta}\langle\eta\rangle \\
&\quad\widehat{\mathcal R f}(s,\xi-\eta)\frac{\eta}{|\eta|} \>
\Bigl( \widehat{\mathcal Rf}(s,\eta-\sigma) \widehat{\mathcal R
f}(s,\sigma) \Bigr) d\sigma d\eta ds.
\end{align*}
Differentiating in $\xi$ gives us
\begin{align}
&\mathcal F \left(  (-i){x \lnr^{3+2\delta}f_3 } \right) \notag \\
=&\;\partial_\xi\Bigl( \mathcal F \left(  { \lnr^{3+2\delta}f_3} \right) (\xi)  \Bigr)\notag\\
=&\;\int_0^t \int e^{-is\phi}\>(-is\partial_\xi \phi) \frac{1}{\phi_0(\xi,\eta)}\langle \xi
\rangle^{4+2\delta}\langle\eta\rangle \notag\\
&\qquad\widehat{\mathcal R f}(s,\xi-\eta)\frac{\eta}{|\eta|} \>
\Bigl( \widehat{\mathcal Rf}(s,\eta-\sigma) \widehat{\mathcal R
f}(s,\sigma) \Bigr) d\sigma d\eta ds\label{aa-J12-300a}\\
&\;+\int_0^t \int e^{-is\phi} \partial_\xi\Big(\frac{\langle \xi
\rangle^{4+2\delta}}{\phi_0(\xi,\eta)}\Big)\langle\eta\rangle \notag\\
&\qquad\widehat{\mathcal R f}(s,\xi-\eta)\frac{\eta}{|\eta|} \>
\Bigl( \widehat{\mathcal Rf}(s,\eta-\sigma) \widehat{\mathcal R
f}(s,\sigma) \Bigr) d\sigma d\eta ds\label{aa-J12-300b}\\
&\;+\int_0^t \int e^{-is\phi} \frac{\langle \xi
\rangle^{4+2\delta}}{\phi_0(\xi,\eta)}\langle\eta\rangle \notag\\
&\quad\partial_\xi\widehat{\mathcal R f}(s,\xi-\eta)\frac{\eta}{|\eta|} \>
\Bigl( \widehat{\mathcal Rf}(s,\eta-\sigma) \widehat{\mathcal R
f}(s,\sigma) \Bigr) d\sigma d\eta ds. \label{aa-J12-300c}
\end{align}
We first deal with \eqref{aa-J12-300a}. We have
\begin{align}
\eqref{aa-J12-300a}&=\int_0^t \int(-is)\> e^{-is\phi}\frac{\partial_\xi \phi}{\phi_0(\xi,\eta)}\langle \xi
\rangle^{4+2\delta}\langle\eta\rangle\chi_{|\xi-\eta|>\lsr} \notag\\
&\quad\quad\widehat{\mathcal R f}(s,\xi-\eta)\frac{\eta}{|\eta|} \>
\Bigl( \widehat{\mathcal Rf}(s,\eta-\sigma) \widehat{\mathcal R
f}(s,\sigma) \Bigr) d\sigma d\eta ds\label{aa-J12-301a}\\
&\quad+ \int_0^t \int  (-is)\>e^{-is\phi}\frac{\partial_\xi \phi}{\phi_0(\xi,\eta)}\langle \xi
\rangle^{4+2\delta}\langle\eta\rangle\chi_{|\xi-\eta|\le \lsr} \notag\\
&\quad\quad\widehat{\mathcal R f}(s,\xi-\eta)\frac{\eta}{|\eta|} \>
\Bigl( \widehat{\mathcal Rf}(s,\eta-\sigma) \widehat{\mathcal R
f}(s,\sigma) \Bigr) d\sigma d\eta ds.\label{aa-J12-301b}
\end{align}
For \eqref{aa-J12-301a}, we further decompose it as
\begin{align}
\eqref{aa-J12-301a}&=\int_0^t \int(-is)\> e^{-is\phi}\frac{\partial_\xi \phi}{\phi_0(\xi,\eta)}\langle \xi
\rangle^{4+2\delta}\langle\eta\rangle\chi_{\frac{\langle\eta\rangle}{\langle\xi-\eta\rangle}\le 1} \chi_{|\xi-\eta|>\lsr} \notag\\
&\quad\quad\widehat{\mathcal R f}(s,\xi-\eta)\frac{\eta}{|\eta|} \>
\Bigl( \widehat{\mathcal Rf}(s,\eta-\sigma) \widehat{\mathcal R
f}(s,\sigma) \Bigr) d\sigma d\eta ds\label{aa-J12-302a}\\
&\quad+ \int_0^t \int(-is)\> e^{-is\phi}\frac{\partial_\xi \phi}{\phi_0(\xi,\eta)}\langle \xi
\rangle^{4+2\delta}\langle\eta\rangle\chi_{\frac{\langle\eta\rangle}{\langle\xi-\eta\rangle}> 1} \chi_{|\xi-\eta|>\lsr} \notag\\
&\quad\quad\widehat{\mathcal R f}(s,\xi-\eta)\frac{\eta}{|\eta|} \>
\Bigl( \widehat{\mathcal Rf}(s,\eta-\sigma) \widehat{\mathcal R
f}(s,\sigma) \Bigr) d\sigma d\eta ds.\label{aa-J12-302b}
\end{align}

We estimate \eqref{aa-J12-302a} as
\begin{align}
&\left\|\mathcal F^{-1}\eqref{aa-J12-302a}\right\|_{2-\frac{\delta}{100}} \notag \\
&\lesssim \int_0^t s\>\Big\|e^{is\lnr}\left( T_{\frac{\partial_\xi \phi}{\phi_0(\xi,\eta)}\langle \xi
\rangle^{4+2\delta}\langle\eta\rangle\chi_{\frac{\langle\eta\rangle}{\langle\xi-\eta\rangle}\le 1} }\Big(\mathcal RP_{>\lsr}h, \mathcal R(\mathcal Rh \>\mathcal Rh)\Big)\right)\Big\|_{2-\frac{\delta}{100}}\,ds.
\end{align}

By Lemma \ref{lem_rbj_3}, the operator
$$
\|\lnr^{-\frac{\delta}{100}}e^{is\lnr}\|_{L^{2-\frac{\delta}{100}}_x\rightarrow L^{2-\frac{\delta}{100}}_x}\lesssim \langle s \rangle^{\frac{\delta}{100}}.
$$
Therefore by Lemma \ref{lem_rbj_1} and Lemma \ref{lem_rbj_5}, we have
\begin{align*}
&\Big\|e^{is\lnr}\left( T_{\frac{\partial_\xi \phi}{\phi_0(\xi,\eta)}\langle \xi
\rangle^{4+2\delta}\langle\eta\rangle\chi_{\frac{\langle\eta\rangle}{\langle\xi-\eta\rangle}\le 1} }\Big(\mathcal RP_{>\lsr}h, \mathcal R(\mathcal Rh \>\mathcal Rh)\Big)\right)\Big\|_{2-\frac{\delta}{100}}\\
=&\Big\|\lnr^{-\frac{\delta}{100}}e^{is\lnr}\left( T_{\frac{\partial_\xi \phi}{\phi_0(\xi,\eta)}\langle \xi
\rangle^{4+2\delta+\frac{\delta}{100}}\langle\eta\rangle\chi_{\frac{\langle\eta\rangle}{\langle\xi-\eta\rangle}\le 1} }\Big(\mathcal RP_{>\lsr}h, \mathcal R(\mathcal Rh \>\mathcal Rh)\Big)\right)\Big\|_{2-\frac{\delta}{100}}\\
\lesssim
& \langle s \rangle^{\frac{\delta}{100}}\Big\|
 T_{\frac{\partial_\xi \phi}{\phi_0(\xi,\eta)}\langle \xi
\rangle^{4+2\delta+\frac{\delta}{100}}\langle\eta\rangle
\chi_{\frac{\langle\eta\rangle}{\langle\xi-\eta\rangle}\le 1} }
\Big(\mathcal RP_{>\lsr}h, \mathcal R(\mathcal Rh \>\mathcal Rh)\Big)\Big\|_{2-\frac{\delta}{100}}\\
\lesssim & \langle s \rangle^{\frac{\delta}{100}}\Big\|T_{\frac{\partial_\xi \phi}{\phi_0(\xi,\eta)}\langle \xi
\rangle^{4+2\delta+\frac{\delta}{100}}\langle\eta\rangle\chi_{\frac{\langle\eta\rangle}{\langle\xi-\eta\rangle}\le 1} \langle\xi-\eta\rangle^{-(7+3\delta)}\langle\eta\rangle^{-1}}\Big(\lnr^{7+3\delta}\mathcal RP_{>\lsr}h, \\
&\quad\lnr\mathcal R(\mathcal Rh \>\mathcal Rh)\Big)\Big\|_{2-\frac{\delta}{100}}  \\
\lesssim & \langle s \rangle^{\frac{\delta}{100}}\|\lnr^{7+3\delta}
\mathcal RP_{>\lsr}
h\|_{(\frac1{2-\frac{\delta}{100}}-2\delta)^{-1}}\|\lnr\mathcal R(\mathcal Rh \>\mathcal Rh)\|_{\frac{1}{2\delta}}\\
\lesssim & \langle s \rangle^{\frac{\delta}{100}}\|\lnr^{7+7\delta} P_{>\lsr}h\|_{2}\|\lnr h \|_{\frac{1}{\delta}}^2\\
\lesssim & \langle s \rangle^{\frac{\delta}{100}}\langle s \rangle^{-\delta_0(N-7-7\delta)}
\langle s \rangle^{\delta}\>\langle s \rangle^{-2(1-2\delta)}\|h\|_{X_t}^3\\
\lesssim &\langle s \rangle^{-2+7\delta-\delta_0(N-7)}\|h\|_{X_t}^3 \notag \\
\lesssim &  \langle s \rangle^{-2-}\|h\|_{X_t}^3,
\end{align*}
where we have used the fact that $N>7$ and $7\delta<\delta_0(N-7)$.
This clearly implies that
\begin{align*}
\left\|\mathcal F^{-1}\eqref{aa-J12-302a}\right\|_{2-\frac{\delta}{100}}
&\lesssim \int_0^t\langle s \rangle^{-1-}\,ds \>\|h\|_{X_t}^3\\
&\lesssim \|h\|_{X_t}^3.
\end{align*}

Similarly
\begin{align*}
\left\|\mathcal F^{-1}\eqref{aa-J12-302b}\right\|_{2-\frac{\delta}{100}}
\lesssim& \int_0^t\langle s \rangle^{1+\frac{\delta}{100}}\|\lnr^{7+3\delta}
P_{>\lsr}(\mathcal Rh \>\mathcal Rh )\|_{(\frac1{2-\frac{\delta}{100}}-\delta)^{-1}}
\|\lnr h\|_{\frac{1}{\delta}}\,ds\\
\lesssim& \int_0^t\langle s \rangle^{-1-}\,ds \>\|h\|_{X_t}^3\\
\lesssim& \|h\|_{X_t}^3.
\end{align*}
Therefore
$$
\left\|\mathcal F^{-1}\eqref{aa-J12-301a}\right\|_{2-\frac{\delta}{100}}\lesssim \|h\|_{X_t}^3.
$$

For \eqref{aa-J12-301b}, we decompose it as
\begin{align}
\eqref{aa-J12-301b}
=&\int_0^t \int (-is)\>e^{-is\phi}\frac{\partial_\xi \phi}{\phi_0(\xi,\eta)}\langle \xi
\rangle^{4+2\delta}\langle\eta\rangle\chi_{|\xi-\eta|\le \lsr}
\chi_{|\eta-\sigma|\le \lsr}\chi_{|\sigma|\le \lsr}\notag\\
&\quad\quad\widehat{\mathcal R f}(s,\xi-\eta)\frac{\eta}{|\eta|} \>
\Bigl( \widehat{\mathcal Rf}(s,\eta-\sigma) \widehat{\mathcal R
f}(s,\sigma) \Bigr) d\sigma d\eta ds\label{aa-J13-1a}\\
&+\int_0^t \int  (-is)\>e^{-is\phi}\frac{\partial_\xi \phi}{\phi_0(\xi,\eta)}\langle \xi
\rangle^{4+2\delta}\langle\eta\rangle\chi_{|\xi-\eta|\le \lsr}
\chi_{|\eta-\sigma|\le \lsr}\chi_{|\sigma|> \lsr}\notag\\
&\quad\quad\widehat{\mathcal R f}(s,\xi-\eta)\frac{\eta}{|\eta|} \>
\Bigl( \widehat{\mathcal Rf}(s,\eta-\sigma) \widehat{\mathcal R
f}(s,\sigma) \Bigr) d\sigma d\eta ds\label{aa-J13-1b}\\
&+\int_0^t \int (-is)\>e^{-is\phi}\frac{\partial_\xi \phi}{\phi_0(\xi,\eta)}\langle \xi
\rangle^{4+2\delta}\langle\eta\rangle\chi_{|\xi-\eta|\le \lsr}
\chi_{|\eta-\sigma|> \lsr}\notag\\
&\quad\quad\widehat{\mathcal R f}(s,\xi-\eta)\frac{\eta}{|\eta|} \>
\Bigl( \widehat{\mathcal Rf}(s,\eta-\sigma) \widehat{\mathcal R
f}(s,\sigma) \Bigr) d\sigma d\eta ds.\label{aa-J13-1c}
\end{align}

The estimate of \eqref{aa-J13-1b} is similar to that in \eqref{aa-J12-302a}. We have
\begin{align*}
 &\left\|\mathcal F^{-1}\eqref{aa-J13-1b}\right\|_{2-\frac{\delta}{100}} \notag \\
\lesssim& \int_0^t\langle s \rangle^{1+\frac{\delta}{100}}\|\lnr h\|_{\frac{1}{\delta}}
\|\lnr^{7+3\delta}(P_{\le\lsr}\mathcal Rh \>P_{>\lsr}\mathcal Rh)\|_{(\frac1{2-\frac{\delta}{100}}-\delta)^{-1}}\,ds\\
\lesssim& \int_0^t\langle s \rangle^{1+\frac{\delta}{100}}\|\lnr h\|_{\frac{1}{\delta}}\Big(
\big\|\lnr^{7+3\delta}P_{\le\lsr}\mathcal Rh \big\|_{(\frac1{2-\frac{\delta}{100}}-2\delta)^{-1}}\|P_{>\lsr} h\|_{\frac{1}{\delta}}\\
&\quad+\big\|\lnr^{7+3\delta}P_{>\lsr}\mathcal Rh \big\|_{(\frac1{2-\frac{\delta}{100}}-2\delta)^{-1}}\|P_{\le\lsr} h\|_{\frac{1}{\delta}}\Big)\,ds\\
\lesssim& \int_0^t\langle s \rangle^{1+\frac{\delta}{100}}
\|\lnr h\|_{\frac{1}{\delta}}\Big(\big\|\lnr^{N}h \big\|_{2}\langle s \rangle^{-\delta_0}\|\lnr h\|_{\frac{1}{\delta}}\\
&\quad+\langle s \rangle^{-\delta_0(N-7-7\delta)}\big\|\lnr^N h\big\|_2\|\lnr h\|_{\frac{1}{\delta}}\Big)\,ds\\
\lesssim &\int_0^t\langle s \rangle^{1+\frac{\delta}{100}-(1-2\delta)}\big(\langle s\rangle^{-\delta_0}
+\langle s\rangle^{-\delta_0(N-7-7\delta)}\big)\langle s \rangle^\delta \langle
 s \rangle^{-(1-2\delta)}\,ds \|h\|_{X_t}^3\\
\lesssim& \int_0^t\langle s \rangle^{-1-}\,ds \>\|h\|_{X_t}^3\\
\lesssim& \|h\|_{X_t}^3.
\end{align*}

The estimate of \eqref{aa-J13-1c} is the same as \eqref{aa-J13-1b}, and we have
$$
\left\|\mathcal F^{-1}\eqref{aa-J13-1c}\right\|_{2-\frac{\delta}{100}}
\lesssim \|h\|_{X_t}^3.
$$

The piece \eqref{aa-J13-1a} is exactly in the form given by \eqref{sec6_J19_e1}.
 Hence we have finished the estimate of \eqref{aa-J12-301b} and consequently the estimate of \eqref{aa-J12-300a}.

We now estimate \eqref{aa-J12-300b}. Note that
\begin{align*}
\partial_\xi\Big(\frac{\langle \xi\rangle^{4+2\delta}}{\phi_0}\Big)&\sim \frac{\langle \xi\rangle^{3+2\delta}}{\phi_0}+\langle\xi\rangle^{4+2\delta}
\partial_\xi\Big(\frac{1}{\phi_0}\Big)\\
&\sim \langle\xi\rangle^{4+2\delta}\left[\frac{1}{\langle \xi\rangle}\>\frac{1}{\phi_0}+\partial_\xi\Big(\frac{1}{\phi_0}\Big)\right].
\end{align*}
By Lemma \ref{lem_rbj_5}, obviously
\begin{align*}
\left|\partial_\xi^\alpha\partial_\eta^\beta\left[\frac{1}{\langle \xi\rangle}\>\frac{1}{\phi_0}+\partial_\xi\Big(\frac{1}{\phi_0}\Big)\right]\right|
\lesssim_{\alpha,\beta} \min\{\langle\xi-\eta\rangle,\langle\eta\rangle\}, \quad \forall\, \xi,\eta\in\R^2.
\end{align*}
We then write
\begin{align*}
\partial_\xi\Big(\frac{\langle \xi\rangle^{4+2\delta}}{\phi_0}\Big)=&\chi_{\frac{\langle\xi-
\eta\rangle}{\langle\eta\rangle}\le 1}\partial_\xi\Big(\frac{\langle \xi\rangle^{4+2\delta}}{\phi_0}\Big)
+\chi_{\frac{\langle\xi-
\eta\rangle}{\langle\eta\rangle}> 1}\partial_\xi\Big(\frac{\langle \xi\rangle^{4+2\delta}}{\phi_0}\Big)\\
=:&\;\widetilde{m_1}(\xi,\eta)+\widetilde{m_2}(\xi,\eta).
\end{align*}
It is not difficult to check that the functions
\begin{align*}
&\langle \xi \rangle^{\delta} \tilde m_1(\xi,\eta) \langle \eta \rangle
\langle \eta \rangle^{-(6+3\delta)} \cdot \langle \eta \rangle^{-(1+\delta)} \cdot \langle \xi-\eta\rangle^{-1}, \notag \\
& \langle \xi \rangle^{\delta} \tilde m_2(\xi,\eta) \langle \eta \rangle \langle \xi-\eta\rangle^{-(6+3\delta)}
\cdot \langle \xi-\eta \rangle^{-(1+\delta)} \cdot \langle \eta \rangle^{-1},
\end{align*}
satisfy \eqref{rbj_1e1}. Therefore, by Lemma \ref{lem_rbj_1}, we have
\begin{align*}
&\left\|\mathcal F^{-1}\eqref{aa-J12-300b}\right\|_{2-\frac{\delta}{100}}\\
\lesssim&\int_0^t \langle s \rangle^{\frac{\delta}{100}}\Big\|
T_{\langle \xi \rangle^{\delta} \tilde m_1(\xi,\eta) \langle \eta \rangle
\langle \eta \rangle^{-(6+3\delta)} \cdot \langle \eta \rangle^{-(1+\delta)} \cdot \langle \xi-\eta\rangle^{-1} }
\Big(\lnr \mathcal Rh, \mathcal R\lnr^{7+4\delta}(\mathcal Rh \>\mathcal Rh)\Big)\Big\|_{2-\frac{\delta}{100}} ds\\
&\quad +\int_0^t \langle s \rangle^{\frac{\delta}{100}}\Big\|
T_{\langle \xi \rangle^{\delta} \tilde m_2(\xi,\eta) \langle \eta \rangle \langle \xi-\eta\rangle^{-(6+3\delta)}
\cdot \langle \xi-\eta \rangle^{-(1+\delta)} \cdot \langle \eta \rangle^{-1}}
\Big(\lnr^{7+4\delta} \mathcal Rh, \mathcal R\lnr(\mathcal Rh \>\mathcal Rh)\Big)\Big\|_{2-\frac{\delta}{100}} ds\\
\lesssim& \int_0^t\langle s \rangle^{\frac{\delta}{100}}\|\lnr h\|_{\frac{1}{\delta}}\|\lnr^{7+4\delta}
(\mathcal Rh \>\mathcal Rh)\|_{(\frac1{2-\frac{\delta}{100}}-\delta)^{-1}}\,ds\\
&\quad+\int_0^t\langle s \rangle^{\frac{\delta}{100}}\|\lnr^{7+4\delta} h\|_{(\frac1{2-\frac{\delta}{100}}-2\delta)^{-1}}\|\lnr(\mathcal Rh \>\mathcal Rh)\|_{\frac{1}{2\delta}}\,ds\\
\lesssim&\int_0^t \langle s\rangle^{-1-}\,ds\>\|h\|_{X_t}^3
\lesssim \|h\|_{X_t}^3.
\end{align*}
Finally we estimate \eqref{aa-J12-300c}. We decompose it as
\begin{align}
\eqref{aa-J12-300c}=&\int_0^t \int e^{-is\phi} \frac{\langle \xi
\rangle^{4+2\delta}}{\phi_0(\xi,\eta)}\langle\eta\rangle \chi_{|\xi-\eta|\le \lsr}\notag\\
&\quad\partial_\xi\widehat{\mathcal R f}(s,\xi-\eta)\frac{\eta}{|\eta|} \>
\Bigl( \widehat{\mathcal Rf}(s,\eta-\sigma) \widehat{\mathcal R
f}(s,\sigma) \Bigr) d\sigma d\eta ds\label{J15_e1a}\\
&+\int_0^t \int e^{-is\phi} \frac{\langle \xi
\rangle^{4+2\delta}}{\phi_0(\xi,\eta)}\langle\eta\rangle \chi_{|\xi-\eta|> \lsr}\notag\\
&\quad\partial_\xi\widehat{\mathcal R f}(s,\xi-\eta)\frac{\eta}{|\eta|} \>
\Bigl( \widehat{\mathcal Rf}(s,\eta-\sigma) \widehat{\mathcal R
f}(s,\sigma) \Bigr) d\sigma d\eta ds.\label{J15_e1b}
\end{align}

For \eqref{J15_e1a}, we note that by Lemma \ref{lem_rbj_5} the function
$$
\widetilde{m}(\xi,\eta)=\frac{\langle \xi
\rangle^{4+3\delta}}{\phi_0(\xi,\eta)}\langle\eta\rangle \chi_{|\xi-\eta|\le \lsr}\>\langle\xi-\eta\rangle^{-(6+14\delta)}\langle\eta\rangle^{-(6+14\delta)}
$$
satisfies \eqref{rbj_1e1}. Therefore, by Lemma \ref{lem_rbj_3} and Lemma \ref{lem_rbj_1}, we have
\begin{align*}
&\left\|\mathcal F^{-1}\eqref{J15_e1a}\right\|_{2-\frac{\delta}{100}}\\
\lesssim&\int_0^t \langle s \rangle^{\frac{\delta}{100}}\Big\|
T_{\widetilde{m}(\xi,\eta)}\Big(\lnr^{6+4\delta} P_{\le \lsr}
e^{is\lnr}\mathcal F^{-1}\big(\partial_\xi(\widehat{\mathcal Rf})\big),\\
&\qquad\lnr^{6+4\delta}\mathcal R(\mathcal Rh \>\mathcal Rh)\Big)\Big\|_{2-\frac{\delta}{100}}ds \\
\lesssim& \int_0^t\langle s \rangle^{\frac{\delta}{100}}\left\|\lnr^{6+4\delta} P_{\le \lsr}e^{is\lnr}
\left(\mathcal F^{-1}\big(\partial_\xi(\widehat{\mathcal Rf})\big)\right)\right\|_{(\frac1{2-\frac{\delta}{100}}-2\delta)^{-1}}\\
&\qquad\cdot\big\|\lnr^{6+4\delta}(\mathcal Rh \>\mathcal Rh)\big\|_{\frac{1}{2\delta}}\,ds.
\end{align*}

To continue we need a lemma.

\begin{lem}\label{lem-J19-1}
For any dyadic $M\ge 1$, and $2+\delta<p<\infty$, we have
$$
\left\|P_{<M} e^{it\lnr}\mathcal {F}^{-1}\big(\partial_\xi(\mathcal
{R}f)\big)\right\|_p\lesssim M^{1+\frac{2}{2+\delta}-\frac
4p}\langle t\rangle^{1-\frac2p}\|\langle x\rangle f\|_{2+\delta}.
$$
\end{lem}
\begin{proof}[Proof of Lemma \ref{lem-J19-1}]
By Lemma \ref{lem_rbj_3} and Lemma \ref{lem_J15_2}, we have
\begin{align*}
\left\|P_{<M} e^{it\lnr}\mathcal {F}^{-1}\big(\partial_\xi(\mathcal
{R}f)\big)\right\|_p &\lesssim M^{1-\frac
2p}\langle t\rangle^{1-\frac2p}\big(\||\nabla|^{-1}f\|_p+\|P_{<M}(xf)\|_p\big)\\
&\lesssim M^{1-\frac
2p}\langle t\rangle^{1-\frac2p}\big(\|\langle x\rangle
f\|_{2+\delta}+M^{2(\frac 1{2+\delta}-\frac
1p)}\|xf\|_{2+\delta}\big)\\
&\lesssim M^{1+\frac{2}{2+\delta}-\frac 4p}\langle
t\rangle^{1-\frac2p}\|\langle x\rangle f\|_{2+\delta}.
\end{align*}
\end{proof}

By Lemma \ref{lem-J19-1}, we have
\begin{align*}
&\left\|\lnr^{6+4\delta} P_{\le \lsr}e^{is\lnr}\left(\mathcal F^{-1}
\big(\partial_\xi(\widehat{\mathcal Rf})\big)\right)\right\|_{(\frac1{2-\frac{\delta}{100}}-2\delta)^{-1}}\\
\lesssim & \; \langle s \rangle^{\delta_0(6+4\delta)} \langle s \rangle^{ \delta_0 ( 1+\frac 2{2+\delta}
-4(\frac 1 {2-\frac{\delta}{100}} -2\delta) )} \notag \\
& \qquad \cdot \langle s \rangle^{1-2(\frac 1{2-\frac{\delta}{100}} -2\delta)} \| \langle x \rangle f \|_{2+\delta}
\notag \\
\lesssim & \langle s \rangle^{7\delta_0+4\delta} \| \langle x \rangle f \|_{2+\delta}.
\end{align*}

By Sobolev embedding and Lemma \ref{lem-J15-1},
\begin{align*}
&\big\|\lnr^{6+4\delta}(\mathcal Rh \>\mathcal Rh)\big\|_{\frac{1}{2\delta}}\\
\lesssim&\|\lnr^{6+4\delta}h\|_{\frac1\delta}\|h\|_{\frac1\delta}\\
\lesssim&\|\lnr^7h\|_{\frac{16}{7}}\|h\|_{\frac1\delta}\\
\lesssim & \langle s \rangle^{-\frac18}\langle s \rangle^{-(1-2\delta)}
\langle s \rangle^{\delta} \|h\|_{X_t} ^2=\langle s \rangle^{-\frac98+3\delta}\|h\|_{X_t}^2.
\end{align*}
Therefore
\begin{align*}
\left\|\mathcal F^{-1}\eqref{J15_e1a}\right\|_{2-\frac{\delta}{100}}
\lesssim&\int_0^t \langle s \rangle^{\frac{\delta}{100}+4\delta+7\delta_0-\frac98+3\delta}\,ds\>\|h\|_{X_t}^3\\
\lesssim& \|h\|_{X_t}^3.
\end{align*}
For \eqref{J15_e1b}, we decompose it as
\begin{align}
\eqref{J15_e1b}=&\int_0^t \int e^{-is\phi} \frac{\langle \xi
\rangle^{4+2\delta}}{\phi_0(\xi,\eta)}\langle\eta\rangle \chi_{|\xi-\eta|> \lsr}\chi_{\frac{\langle\xi-\eta\rangle}{\langle\eta\rangle}\le 1}\notag\\
&\quad\partial_\xi\widehat{\mathcal R f}(s,\xi-\eta)\frac{\eta}{|\eta|} \>
\Bigl( \widehat{\mathcal Rf}(s,\eta-\sigma) \widehat{\mathcal R
f}(s,\sigma) \Bigr) d\sigma d\eta ds\label{J15_e1ba}\\
&+\int_0^t \int e^{-is\phi} \frac{\langle \xi
\rangle^{4+2\delta}}{\phi_0(\xi,\eta)}\langle\eta\rangle \chi_{|\xi-\eta|> \lsr}\chi_{\frac{\langle\xi-\eta\rangle}{\langle\eta\rangle}> 1}\notag\\
&\quad\partial_\xi\widehat{\mathcal R f}(s,\xi-\eta)\frac{\eta}{|\eta|} \>
\Bigl( \widehat{\mathcal Rf}(s,\eta-\sigma) \widehat{\mathcal R
f}(s,\sigma) \Bigr) d\sigma d\eta ds. \label{J15_e1bb}
\end{align}

For \eqref{J15_e1ba}, we note that by Lemma \ref{lem_rbj_5} the function
$$
\widetilde{m}(\xi,\eta)=\frac{\langle \xi
\rangle^{4+3\delta}}{\phi_0(\xi,\eta)}\langle\eta\rangle \chi_{|\xi-\eta|> \lsr}
\chi_{\frac{\langle\xi-\eta\rangle}{\langle\eta\rangle}\le 1}\>
\langle\xi-\eta\rangle^{-2+10\delta}\langle\eta\rangle^{-(6+14\delta)}
$$
satisfies \eqref{rbj_1e1}. Therefore by Lemma \ref{lem_rbj_3} and Lemma \ref{lem_rbj_1}, we have
\begin{align*}
&\left\|\mathcal F^{-1}\eqref{J15_e1ba}\right\|_{2-\frac{\delta}{100}}\\
\lesssim&\int_0^t \langle s \rangle^{\frac{\delta}{100}}\Big\| T_{\widetilde{m}(\xi,\eta)}\Big(\lnr^{2-10\delta}e^{is\lnr}\left(\mathcal F^{-1}\big(\partial_\xi(\widehat{\mathcal Rf})\big)\right),\\
&\qquad\lnr^{6+14\delta}\mathcal R(\mathcal Rh \>\mathcal Rh)\Big)\Big\|_{2-\frac{\delta}{100}} ds\\
\lesssim& \int_0^t\langle s \rangle^{\frac{\delta}{100}}\left\|\lnr^{2-10\delta}e^{is\lnr}\left(\mathcal F^{-1}\big(\partial_\xi(\widehat{\mathcal Rf})\big)\right)\right\|_{(\frac1{2-\frac{\delta}{100}}-2\delta)^{-1}}\\
&\qquad\cdot\big\|\lnr^{6+14\delta}(\mathcal Rh \>\mathcal Rh)\big\|_{\frac{1}{2\delta}}\,ds.
\end{align*}
By Lemma \ref{lem_rbj_3} and Lemma \ref{lem_J15_2}, we have
\begin{align*}
&\left\|\lnr^{2-10\delta}e^{is\lnr}
\left(\mathcal F^{-1}\big(\partial_\xi(\widehat{\mathcal Rf})\big)
\right)\right\|_{(\frac1{2-\frac{\delta}{100}}-2\delta)^{-1}}\\
\lesssim&\langle s \rangle^{4\delta}
\Big[\|\lnr^{2-6\delta}|\nabla|^{-1}f\|_{(\frac1{2-\frac{\delta}{100}}-2\delta)^{-1}}
+ \|\lnr^{2-6\delta}\mathcal (xf)\|_{(\frac1{2-\frac{\delta}{100}}-2\delta)^{-1}} \Big]\\
\lesssim &\langle s \rangle^{4\delta}\|\langle x\rangle\lnr^2f\|_{2+\delta}.
\end{align*}
On the other hand by Lemma \ref{lem-J15-1},
\begin{align*}
\big\|\lnr^{6+14\delta}(\mathcal Rh \>\mathcal Rh)\big\|_{\frac{1}{2\delta}}
\lesssim&\|\lnr^{6+14\delta}h\|_{\frac1\delta}\|h\|_{\frac1\delta}\\
\lesssim&\|\lnr^7h\|_{\frac{16}{7}}\|h\|_{\frac1\delta}\\
\lesssim & \langle s \rangle^{-\frac18}\langle s \rangle^{-(1-2\delta)}
\langle s \rangle^{\delta} \|h\|_{X_t}^2 \notag \\
=& \; \langle s \rangle^{-\frac98+3\delta}\|h\|_{X_t}^2.
\end{align*}
Therefore
\begin{align*}
\left\|\mathcal F^{-1}\eqref{J15_e1ba}\right\|_{2-\frac{\delta}{100}}
\lesssim&\int_0^t \langle s \rangle^{\frac{\delta}{100}+4\delta-\frac98+3\delta}\,ds\>\|h\|_{X_t}^3\\
\lesssim& \;  \|h\|_{X_t}^3.
\end{align*}
For \eqref{J15_e1bb}, we use the identity
$$
\partial_\xi\big(\widehat{\mathcal R f}(s,\xi-\eta)\big)=-\partial_\eta\big(\widehat{\mathcal R f}(s,\xi-\eta)\big)
$$
to integrate by parts in $\eta$. This gives
\begin{align}
\eqref{J15_e1bb}=&\int_0^t\int (-is\partial_\eta \phi) e^{-is\phi}\frac{\langle\xi\rangle^{4+2\delta}}{\phi_0}\langle\eta\rangle\chi_{|\xi-\eta|>\lsr}\chi_{
\frac{\langle\xi-\eta\rangle}{\langle\eta\rangle}>1}\notag\\
&\qquad \widehat{\mathcal Rf}(s,\xi-\eta)\>\frac{\eta}{|\eta|}\Bigl( \widehat{\mathcal Rf}(s,\eta-\sigma) \widehat{\mathcal R
f}(s,\sigma) \Bigr) d\sigma d\eta ds\label{J15_e2a}\\
& +\int_0^t\int e^{-is\phi}\langle\xi\rangle^{4+2\delta}\partial_\eta\Big(\frac{1}{\phi_0}\langle\eta\rangle\chi_{|\xi-\eta|>\lsr}\chi_{
\frac{\langle\xi-\eta\rangle}{\langle\eta\rangle}>1}\Big)\notag\\
&\qquad \widehat{\mathcal Rf}(s,\xi-\eta)\>\frac{\eta}{|\eta|}\Bigl( \widehat{\mathcal Rf}(s,\eta-\sigma) \widehat{\mathcal R
f}(s,\sigma) \Bigr) d\sigma d\eta ds\label{J15_e2b}\\
&+\int_0^t\int e^{-is\phi}\frac{\langle\xi\rangle^{4+2\delta}}{\phi_0}\langle\eta\rangle\chi_{|\xi-\eta|>\lsr}\chi_{
\frac{\langle\xi-\eta\rangle}{\langle\eta\rangle}>1}\notag\\
&\qquad \widehat{\mathcal Rf}(s,\xi-\eta)\>O\Bigl(\frac{1}{|\eta|}\Bigr)\Bigl( \widehat{\mathcal Rf}(s,\eta-\sigma) \widehat{\mathcal R
f}(s,\sigma) \Bigr) d\sigma d\eta ds\label{J15_e2c}\\
&+\int_0^t\int e^{-is\phi}\frac{\langle\xi\rangle^{4+2\delta}}{\phi_0}\langle\eta\rangle\chi_{|\xi-\eta|>\lsr}\chi_{
\frac{\langle\xi-\eta\rangle}{\langle\eta\rangle}>1}\notag\\
&\qquad \widehat{\mathcal Rf}(s,\xi-\eta)\>\frac{\eta}{|\eta|}\partial_\eta\Bigl(\widehat{\mathcal Rf}(s,\eta-\sigma) \widehat{\mathcal R
f}(s,\sigma) \Bigr) d\sigma d\eta ds.\label{J15_e2d}
\end{align}
The estimate of \eqref{J15_e2a} is exactly the same as that of \eqref{aa-J12-301a}.
The only change is that $\partial_\xi\phi$ is now replace by
$\partial_\eta\phi$. But in the estimates there only the boundedness of $\partial_\xi \phi$
 (and its derivatives) are used. Therefore we have
$$
\big\|\mathcal F^{-1}(\eqref{J15_e2a})\big\|_{2-\frac{\delta}{100}}\lesssim \|h\|_{X_t}^3.
$$

The estimate of \eqref{J15_e2b} is similar to the estimate of \eqref{aa-J12-300b}, and we have
$$
\big\|\mathcal F^{-1}(\eqref{J15_e2b})\big\|_{2-\frac{\delta}{100}}\lesssim \|h\|_{X_t}^3.
$$
For \eqref{J15_e2c}, we can decompose
$$
O(\frac{1}{|\eta|})=O(\frac{1}{|\eta|})\chi_{|\eta|<1}+O(\frac{1}{|\eta|})\chi_{|\eta|\ge1}.
$$
The piece corresponding to $O(\frac{1}{|\eta|})\chi_{|\eta|\ge1}$ is estimated in the same way as in \eqref{J15_e2b}. For the low frequency piece, we note that the function
$$
\widetilde{m}(\xi,\eta)=\frac{\langle\xi\rangle^{4+3\delta}}{\phi_0}\langle\eta\rangle\chi_{|\xi-\eta|>\lsr}\chi_{
\frac{\langle\xi-\eta\rangle}{\langle\eta\rangle}>1}\chi_{|\eta|<1}\cdot\langle\xi-\eta\rangle^{-(5+4\delta)}
$$
satisfies \eqref{rbj_1e1}. Therefore by Lemma \ref{lem_rbj_3} and Lemma \ref{lem_rbj_1}, we have
\begin{align*}
&\left\|\mathcal F^{-1}\eqref{J15_e2c}\right\|_{2-\frac{\delta}{100}}\\
\lesssim&\|h\|_{X_t}^3+
\int_0^t \langle s \rangle^{\frac{\delta}{100}}\Big\| T_{\widetilde{m}(\xi,\eta)}\Big(\lnr^{5+4\delta} P_{\gtrsim \lsr}\mathcal Rh,|\nabla|^{-1}(\mathcal Rh,\mathcal Rh)\Big)\Big\|_{2-\frac{\delta}{100}}\\
\lesssim&\|h\|_{X_t}^3+ \int_0^t\langle s \rangle^{\frac{\delta}{100}}\left\|\lnr^{5+4\delta} P_{\gtrsim \lsr}e^{is\lnr}h\right\|_2\\
&\qquad\cdot\big\||\nabla|^{-1}(\mathcal Rh \>\mathcal Rh)\big\|_{(\frac1{2-\frac{\delta}{100}}-\frac12)^{-1}}\,ds\\
\lesssim&\|h\|_{X_t}^3+ \int_0^t\langle s \rangle^{\frac{\delta}{100}-\delta_0}\|h\|_{H^{N^{\prime}}}\>\|\mathcal Rh\>\mathcal Rh\|_{2-\frac{\delta}{100}}\,ds\\
\lesssim&\|h\|_{X_t}^3+ \int_0^t\langle s \rangle^{\frac{\delta}{100}-\delta_0}\langle s \rangle^{-(1-2\delta)}\,ds\>\|h\|_{X_t}^3\\
\lesssim&\|h\|_{X_t}^3.
\end{align*}
Finally, we deal with \eqref{J15_e2d}. We decompose it further as
\begin{align}
\eqref{J15_e2d}=&\int_0^t\int e^{-is\phi}\frac{\langle\xi\rangle^{4+2\delta}}{\phi_0}\langle\eta\rangle\chi_{|\xi-\eta|>\lsr}\>\chi_{
\frac{\langle\xi-\eta\rangle}{\langle\eta\rangle}>1}\notag\\
&\qquad \widehat{\mathcal Rf}(s,\xi-\eta)\>\frac{\eta}{|\eta|}\partial_\eta\Bigl(\chi_{\frac{\langle\sigma\rangle}{\langle\eta-\sigma\rangle}\le 1}\widehat{\mathcal Rf}(s,\eta-\sigma) \widehat{\mathcal R
f}(s,\sigma) \Bigr) d\sigma d\eta ds\label{J15_e3a}\\
&+\int_0^t\int e^{-is\phi}\frac{\langle\xi\rangle^{4+2\delta}}{\phi_0}\langle\eta\rangle\chi_{|\xi-\eta|>\lsr}\chi_{
\frac{\langle\xi-\eta\rangle}{\langle\eta\rangle}>1}\notag\\
&\qquad \widehat{\mathcal Rf}(s,\xi-\eta)\>\frac{\eta}{|\eta|}\partial_\eta\Bigl(\chi_{\frac{\langle\sigma\rangle}{\langle\eta-\sigma\rangle}> 1}\widehat{\mathcal Rf}(s,\eta-\sigma) \widehat{\mathcal R
f}(s,\sigma) \Bigr) d\sigma d\eta ds.\label{J15_e3b}
\end{align}
We only need to estimate \eqref{J15_e3a}. The piece \eqref{J15_e3b} can be estimated similarly after the change of variable $\sigma\mapsto \eta-\sigma$. Now
\begin{align}
\eqref{J15_e3a}=&\int_0^t\int e^{-is\phi}\frac{\langle\xi\rangle^{4+2\delta}}{\phi_0}\langle\eta\rangle\chi_{|\xi-\eta|>\lsr}\>\chi_{
\frac{\langle\xi-\eta\rangle}{\langle\eta\rangle}>1}\notag\\
&\qquad \widehat{\mathcal Rf}(s,\xi-\eta)\>\frac{\eta}{|\eta|}\Bigl(\partial_\eta\Big(\chi_{\frac{\langle\sigma\rangle}{\langle\eta-\sigma\rangle}\le 1}\Big)\widehat{\mathcal Rf}(s,\eta-\sigma) \widehat{\mathcal R
f}(s,\sigma) \Bigr) d\sigma d\eta ds\label{J15_e4a}\\
&+\int_0^t\int e^{-is\phi}\frac{\langle\xi\rangle^{4+2\delta}}{\phi_0}\langle\eta\rangle\chi_{|\xi-\eta|>\lsr}\chi_{
\frac{\langle\xi-\eta\rangle}{\langle\eta\rangle}>1}\notag\\
&\qquad \widehat{\mathcal Rf}(s,\xi-\eta)\>\frac{\eta}{|\eta|}\Bigl(\chi_{\frac{\langle\sigma\rangle}{\langle\eta-\sigma\rangle}\le 1}\partial_\eta\big[\widehat{\mathcal Rf}(s,\eta-\sigma)\big] \widehat{\mathcal R
f}(s,\sigma) \Bigr) d\sigma d\eta ds.\label{J15_e4b}
\end{align}

We first deal with \eqref{J15_e4a}. Note that the function
$$
\widetilde{m}(\xi,\eta)=\frac{\langle\xi\rangle^{4+3\delta}}{\phi_0}\langle\eta\rangle\>\chi_{|\xi-\eta|>\lsr}\>\chi_{
\frac{\langle\xi-\eta\rangle}{\langle\eta\rangle}>1}\>\langle\xi-\eta\rangle^{-(7+4\delta)}\langle\eta\rangle^{-1}
$$
satisfies \eqref{rbj_1e1}. Therefore by Lemma \ref{lem_rbj_3} and Lemma \ref{lem_rbj_1}, we have
\begin{align*}
&\left\|\mathcal F^{-1}\eqref{J15_e4a}\right\|_{2-\frac{\delta}{100}}\\
\lesssim&\int_0^t \langle s \rangle^{\frac{\delta}{100}}\Big
\| T_{\widetilde{m}(\xi,\eta)}\Big(\lnr^{7+4\delta} \mathcal Rh,\lnr\mathcal RT_{\partial_\eta
\big(\chi_{\frac{\langle\sigma\rangle}{\langle\eta-\sigma\rangle}\le 1}\big)}(\mathcal Rh,\mathcal Rh)\Big)\Big\|_{2-\frac{\delta}{100}}\,ds\\
\lesssim&
\int_0^t\langle s \rangle^{\frac{\delta}{100}}
\|\lnr^{7+4\delta}h\|_{(\frac1{2-\frac{\delta}{100}}-2\delta)^{-1}}\>
\big\|\lnr T_{\partial_\eta\big(\chi_{\frac{\langle\sigma\rangle}{\langle\eta-\sigma\rangle}\le 1}\big)}
(\mathcal Rh,\mathcal Rh)\big\|_{\frac1{2\delta}}\,ds.
\end{align*}
Now note that
\begin{align*}
&\big\|\lnr T_{\partial_\eta\big(\chi_{\frac{\langle\sigma\rangle}{\langle\eta-\sigma\rangle}\le 1}\big)}
(\mathcal Rh,\mathcal Rh)\big\|_{\frac1{2\delta}}\\
\lesssim &
\big\| T_{\partial_\eta\big(\chi_{\frac{\langle\sigma\rangle}{\langle\eta-\sigma\rangle}\le 1}\big)}
(\mathcal Rh,\mathcal Rh)\big\|_{\frac1{2\delta}}+\big\|\nabla T_{\partial_\eta
\big(\chi_{\frac{\langle\sigma\rangle}{\langle\eta-\sigma\rangle}\le 1}\big)}(\mathcal Rh,\mathcal Rh)\big\|_{\frac1{2\delta}}\\
\lesssim &
\big\| T_{\partial_\eta\big(\chi_{\frac{\langle\sigma\rangle}{\langle\eta-\sigma\rangle}\le 1}\big)}
(\mathcal Rh,\mathcal Rh)\big\|_{\frac1{2\delta}}+\big\| T_{\partial_\eta
\big(\chi_{\frac{\langle\sigma\rangle}{\langle\eta-\sigma\rangle}\le 1}\big)}(\nabla\mathcal Rh,\mathcal Rh)\big\|_{\frac1{2\delta}}\\
&\quad+\big\| T_{\partial_\eta\big(\chi_{\frac{\langle\sigma\rangle}
{\langle\eta-\sigma\rangle}\le 1}\big)}(\mathcal Rh,\nabla\mathcal Rh)\big\|_{\frac1{2\delta}}.
\end{align*}
It is not difficult to check that
$$
\left|\partial_\eta^\alpha\partial_\sigma^\beta\left(\partial_\eta\Big(\chi_{\frac{\langle\sigma\rangle}{\langle\eta-\sigma\rangle}\le 1}\Big)\right)\right|\lesssim \big(\langle\eta\rangle+\langle\sigma\rangle\big)^{-(|\alpha|+|\beta|)}.
$$
Therefore, $\partial_\eta\Big(\chi_{\frac{\langle\sigma\rangle}{\langle\eta-\sigma\rangle}\le 1}\Big)$ is a standard Coifman-Meyer multiplier, and we have
$$
\big\|\lnr T_{\partial_\eta\big(\chi_{\frac{\langle\sigma\rangle}{\langle\eta-\sigma\rangle}\le 1}\big)}
(\mathcal Rh,\mathcal Rh)\big\|_{\frac1{2\delta}}\lesssim \|\lnr h\|_{\frac 1\delta}^2\lesssim \langle
 s \rangle^{-2(1-2\delta)}\|h\|_{X_t}^2.
$$
Hence,
\begin{align*}
\left\|\mathcal F^{-1}\eqref{J15_e4a}\right\|_{2-\frac{\delta}{100}}
\lesssim\int_0^t \langle s \rangle^{\frac{\delta}{100}+\delta}\langle s \rangle^{-2(1-2\delta)}\,ds\>\|h\|_{X_t}^3
\lesssim\|h\|_{X_t}^3.
\end{align*}
It remains to estimate \eqref{J15_e4b}. Note that the function
$$
\widetilde{m}(\xi,\eta)=\frac{\langle\xi\rangle^{4+2\delta}}{\phi_0}\langle\eta\rangle\>\chi_{|\xi-\eta|>\lsr}\>\chi_{
\frac{\langle\xi-\eta\rangle}{\langle\eta\rangle}>1}\>
\langle\xi-\eta\rangle^{-(6+15\delta)}\langle\eta\rangle^{-2+10\delta}
$$
satisfies \eqref{rbj_1e1}. Therefore by Lemma \ref{lem_rbj_3} and Lemma \ref{lem_rbj_1}, we have
\begin{align*}
&\left\|\mathcal F^{-1}\eqref{J15_e4b}\right\|_{2-\frac{\delta}{100}}\\
\lesssim&\int_0^t \langle s \rangle^{\frac{\delta}{100}}\Big\| T_{\widetilde{m}(\xi,\eta)}\Big(\lnr^{6+15\delta} \mathcal Rh,\lnr^{2-10\delta}\mathcal RT_{\chi_{\frac{\langle\sigma\rangle}{\langle\eta-\sigma\rangle}\le 1}}\\
&\qquad\left(e^{is\lnr}\left(\mathcal F^{-1}\big(\partial_\eta(\widehat{\mathcal Rf})\big)\right),\mathcal Rh\right)\Big)\Big\|_{2-\frac{\delta}{100}}\,ds\\
\lesssim&\int_0^t\langle s \rangle^{\frac{\delta}{100}}\|\lnr^{6+15\delta}h\|_{\frac1{\delta}}\\
&\qquad\>\big\|\lnr^{2-10\delta} T_{\chi_{\frac{\langle\sigma\rangle}{\langle\eta-\sigma\rangle}\le 1}}
\left(e^{is\lnr}\mathcal F^{-1}\big(\partial_\eta(\widehat{\mathcal Rf})\big),
\mathcal Rh\right)\big\|_{(\frac1{2-\frac{\delta}{100}}-\delta)^{-1}}\,ds.
\end{align*}

Now we make a Littlewood-Paley decomposition and write
\begin{align}
&\big\|\lnr^{2-10\delta} T_{\chi_{\frac{\langle\sigma\rangle}{\langle\eta-\sigma\rangle}\le 1}}\left(e^{is\lnr}\left(\mathcal F^{-1}\big(\partial_\eta(\widehat{\mathcal Rf})\big)\right),\mathcal Rh\right)\big\|_{(\frac1{2-\frac{\delta}{100}}-\delta)^{-1}}\notag\\
\lesssim &\big\|\lnr^{2-10\delta}
T_{\chi_{\frac{\langle\sigma\rangle}{\langle\eta-\sigma\rangle}\le 1}}\left(P_{<1}e^{is\lnr}\left(\mathcal F^{-1}\big(\partial_\eta(\widehat{\mathcal Rf})\big)\right),\mathcal Rh\right)\big\|_{(\frac1{2-\frac{\delta}{100}}-\delta)^{-1}}\label{J15_e5a}\\
&+\quad\!\!\sum\limits_{M\ge 1}\big\|\lnr^{2-10\delta}
 T_{\chi_{\frac{\langle\sigma\rangle}{\langle\eta-\sigma\rangle}\le 1}}
 \left(P_{M}e^{is\lnr}\left(\mathcal F^{-1}
 \big(\partial_\eta(\widehat{\mathcal Rf})\big)\right),
 \mathcal Rh\right)\big\|_{(\frac1{2-\frac{\delta}{100}}-\delta)^{-1}}.\label{J15_e5b}
\end{align}
For the low frequency piece \eqref{J15_e5a}, we note that by the cut-off
$\chi_{\frac{\langle\sigma\rangle}{\langle\eta-\sigma\rangle}\le 1}$ and $P_{<1}$,
$$
\langle \eta\rangle\le \langle \sigma\rangle+\langle\eta-\sigma\rangle\lesssim \langle\eta-\sigma\rangle\lesssim 1.
$$
Therefore, using the fact that $\chi_{\frac{\langle\sigma\rangle}{\langle\eta-\sigma\rangle}\le 1}$ is a Coifman-Meyer multiplier, we have
\begin{align*}
\eqref{J15_e5a}\lesssim&\big\| P_{\lesssim 1}
T_{\chi_{\frac{\langle\sigma\rangle}{\langle\eta-\sigma\rangle}\le 1}}\left(P_{<1}e^{is\lnr}
\left(\mathcal F^{-1}\big(\partial_\eta(\widehat{\mathcal Rf})\big)\right),\mathcal Rh\right)\big\|_{(\frac1{2-\frac{\delta}{100}}-\delta)^{-1}}\\
\lesssim&\big\|P_{<1}e^{is\lnr}\mathcal F^{-1}\big(\partial_\eta(\widehat{\mathcal Rf})\big)\big\|_{(\frac1{2-\frac{\delta}{100}}-2\delta)^{-1}}\|h\|_{\frac1\delta}\\
\lesssim&\langle s \rangle^{4\delta}\|\langle x\rangle f \|_{2+\delta}\langle s
\rangle^{-1+2\delta}\|h\|_{X_t},
\end{align*}
where in the last inequality, we have used Lemma \ref{lem-J19-1}.
Hence,
\begin{align*}
\eqref{J15_e5a}\lesssim\langle s \rangle^{-1+6\delta}\|h\|_{X_t}^2.
\end{align*}
For \eqref{J15_e5b}, thanks to the localization $P_M$ and $\chi_{\frac{\langle \sigma \rangle}{\langle \eta -\sigma\rangle} \le 1}$,
it follows easily that
$$
\langle \eta\rangle\le \langle \sigma\rangle+\langle\eta-\sigma\rangle\lesssim \langle\eta-\sigma\rangle\lesssim M.
$$
Therefore by Lemma \ref{lem-J19-1},
\begin{align*}
\eqref{J15_e5b}\lesssim&\sum\limits_{M\ge 1}M^{2-10\delta}\big\|
 T_{\chi_{\frac{\langle\sigma\rangle}{\langle\eta-\sigma\rangle}\le 1}}
 \left(P_{M}e^{is\lnr}\left(\mathcal F^{-1}\big(\partial_\eta(\widehat{\mathcal Rf})\big)\right),\mathcal Rh\right)\big\|_{(\frac1{2-\frac{\delta}{100}}-\delta)^{-1}}\\
\lesssim &\sum\limits_{M\ge 1}M^{2-10\delta}
\big\|P_{M}e^{is\lnr}\left(\mathcal F^{-1}\big(\partial_\eta(\widehat{\mathcal Rf})\big)\right)\big\|_{(\frac1{2-\frac{\delta}{100}}-2\delta)^{-1}}\|h\|_{\frac1\delta}\\
\lesssim &\sum\limits_{M\ge 1}M^{2-10\delta}\langle s \rangle^{4\delta}M^{4\delta}\left[\big\|P_{M}|\nabla|^{-1}f
\big\|_{(\frac1{2-\frac{\delta}{100}}-2\delta)^{-1}}+\|P_{M}(xf)\|_{(\frac1{2-\frac{\delta}{100}}-2\delta)^{-1}}\right]
\|h\|_{\frac1\delta}\\
\lesssim &\sum\limits_{M\ge 1}M^{-2\delta}\langle s \rangle^{-1+6\delta}
\left[\big\| \jnab^2 f\big\|_2+\|\lnr^2(xf)\|_{2+\delta}\right]
\|h\|_{X_t}\\
\lesssim&\langle s \rangle^{-1+6\delta}\|h\|_{X_t}^2.
\end{align*}

Collecting the estimates and using Lemma \ref{lem-J15-1}, we obtain
\begin{align*}
&\left\|\mathcal F^{-1}\eqref{J15_e4b}\right\|_{2-\frac{\delta}{100}}\\
\lesssim&\int_0^t \langle s \rangle^{\frac{\delta}{100}}\|\lnr^{6+15\delta}h\|_{\frac1\delta}\langle s
\rangle^{-1+6\delta}\,ds\>\|h\|_{X_t}^2\\
\lesssim&\int_0^t \langle s \rangle^{-1+\frac{\delta}{100}+6\delta}\|\lnr^7h\|_{\frac{16}{7}}\,ds\>\|h\|_{X_t}^2\\
\lesssim&\int_0^t \langle s \rangle^{-1+\frac{\delta}{100}+7\delta} \langle s\rangle^{-\frac18}\,ds\>
\|h\|_{X_t}^3\\
\lesssim &\|h\|_{X_t}^3.
\end{align*}

\section{Control of cubic interactions: the low frequency piece}
In the previous section, we controlled the high frequency part of
the cubic interaction term. In this section, we analyze in detail
the low frequency piece.  The main result of this section is the
following

\begin{prop}\label{prop-J19-prop1}
We have
$$
\|f_{\text{low}} (\tau) \|_{L_{\tau}^\infty L_x^{2-\frac{\delta}{100}} ([0,t])}\lesssim
\|h\|_{X_t}^3+\|h\|_{X_t}^4,
$$
where
\begin{align}
\hat f_{\text{low}}(t,\xi) =&  \int_0^t \int e^{-is\phi} \frac{s
(\partial_{\xi} \phi)}{\phi_0(\xi,\eta)}\langle \xi
\rangle^{4+2\delta}\langle\eta\rangle \>
m_{\text{low}}(\xi,\eta,\sigma)\notag\\
&\quad\frac{\eta}{|\eta|} \> \widehat{\mathcal R f}(s,\xi- \eta)
 \widehat{\mathcal Rf}(s,\eta-\sigma) \widehat{\mathcal R
f}(s,\sigma)  d\sigma d\eta ds \label{J19_e1}
\end{align}
and
\begin{align} \label{J19_e1aa}
m_{\text{low}}(\xi,\eta,\sigma)  = \chi_{|\xi-\eta| \le \langle s
\rangle^{\delta_0}}
\chi_{|\eta-\sigma| \le \langle s
\rangle^{\delta_0}} \chi_{|\sigma| \le \langle s
\rangle^{\delta_0}}.
\end{align}
\end{prop}
The rest of this section is devoted to the proof of this
proposition. The analysis will depend on the explicit form of the
phase function $\phi(\xi,\eta,\sigma)$. We discuss several cases.

\texttt{Case 1:}
\begin{align}
\phi(\xi,\eta,\sigma) = \langle \xi \rangle - \langle \xi -\eta
\rangle + \langle \eta -\sigma \rangle -\langle \sigma \rangle.
\label{e926_509}
\end{align}

By Lemma \ref{lem_phase}, we have
\begin{align*}
\partial_{\xi} \phi = Q_1(\xi,\eta) Q_2(\eta,\sigma) \partial_{\sigma} \phi,
\end{align*}
where
\begin{align}
&|\partial_{\xi}^{\alpha} \partial_{\eta}^{\beta} Q_1(\xi,\eta) | \lesssim_{\alpha,\beta} 1, \notag \\
&|\partial_{\eta}^{\alpha} \partial_{\sigma}^{\beta} Q_2(\eta,\sigma) | \lesssim_{\alpha,\beta}
\; \langle |\eta| +|\sigma| \rangle^3.  \label{J19_e1a}
\end{align}

We now write
\begin{align}
s \partial_{\xi} \phi e^{-is \phi} = i Q_1(\xi,\eta)
Q_2(\eta,\sigma) \partial_{\sigma}\Bigl(e^{-is \phi} \Bigr).
\label{J19_e2}
\end{align}

Plugging \eqref{J19_e2} into \eqref{J19_e1} and integrating by parts
in $\sigma$, we then obtain
\begin{align}
 & \hat f_{\text{low}}(t,\xi) \notag\\
 = & -i\int_0^t \int
 e^{-is\phi}  \frac{Q_1(\xi,\eta)}{\phi_0(\xi,\eta)}\partial_{\sigma} \Bigl( Q_2(\eta,\sigma) \chi_{|\eta-\sigma| \le \langle s
\rangle^{\delta_0}} \chi_{|\sigma| \le \langle s
\rangle^{\delta_0}} \Bigr)\notag \\
 & \qquad \cdot \langle \xi
\rangle^{4+2\delta}\langle\eta\rangle \> \chi_{|\xi-\eta| \le
\langle s \rangle^{\delta_0}} \widehat{\mathcal
Rf}(s,\xi-\eta) \frac{\eta}{|\eta|}
 \Bigl( \widehat{\mathcal R f}(s,\eta-\sigma) \widehat{\mathcal Rf}(s,\sigma) \Bigr) d\sigma d\eta ds \label{J19_e2a} \\
 & \quad - i\int_0^t \int e^{-is\phi}  \frac{Q_1(\xi,\eta)}{\phi_0(\xi,\eta)} Q_2(\eta,\sigma) \chi_{|\eta-\sigma| \le \langle s
\rangle^{\delta_0}} \chi_{|\sigma| \le \langle s
\rangle^{\delta_0}} \chi_{|\xi-\eta| \le
\langle s \rangle^{\delta_0}}\notag \\
 & \qquad \cdot \langle \xi
\rangle^{4+2\delta}\langle\eta\rangle \> \frac{\eta}{|\eta|} \widehat{\mathcal
Rf}(s,\xi-\eta)
 \bigl( \partial_{\sigma}\widehat{\mathcal R f}(s,\eta-\sigma)  \bigr)\>\widehat{\mathcal Rf}(s,\sigma)  d\sigma d\eta ds
 \label{J19_e2b}\\
& \quad - i \int_0^t \int e^{-is\phi} \frac{Q_1(\xi,\eta)}{\phi_0(\xi,\eta)}
Q_2(\eta,\sigma) \chi_{|\eta-\sigma| \le \langle s
\rangle^{\delta_0}} \chi_{|\sigma| \le \langle s
\rangle^{\delta_0}}\chi_{|\xi-\eta| \le
\langle s \rangle^{\delta_0}} \notag \\
 & \qquad \cdot \langle \xi
\rangle^{4+2\delta}\langle\eta\rangle \> \frac{\eta}{|\eta|} \widehat{\mathcal
Rf}(s,\xi-\eta)
  \widehat{\mathcal R f}(s,\eta-\sigma) \>\partial_{\sigma}\widehat{\mathcal Rf}(s,\sigma)
 d\sigma d\eta ds.
 \label{J19_e2c}
 \end{align}

We first estimate \eqref{J19_e2a}. By Lemma \ref{lem_rbj_3}, we have
\begin{align}
\|&\mathcal
{F}^{-1}(\eqref{J19_e2a})\|_{2-\frac{\delta}{100}}\lesssim \int^t_0
\langle
s\rangle^{\frac{\delta}{100}+\delta_0\frac{\delta}{100}+\delta_0(4+2\delta)} \notag \\
&\cdot\left\|T_{\frac{Q_1(\xi,\eta)}{\phi_0(\xi,\eta)}\>\langle
\eta\rangle}\Big( P_{\le \langle s \rangle^{\delta_0}}\mathcal
{R}h,\mathcal {R}T_{\partial_{\sigma}\big( Q_2(\eta,\sigma)
\chi_{|\eta-\sigma| \le \langle s \rangle^{\delta_0}}
\chi_{|\sigma| \le \langle s \rangle^{\delta_0}} \big)} (\mathcal
{R}h,\mathcal {R}h) \Big)\right\|_{2-\frac{\delta}{100}} ds. \label{J19_e2a_ta}
\end{align}
By \eqref{J19_e1a} and Lemma \ref{lem_rbj_5}, it is easy to check
that the functions
\begin{align*}
\widetilde{m_1}(\xi,\eta)&=\frac{Q_1(\xi,\eta)}{\phi_0(\xi,\eta)}
 \langle \eta\rangle
\langle\xi-\eta\rangle^{-2-\frac{\delta}{200}}\langle\eta\rangle^{-2-\frac{\delta}{200}};\\
\widetilde{m_2}(\eta,\sigma)&=\partial_{\sigma}\Bigl( Q_2(\eta,\sigma)
\chi_{|\eta-\sigma| \le \langle s \rangle^{\delta_0}}
\chi_{|\sigma| \le \langle s \rangle^{\delta_0}} \Bigr)
\langle\eta-\sigma\rangle^{-4-\frac{\delta}{200}}\langle\sigma\rangle^{-4-\frac{\delta}{200}}
\end{align*}
satisfy \eqref{rbj_1e1}. By Lemma \ref{lem_rbj_1}, we have
\begin{align*}
&\|T_{\frac{Q_1(\xi,\eta)}{\phi_0(\xi,\eta)}\>\langle
\eta\rangle}\big(P_{\le \langle s \rangle^{\delta_0}}\mathcal
{R}h,\mathcal {R}T_{\partial_{\sigma}\big( Q_2(\eta,\sigma)
\chi_{|\eta-\sigma| \le \langle s \rangle^{\delta_0}}
\chi_{|\sigma| \le \langle s \rangle^{\delta_0}} \big)} (\mathcal
{R}h,\mathcal {R}h)\big)\|_{2-\frac{\delta}{100}}\\
=& \|T_{\widetilde{m_1}(\xi,\eta)}\big(P_{\le \langle s
\rangle^{\delta_0}}\lnr^{2+\frac{\delta}{200}}\mathcal {R}h,P_{\lesssim
\langle s \rangle^{\delta_0}}\mathcal
{R}\lnr^{2+\frac{\delta}{200}}\\
&\quad T_{\widetilde{m_2}(\eta,\sigma)} (P_{\le \langle s
\rangle^{\delta_0}}\lnr^{4+\frac{\delta}{200}}\mathcal {R}h,P_{\lesssim
\langle s
\rangle^{\delta_0}}\lnr^{4+\frac{\delta}{200}}\mathcal {R}h)\big)\|_{2-\frac{\delta}{100}}\\
\lesssim &\|P_{\le \langle s
\rangle^{\delta_0}}\lnr^{2+\frac{\delta}{200}}\mathcal
{R}h\|_{(\frac1{2-\frac{\delta}{100}}-2\delta)^{-1}}\langle s
\rangle^{(2+\frac{\delta}{200})\delta_0}\>\|\lnr^{4+\frac{\delta}{200}}P_{\le
\lsr}h\|_{\frac1{\delta}}^2\\
\lesssim & \langle s \rangle^{(2+\frac{\delta}{200})\delta_0+ 2(3+\frac{\delta}{200})\delta_0 -2(1-2\delta)}
\| h \|_{X_t}^3 \notag \\
\lesssim & \langle s \rangle^{(8+\frac{\delta}{100} ) \delta_0 -2 +5\delta} \| h \|_{X_t}^3.
\end{align*}
Plugging the above estimate into \eqref{J19_e2a_ta}, we obtain
\begin{align*}
\|\mathcal
{F}^{-1}(\eqref{J19_e2a})\|_{2-\frac{\delta}{100}}&\lesssim \int_0^t
\langle s
\rangle^{ (12+\frac{\delta}{50} +2\delta) \delta_0 + (5+\frac 1{100} )\delta-2} ds
\|h\|_{X_t}^3\\
&\lesssim \|h\|_{X_t}^3.
\end{align*}
The estimate of \eqref{J19_e2b} is similar. By Lemma \ref{lem-J19-1}, we have
for some $\tilde m_3(\eta,\sigma)$ similar
to $\tilde m_2(\eta,\sigma)$,
\begin{align*}
& \|\mathcal
{F}^{-1}(\eqref{J19_e2b})\|_{2-\frac{\delta}{100}} \notag \\
\lesssim& \int^t_0
\langle
s\rangle^{\frac{\delta}{100}+
\delta_0\frac{\delta}{100}+\delta_0(4+2\delta)}\|T_{\widetilde{m_1}(\xi,\eta)}\big(P_{\le
\langle s \rangle^{\delta_0}}\lnr^{2+\frac{\delta}{200}}\mathcal
{R}h,P_{\lesssim \langle s \rangle^{\delta_0}}\\
&\quad\mathcal
{R}\lnr^{2+\frac{\delta}{200}}T_{\widetilde{m_3}(\eta,\sigma)}
(P_{\le \langle s
\rangle^{\delta_0}}\lnr^{4+\frac{\delta}{200}}e^{is\lnr}{\mathcal F}^{-1}\big(\partial_\sigma(\widehat{\mathcal
{R}f})\big),\\
&\quad P_{\le \langle s
\rangle^{\delta_0}}\lnr^{4+\frac{\delta}{200}}\mathcal
{R}h)\big)\|_{2-\frac{\delta}{100}}\,ds\\
\lesssim& \int_0^t \langle s
\rangle^{\frac{\delta}{100}+\delta_0\frac{\delta}{100}+(4+2\delta)\delta_0}
\|\lnr^{2+\frac{\delta}{200}}P_{\le
\lsr}h\|_{\frac1{\delta}}\\
&\quad\langle s \rangle^{(2+\frac{\delta}{200})\delta_0}\langle s
\rangle^{(4+\frac{\delta}{200})\delta_0}\|P_{\le \langle s
\rangle^{\delta_0}}e^{is\lnr}{\mathcal F}^{-1}\big(\partial_\sigma(\widehat{\mathcal
{R}f}\big) \big)\|_{(\frac1{2-\frac{\delta}{100}}-2\delta)^{-1}}\\
&\quad\|\lnr^{4+\frac{\delta}{200}}P_{\le
\lsr}h\|_{\frac1{\delta}}\,ds\\
\lesssim & \int_0^t\langle s
\rangle^{\frac{\delta}{100}+\delta_0(4+2\delta+\frac{\delta}{100})}
\langle s
\rangle^{    \delta_0(4+\frac{\delta}{100} ) -2(1-2\delta)}\langle s
\rangle^{ \delta_0 (6+\frac{\delta}{100} )}
\\
&\quad \langle s
\rangle^{[1+\frac2{2+\delta}-4(\frac{1}{2-\frac{\delta}{100}}-2\delta)]\delta_0+1-2(\frac{1}{2-\frac{\delta}{100}}-2\delta)}\,ds
\cdot \|h\|_{X_t}^3\\
\lesssim &\int_0^t\langle s\rangle^{-1-}\,ds\>\|h\|_{X_t}^3\lesssim
\|h\|_{X_t}^3.
\end{align*}
Similarly,
$$
\|\mathcal
{F}^{-1}(\eqref{J19_e2c})\|_{2-\frac{\delta}{100}}\lesssim\|h\|_{X_t}^3.
$$
This concludes Case 1.

\texttt{Case 2}:
\begin{align*}
\phi(\xi,\eta,\sigma) = \langle \xi \rangle - \langle \xi-\eta
\rangle -\langle \eta-\sigma\rangle +\langle \sigma \rangle.
\end{align*}
This is exactly the same as  Case 1 after the change of variable
$\sigma \to \eta -\sigma$.

\texttt{Case 3}:
\begin{align}
\phi(\xi,\eta, \sigma) = \langle \xi \rangle + \langle \xi-\eta
\rangle -\langle \eta-\sigma \rangle -\langle \sigma \rangle.
\label{J19_e4}
\end{align}

For this case, we will have to exploit some delicate cancelations of
the phases. Let $N_1=4$. We now introduce several frequency cut-offs and
write \eqref{J19_e1} as
\begin{align*}
\eqref{J19_e1}=& \sum_{i=1}^4 \int_0^t \int e^{-is\phi} \frac{s \>\partial_{\xi}
\phi}{\phi_0(\xi,\eta)}\langle \xi
\rangle^{4+2\delta}\langle\eta\rangle m_i(\xi,\eta,\sigma,s)\>
\notag\\
&\quad\frac{\eta}{|\eta|} \> \widehat{\mathcal R f}(s,\xi-\eta)
\Bigl( \widehat{\mathcal Rf}(s,\eta-\sigma) \widehat{\mathcal R
f}(s, \sigma) \Bigr) d\sigma d\eta ds\\
=:&\sum\limits_{i=1}^4 I_i,
\end{align*}
where
\begin{align*}
m_1(\xi,\eta,\sigma,s)&=\chi_{|\xi-\eta|\le \lsr}
\chi_{|\eta-\sigma|\le \lsr}\chi_{|\sigma|\le \lsr}\chi_{|\eta|\le
\langle s\rangle^{-\delta_0}}\chi_{|\xi|\le \langle
s\rangle^{-\frac{\delta_0}{N_1}}};\\
m_2(\xi,\eta,\sigma,s)&=\chi_{|\xi-\eta|\le \lsr}
\chi_{|\eta-\sigma|\le \lsr}\chi_{|\sigma|\le \lsr}\chi_{|\eta|\le
\langle s\rangle^{-\delta_0}}\chi_{|\xi|> \langle
s\rangle^{-\frac{\delta_0}{N_1}}}\chi_{|\sigma|\le 2\langle
s\rangle^{-{\delta_0}}};\\
m_3(\xi,\eta,\sigma,s)&=\chi_{|\xi-\eta|\le \lsr}
\chi_{|\eta-\sigma|\le \lsr}\chi_{|\sigma|\le \lsr}\chi_{|\eta|\le
\langle s\rangle^{-\delta_0}}\chi_{|\xi|> \langle
s\rangle^{-\frac{\delta_0}{N_1}}}\chi_{|\sigma|> 2\langle
s\rangle^{-{\delta_0}}};\\
m_4(\xi,\eta,\sigma,s)&=\chi_{|\xi-\eta|\le \lsr}
\chi_{|\eta-\sigma|\le \lsr}\chi_{|\sigma|\le \lsr}\chi_{|\eta|>
\langle s\rangle^{-\delta_0}}.
\end{align*}

\texttt{Subcase 3a}: estimate of $I_1$.

By \eqref{J19_e4}, we have
$$
\partial_\xi\phi=\frac{\xi}{\langle \xi\rangle}+\frac{\xi-\eta}{\langle
\xi-\eta\rangle}.
$$
Since on the support of $m_1(\xi,\eta,\sigma,s)$ both $\xi$ and $\eta$ are localized to low frequencies, we
gain one derivative by using the above identity. Therefore
\begin{align*}
\|\mathcal {F}^{-1}(I_1)\|_{2-\frac{\delta}{100}}&\lesssim
\int_0^t\langle s
\rangle^{\frac{\delta}{100}+1-\frac{\delta_0}{N_1}}\left\|P_{\lesssim \langle
s\rangle^{-\frac{\delta_0}{N_1}}}h\right\|_{(\frac1{2-\frac{\delta}{100}}-2\delta)^{-1}}\|P_{\le
\lsr} h\|_{\frac1\delta}^2 ds \\
&\lesssim \int_0^t \langle
s\rangle^{\frac{\delta}{100}+1-\frac{\delta_0}{N_1}-2(1-2\delta)}\,ds
\|h\|_{X_t}^3\\
&\lesssim \int_0^t\langle s \rangle^{-1-}\,ds \|h\|_{X_t}^3\lesssim
\|h\|_{X_t}^3,
\end{align*}
where we require that $\frac{\delta_0}{N_1}>4.01\delta$.

\texttt{Subcase 3b}: estimate of $I_2$.

Note that in this subcase we have $|\xi|\ge \langle
s\rangle^{-\frac{\delta_0}{N_1}},|\eta|\le \frac{25}{24} \langle
s\rangle^{-\delta_0},  |\sigma|\le 2\cdot \frac{25}{24} \cdot \langle s\rangle^{-{\delta_0}}$
on the support of $m_2(\xi,\eta,\sigma,s)$.
Hence
\begin{align}
\langle \xi \rangle +\langle \xi-\eta\rangle-2&\ge \langle \xi
\rangle-1=\frac{|\xi|^2}{\langle \xi \rangle+1}\gtrsim \langle
s\rangle^{-\frac{2\delta_0}{N_1}}, \quad \mbox{ if } |\xi|\le 3;\notag\\
\langle \xi \rangle +\langle \xi-\eta\rangle-2&\gtrsim \langle \xi-\eta\rangle, \quad \mbox{ if } |\xi|> 3;\notag\\
\langle \eta-\sigma \rangle +\langle \sigma\rangle-2&=\langle
\eta-\sigma \rangle-1+\langle \sigma\rangle-1\notag\\
&=\frac{(\eta-\sigma)\cdot(\eta-\sigma)}{\langle \eta-\sigma
\rangle+1}+\frac{\sigma\cdot\sigma}{\langle \sigma
\rangle+1}. \label{J19_e5}
\end{align}
We now perform a \textbf{partial normal form} transform. Namely, we
write
$$
e^{-is\phi}=e^{-is(\langle \xi \rangle+\langle
\xi-\eta\rangle-2)}\>e^{is(\langle \eta-\sigma \rangle +\langle
\sigma\rangle-2)}.
$$
Using the identity
$$
e^{-is(\langle \xi \rangle+\langle
\xi-\eta\rangle-2)}=\frac{i}{\langle \xi \rangle+\langle
\xi-\eta\rangle-2}\partial_s(e^{-is(\langle \xi \rangle+\langle
\xi-\eta\rangle-2)})
$$
and integrating by parts in the time variable $s$, we obtain
\begin{align}
I_2=& \int_0^t \int \frac{i}{\langle \xi \rangle+\langle
\xi-\eta\rangle-2}\partial_s(e^{-is(\langle \xi \rangle+\langle
\xi-\eta\rangle-2)})e^{is(\langle \eta-\sigma \rangle +\langle
\sigma\rangle-2)} \notag \\
&\quad\frac{s \>\partial_{\xi}
\phi}{\phi_0(\xi,\eta)}\langle \xi
\rangle^{4+2\delta}\langle\eta\rangle m_2(\xi,\eta,\sigma,s)\>
\notag\\
&\quad\frac{\eta}{|\eta|} \> \widehat{\mathcal R f}(s,\xi-\eta)
\widehat{\mathcal Rf}(s,\eta-\sigma) \widehat{\mathcal R
f}(s,\sigma)  d\sigma d\eta ds\notag\\
=&\int e^{-it\phi}\frac{i}{\langle \xi \rangle+\langle
\xi-\eta\rangle-2} \frac{t\> \partial_{\xi}
\phi}{\phi_0(\xi,\eta)}\langle \xi
\rangle^{4+2\delta}\langle\eta\rangle m_2(\xi,\eta,\sigma,t)\>
\notag\\
&\quad\frac{\eta}{|\eta|} \> \widehat{\mathcal R f}(t,\xi-\eta)
 \widehat{\mathcal Rf}(t,\eta-\sigma) \widehat{\mathcal R
f}(t,\sigma)  d\sigma d\eta \label{J19_e6a}\\
&- \int_0^t \int e^{-is\phi}\frac{i}{\langle \xi
\rangle+\langle \xi-\eta\rangle-2} \frac{ \partial_{\xi}
\phi}{\phi_0(\xi,\eta)}\langle \xi
\rangle^{4+2\delta}\langle\eta\rangle
\partial_s(sm_2(\xi,\eta,\sigma,s))\>
\notag\\
&\quad\frac{\eta}{|\eta|} \> \widehat{\mathcal R f}(s,\xi-\eta)
 \widehat{\mathcal Rf}(s,\eta-\sigma) \widehat{\mathcal R
f}(s, \sigma)  d\sigma d\eta ds\label{J19_e6b} \\
&+ \int_0^t \int e^{-is\phi}\frac{\langle \eta-\sigma \rangle+\langle
\sigma\rangle-2}{\langle \xi \rangle+\langle \xi-\eta\rangle-2}
\frac{s\> \partial_{\xi} \phi}{\phi_0(\xi,\eta)}\langle \xi
\rangle^{4+2\delta}\langle\eta\rangle m_2(\xi,\eta,\sigma,s)\>
\notag\\
&\quad\frac{\eta}{|\eta|} \> \widehat{\mathcal R f}(s,\xi-\eta)
 \widehat{\mathcal Rf}(s,\eta-\sigma) \widehat{\mathcal R
f}(s,\sigma)  d\sigma d\eta ds\label{J19_e6c}\\
&- \int_0^t \int e^{-is\phi}\frac{i}{\langle \xi \rangle+\langle \xi-\eta\rangle-2}
\frac{s \partial_{\xi} \phi}{\phi_0(\xi,\eta)}\langle \xi
\rangle^{4+2\delta}\langle\eta\rangle m_2(\xi,\eta,\sigma,s)\>
\notag\\
&\quad\frac{\eta}{|\eta|} \> \partial_s\widehat{\mathcal R f}(s,\xi-\eta)\>
 \widehat{\mathcal Rf}(s,\eta-\sigma) \widehat{\mathcal R
f}(s,\sigma)  d\sigma d\eta ds \label{J19_e6d}\\
&- \int_0^t \int e^{-is\phi}\frac{i}{\langle \xi \rangle+\langle \xi-\eta\rangle-2}
\frac{s \partial_{\xi} \phi}{\phi_0(\xi,\eta)}\langle \xi
\rangle^{4+2\delta}\langle\eta\rangle m_2(\xi,\eta,\sigma,s)\>
\notag\\
&\quad\frac{\eta}{|\eta|} \>\widehat{\mathcal R f}(s,\xi-\eta)\>
 \partial_s\big[\widehat{\mathcal Rf}(s,\eta-\sigma) \widehat{\mathcal R
f}(s,\sigma)\big]  d\sigma d\eta ds \label{J19_e6e}.
\end{align}
For \eqref{J19_e6a}, by using \eqref{J19_e5} and Lemma \ref{lem_rbj_5}, it is not difficult to check that the functions
\begin{align*}
\widetilde{m_1}(\xi,\eta)&=\frac{i}{\langle \xi \rangle+\langle
\xi-\eta\rangle-2} \frac{\partial_{\xi}
\phi}{\phi_0(\xi,\eta)}\langle \xi
\rangle^{4+2\delta+\frac{\delta}{100}}\langle\eta\rangle  \chi_{|\xi|\ge 3}\chi_{|\eta|\lesssim 1}\cdot \langle \xi-\eta\rangle^{-(4+2\delta+\frac{\delta}{99})},\\
\widetilde{m_2}(\xi,\eta)&=\frac{i}{\langle \xi \rangle+\langle
\xi-\eta\rangle-2} \frac{\partial_{\xi}
\phi}{\phi_0(\xi,\eta)}\langle \xi
\rangle^{4+2\delta+\frac{\delta}{100}}\langle\eta\rangle  \chi_{|\xi|< 3}\chi_{|\eta|\lesssim 1}\cdot \langle t\rangle^{-\frac{6\delta_0}{N_1}}
\chi_{\langle\xi\rangle+\langle\xi-\eta\rangle-2\gtrsim \langle t \rangle^{-\frac{2\delta_0}{N_1}}}
\end{align*}
satisfy \eqref{rbj_1e1}. Therefore by Lemma \ref{lem_rbj_1}, we have
\begin{align}
\|\mathcal F^{-1}(\eqref{J19_e6a})
\|_{2-\frac{\delta}{100}}\lesssim& \langle t
\rangle^{1+\frac{\delta}{100}}\Big\|T_{\widetilde{m_1}(\xi,\eta)}\Big(\lnr^{4+
2\delta+\frac{\delta}{99}}P_{\le \langle t\rangle^{\delta_0}}h,\notag\\
&\quad \mathcal RP_{\le \langle t\rangle^{-\delta_0}}
\big(\mathcal RP_{\le \langle t\rangle^{\delta_0}}
h \cdot \mathcal
RP_{\le \langle t\rangle^{\delta_0}}
P_{\le 2\langle t\rangle^{-\delta_0}}h\big)\Big)\Big\|_{2-\frac{\delta}{100}}\notag\\
&\quad +\langle t\rangle^{1+\frac{\delta}{100}+\frac{6\delta_0}{N_1}}\Big\|T_{\widetilde{m_2}(\xi,\eta)}\Big(P_{\le \langle t\rangle^{\delta_0}}P_{\lesssim 1}h,\notag\\
&\quad \mathcal RP_{\le \langle t\rangle^{-\delta_0}}
\big(\mathcal RP_{\le \langle t\rangle^{\delta_0}}
h \cdot \mathcal RP_{\le \langle t\rangle^{\delta_0}}
P_{\le 2\langle t\rangle^{-\delta_0}}h\big)\Big)\Big\|_{2-\frac{\delta}{100}}\notag\\
\lesssim &\langle t\rangle^{1+\frac{\delta}{100}}\Big\|\lnr^{4+2\delta+\frac{\delta}{99}}P_{\lesssim\langle t\rangle^{\delta_0}}h\Big\|_{(\frac1{2-\frac{\delta}{100}}-2\delta)^{-1}}\>\|h\|_{\frac{1}{\delta}}^2\notag\\
&\quad +\langle t\rangle^{1+\frac{\delta}{100}+\frac{6\delta_0}{N_1}}\|P_{\lesssim1}h\|_{(\frac1{2-\frac{\delta}{100}}-2\delta)^{-1}}\>\|h\|_{\frac{1}{\delta}}^2\notag\\
\lesssim &\langle t\rangle^{1+\frac{\delta}{100}+\frac{6\delta_0}{N_1}-2(1-2\delta)} \|h\|_{X_t}^3\lesssim \|h\|_{X_t}^3.\label{J19_e8}
\end{align}

To estimate \eqref{J19_e6b}, we need a simple fact. Namely, if $\psi=\psi(x)$ is a smooth cut-off function localized to $\{x:|x|\le1\}$, then for any real number $\alpha$,
\begin{align*}
\frac{\partial}{\partial s}\big(\psi(\frac{x}{\langle s \rangle^\alpha})\big)&=\Big[\frac{x}{\langle s \rangle^\alpha}\cdot \nabla\psi(\frac{x}{\langle s \rangle^\alpha} )\Big]\cdot O\Big(\frac1{\langle s\rangle}\Big)\\
&=\chi_{\le\langle s \rangle^\alpha} \cdot O\Big(\frac1{\langle s\rangle}\Big),
\end{align*}
i.e. the function $\partial_s\big(\psi(\frac{x}{\langle s \rangle^\alpha})\big)$ has the same support as $\psi(\frac{x}{\langle s \rangle^\alpha})$ and picks up a decay factor $\frac1{\langle s\rangle}$. Using this fact, we can write
$$
\partial_s(sm_2(\xi,\eta,\sigma,s))=\widetilde{m_2}(\xi,\eta,\sigma,s),
$$
where $\widetilde{m_2}$ has essentially the same form as $m_2$. By essentially repeating the estimate as in \eqref{J19_e6a} (see \eqref{J19_e8}), we have
\begin{align*}
\|\mathcal F^{-1}(\eqref{J19_e6a})\|_{2-\frac{\delta}{100}}\lesssim& \int_0^t
\langle s\rangle^{1+\frac{\delta}{100}+\frac{6\delta_0}{N_1}-2(1-2\delta)}\,ds \|h\|_{X_t}^3\\
\lesssim & \int_0^t \langle s \rangle^{-1-}\,ds  \> \|h\|_{X_t}^3\lesssim  \|h\|_{X_t}^3.
\end{align*}

For \eqref{J19_e6c}, we need to use the third identity in \eqref{J19_e5}.
Note that $|\eta|\le \frac{25}{24}
\langle s \rangle^{-\delta_0}, |\sigma|\le 3\langle s \rangle^{-\delta_0}$, and we can insert a fattened cut-off $P_{\lesssim \langle s \rangle^{-\delta_0}}$ when it is needed. By an estimate similar to that in \eqref{J19_e8}, we have
\begin{align*}
\|\mathcal F^{-1}(\eqref{J19_e6c})
\|_{2-\frac{\delta}{100}}\lesssim& \int_0^t
\langle s\rangle^{1+\frac{\delta}{100}+\frac{6\delta_0}{N_1}}\, \|h\|_{X_t} \,
\left\|\frac{\Delta}{\lnr+1}P_{\lesssim \langle s \rangle^{-\delta_0}}h\right\|_{\frac1\delta}\>\|h\|_{\frac1\delta}\,ds\\
\lesssim &\int_0^t \langle s\rangle^{1+\frac{\delta}{100}+\frac{6\delta_0}{N_1}-2\delta_0}\langle s\rangle^{-2(1-2\delta)}\,ds\>\|h\|_{X_t}^3\\
\lesssim & \int_0^t \langle s \rangle^{-1-}\,ds  \> \|h\|_{X_t}^3\lesssim  \|h\|_{X_t}^3,
\end{align*}
where we need $(2-\frac{6}{N_1})\delta_0>(4+\frac1{100})\delta$.

We turn now to the estimate of \eqref{J19_e6d}. For this we need a lemma.
\begin{lem}\label{lem62}
For any $\beta\ge0$, $2\le p<\frac1\delta$, we have
$$
\Big\|\lnr^\beta e^{it\lnr}\partial_t(\mathcal Rf (t) )\Big\|_p\lesssim
\left\|\lnr^{\beta+1}h (t)\right\|_{(\frac1p-\delta)^{-1}}\|h (t)\|_{\frac1\delta}.
$$
\end{lem}
\begin{proof}[Proof of Lemma \ref{lem62}] By \eqref{p10_J11_e2}, we have
$$
e^{it\lnr}\partial_t(\mathcal Rf(t) )=\lnr \mathcal R\big(\mathcal Rh(t) \>\mathcal Rh(t)\big).
$$
Then the result follows from the product rule.
\end{proof}

Now we continue the estimate of \eqref{J19_e6d}.

 By Lemma \ref{lem62} and a similar computation as in \eqref{J19_e8}, we have
\begin{align*}
\|\mathcal F^{-1}(\eqref{J19_e6d})\|_{2-\frac{\delta}{100}}
\lesssim& \int_0^t \langle s\rangle^{1+\frac{\delta}{100}+\frac{6\delta_0}{N_1}}\|\lnr^{5+2\delta+\frac{\delta}{99}}h\|_{(\frac1{2-\frac{\delta}{100}}-2\delta)^{-1}} \|h\|_{\frac1\delta}^3\,ds\\
\lesssim &\int_0^t \langle s\rangle^{1+\frac{\delta}{100}+\frac{6\delta_0}{N_1}}
\langle s\rangle^{-3(1-2\delta)}\,ds\>\|h\|_{X_t}^4\\
\lesssim & \int_0^t \langle s \rangle^{-1-}\,ds  \> \|h\|_{X_t}^4\lesssim  \|h\|_{X_t}^4.
\end{align*}

In a similar way, we bounded \eqref{J19_e6e} as
\begin{align*}
\|\mathcal F^{-1}(\eqref{J19_e6e})\|_{2-\frac{\delta}{100}}\lesssim& \int_0^t
\langle s\rangle^{1+\frac{\delta}{100}+\frac{6\delta_0}{N_1}}\|h\|_{X_t}\>\|e^{is\lnr}\partial_s(\mathcal Rf)\|_{\frac1{2\delta}}\>\|h\|_{\frac1\delta}\,ds\\
\lesssim &\int_0^t \langle s\rangle^{1+\frac{\delta}{100}+\frac{6\delta_0}{N_1}}
\|\lnr h\|_{\frac1\delta}\>\| h\|_{\frac1\delta}\>\|\lnr h\|_{\frac1\delta}\,ds\>\|h\|_{X_t}\\
\lesssim &\int_0^t \langle s\rangle^{1+\frac{\delta}{100}+\frac{6\delta_0}{N_1}}
\langle s\rangle^{-3(1-2\delta)}\,ds\>\|h\|_{X_t}^4\\
\lesssim & \int_0^t \langle s \rangle^{-1-}\,ds  \> \|h\|_{X_t}^4\lesssim  \|h\|_{X_t}^4.
\end{align*}

\texttt{Subcase 3c}: estimate of $I_3$.

In this subcase, we have $|\eta|\le \frac{25}{24}\langle s \rangle^{-\delta_0}$,
$2\langle s \rangle^{-\delta_0} \le |\sigma| \le \frac {25}{24} \langle s \rangle^{\delta_0}$ on
the support of $m_3(\xi,\eta,\sigma,s)$. Then clearly,
$$
|2\sigma-\eta|\ge \frac12|\sigma|.
$$
By \eqref{J19_e4} and \eqref{rbj_4e2}, we then have
\begin{align}\label{J20_e1}
|\partial_\sigma \phi|=\left|\frac{\sigma-\eta}
{\langle \sigma-\eta\rangle}-\frac{\sigma}
{\langle \sigma\rangle}\right|
 & \gtrsim \frac{|\sigma|} {\langle \sigma \rangle^2} \notag \\
 &\gtrsim \langle s \rangle^{-2\delta_0}.
\end{align}
Using the identity
\begin{align*}
s\> e^{-is \phi} = i\frac{\partial_{\sigma} \phi}{|\partial_{\sigma} \phi|^2} \cdot \partial_{\sigma}
(e^{-is \phi} ),
\end{align*}
we integrate by parts in $\sigma$ in $I_3$.  This gives us
\begin{align}
I_3=&-i\int_0^t \int e^{-is\phi} \frac{\partial_{\xi}
\phi}{\phi_0(\xi,\eta)}\langle \xi
\rangle^{4+2\delta}\langle\eta\rangle \partial_\sigma\cdot\Big(\frac{\partial_{\sigma} \phi}{|\partial_{\sigma} \phi|^2}m_3(\xi,\eta,\sigma,s)\Big)\>
\notag\\
&\qquad\cdot\frac{\eta}{|\eta|} \> \widehat{\mathcal R f}(s,\xi-\eta)
 \widehat{\mathcal Rf}(s,\eta-\sigma) \widehat{\mathcal R
f}(s,\sigma)  d\sigma d\eta ds\label{J19_e20a}\\
&\quad-i\int_0^t \int e^{-is\phi} \frac{\partial_{\xi}
\phi}{\phi_0(\xi,\eta)}\langle \xi
\rangle^{4+2\delta}\langle\eta\rangle\>\frac{\partial_{\sigma} \phi}{|\partial_{\sigma} \phi|^2}m_3(\xi,\eta,\sigma,s)\>
\notag\\
&\qquad\cdot\frac{\eta}{|\eta|} \> \widehat{\mathcal R f}(s,\xi-\eta)
\partial_\sigma\Bigl( \widehat{\mathcal Rf}(s,\eta-\sigma) \widehat{\mathcal R
f}(s,\sigma) \Bigr) d\sigma d\eta ds\label{J19_e20b}.
\end{align}
For \eqref{J19_e20a}, note that
\begin{align*}
&\partial_\sigma\cdot\Big(\frac{\partial_{\sigma} \phi}{|\partial_{\sigma} \phi|^2}\chi_{|\sigma|\le \lsr}
\chi_{|\eta-\sigma|\le \lsr}\chi_{|\sigma|>2\langle s \rangle^{-\delta_0}}\Big)\\
=&\partial_\sigma\cdot\Big(\frac{\partial_{\sigma} \phi}{|\partial_{\sigma} \phi|^2}\Big)\>\chi_{|\sigma|\le \lsr}
\chi_{|\eta-\sigma|\le \lsr}\chi_{|\sigma|>2\langle s \rangle^{-\delta_0}}\\
&+\frac{\partial_{\sigma} \phi}{|\partial_{\sigma} \phi|^2}\>\langle s \rangle^{-\delta_0}\widetilde{\chi}_{|\sigma|\sim \lsr}
\chi_{|\eta-\sigma|\le \lsr}\chi_{|\sigma|>2\langle s \rangle^{-\delta_0}}\\
&+\frac{\partial_{\sigma} \phi}{|\partial_{\sigma} \phi|^2}\>\chi_{|\sigma|\le \lsr}
\langle s \rangle^{-\delta_0}\widetilde{\chi}_{|\eta-\sigma|\sim \lsr}\chi_{|\sigma|>2\langle s \rangle^{-\delta_0}}\\
&+\frac{\partial_{\sigma} \phi}{|\partial_{\sigma} \phi|^2}\>\chi_{|\sigma|\le \lsr}
{\chi}_{|\eta-\sigma|\le \lsr}\langle s \rangle^{\delta_0}\widetilde
{\chi}_{|\sigma|\sim2\langle s \rangle^{-\delta_0}},
\end{align*}
where $\tilde \chi$ are some modified cut-offs.

By \eqref{J20_e1}, it is easy to check that the functions
\begin{align*}
\widetilde{m_1}(\eta,\sigma)&=\chi_{|\partial_\sigma\phi|\gtrsim \langle s \rangle^{-2\delta_0}}\partial_\sigma\cdot\Big(\frac{\partial_{\sigma} \phi}{|\partial_{\sigma} \phi|^2}\Big)\>\langle s \rangle^{-10\delta_0}\>\langle\eta-\sigma\rangle^{-(1+\frac{\delta}{400})}\langle\sigma\rangle^{-(1+\frac{\delta}{400})},\\
\widetilde{m_2}(\eta, \sigma)&=\chi_{|\partial_\sigma\phi|\gtrsim \langle s \rangle^{-2\delta_0}}\frac{\partial_{\sigma} \phi}{|\partial_{\sigma} \phi|^2}\>\langle s \rangle^{\delta_0}\langle s \rangle^{-10\delta_0}\>\langle\eta-\sigma\rangle^{-(1+\frac{\delta}{400})}\langle\sigma\rangle^{-(1+\frac{\delta}{400})}
\end{align*}
satisfy \eqref{rbj_1e1}. Therefore by Lemma \ref{lem_rbj_1}, we have
\begin{align*}
\|\mathcal F^{-1}(\eqref{J19_e20a})\|_{2-\frac{\delta}{100}}\lesssim& \int_0^t \langle s \rangle^{\frac{\delta}{100}}\Big\|\lnr^{5+2\delta+\frac{\delta}{100}}\mathcal RP_{\le\langle s\rangle^{\delta_0}}h\Big\|_{(\frac1{2-\frac{\delta}{100}}-2\delta)^{-1}}\\
&\quad\langle s \rangle^{10\delta_0}
\sum\limits_{i=1}^2\Big\|T_{\widetilde{m_i}(\eta,\sigma)}\Big(\lnr^{1+\frac{\delta}{400}}\mathcal RP_{\lesssim\langle s\rangle^{\delta_0}}h,\\
&\qquad\lnr^{1+\frac{\delta}{400}}\mathcal RP_{\lesssim\langle s\rangle^{\delta_0}}P_{\gtrsim \langle s\rangle^{-\delta_0}}h\Big)\Big\|_{\frac{1}{2\delta}}\,ds\\
\lesssim&\int_0^t \langle s \rangle^{\frac{\delta}{100}}\|h\|_{X_t}
\> \langle s \rangle^{10\delta_0+\delta_0\frac{\delta}{200}}
\|\lnr h\|_{\frac{1}{\delta}}^2\,ds\\
\lesssim&\int_0^t \langle s \rangle^{\frac{\delta}{100}+(10+\frac{\delta}{200})\delta_0-2(1-2\delta)}\,ds\>\|h\|_{X_t}^3\\
\lesssim&\int_0^t \langle s \rangle^{-1-}\,ds\>\|h\|_{X_t}^3\lesssim \|h\|_{X_t}^3.
\end{align*}
Similarly for \eqref{J19_e20b}, we use Lemma \ref{lem-J19-1} to obtain
\begin{align*}
\|\mathcal F^{-1}(\eqref{J19_e20b})\|_{2-\frac{\delta}{100}}\lesssim& \int_0^t \langle s \rangle^{\frac{\delta}{100}}\Big\|\lnr^{5+2\delta+\frac{\delta}{100}}\mathcal RP_{\lesssim\langle s\rangle^{\delta_0}}h\Big\|_{\frac{1}{\delta}}\\
&\quad\langle s \rangle^{8\delta_0}
\|\lnr^{1+\frac{\delta}{400}}P_{\lesssim\langle s\rangle^{\delta_0}}e^{is\lnr}\mathcal F^{-1}
\big(\partial_\sigma( \widehat{\mathcal Rf })\big)\|_{(\frac1{2-\frac{\delta}{100}}-2\delta)^{-1}}\\
&\qquad\|\lnr^{1+\frac{\delta}{400}}\mathcal RP_{\lesssim\langle s\rangle^{\delta_0}}h\|_{\frac{1}{\delta}}\,ds\\
\lesssim&\int_0^t \langle s \rangle^{\frac{\delta}{100}+(4+2\delta+\frac{\delta}{100})\delta_0}\|\lnr h\|_{\frac{1}{\delta}}\>\langle s \rangle^{8\delta_0} \langle s \rangle^{\delta_0(1+\frac{\delta}{400}+\frac{2}{2+\delta}-4(\frac{1}{2-\frac{\delta}{100}}-2\delta))}
\\
&\quad\langle s \rangle^{1-2(\frac{1}{2-\frac{\delta}{100}}-2\delta)}\|\langle x\rangle f\|_{2+\delta}\>\langle s \rangle^{\frac{\delta}{400}\delta_0}
\|\lnr h\|_{\frac{1}{\delta}}\,ds\\
\lesssim&\int_0^t \langle s \rangle^{-1-}\,ds\>\|h\|_{X_t}^3\lesssim \|h\|_{X_t}^3.
\end{align*}
This ends the estimate of $I_3$.

\texttt{Subcase 3d}: estimate of $I_4$.

Note that in this subcase, $|\eta|\gtrsim \langle s \rangle^{-\delta_0}$. By Lemma \ref{lem_phase},
we have
$$
\partial_\xi \phi=Q_1(\xi,\eta,\sigma)\partial_\eta \phi+Q_2(\xi,\eta,\sigma)\partial_\sigma \phi,
$$
where
\begin{equation*}
    \big|\partial_{\xi}^\alpha\partial_\eta^\beta\partial_\sigma^\gamma
    Q_i(\xi,\eta,\sigma)\big|\lesssim_{\alpha,\beta,\gamma} \langle|\xi|+|\eta|+\sigma\rangle^3, \quad i=1,2.
\end{equation*}
Obviously,
\begin{align*}
s \>\partial_{\xi} \phi \>e^{-is \phi} = i\Bigl(Q_1\partial_\eta ( e^{-is \phi})+Q_2\partial_\sigma ( e^{-is \phi})
\Bigr).
\end{align*}
Using the above identity,
we shall integrate by parts in $\eta$ and $\sigma$. It is not difficult to check that the functions
\begin{align*}
\widetilde{m_i}(\xi,\eta,\sigma)&=\frac{\partial_\xi \phi\>\langle\xi\rangle^{4+2\delta}}{\phi_0(\xi,\eta)}\>\langle\eta\rangle\>\frac{\eta}{|\eta|}m_4(\xi,\eta,\sigma,s)\>Q_i(\xi,\eta,\sigma)\>\langle s \rangle^{-(13+2\delta)\delta_0},\quad i=1,2;\\
\widetilde{m_3}(\xi,\eta,\sigma)&=
\partial_\eta \widetilde{m_i}(\xi,\eta,\sigma,s)\>\langle s \rangle^{-(14+2\delta)\delta_0} ,\quad i=1,2;\\
\widetilde{m_4}(\xi,\eta,\sigma)&=\partial_\sigma \widetilde{m_i}(\xi,\eta,\sigma,s)\>\langle s \rangle^{-(13+2\delta)\delta_0} ,\quad i=1,2
\end{align*}
satisfy \eqref{cond_three}. Therefore by Corollary \ref{cor_rbj_1}, we have
\begin{align*}
\|\mathcal F^{-1}(I_4)\|_{2-\frac{\delta}{100}}\lesssim&
\int_0^t \langle s \rangle^{\frac{\delta}{100}+(14+2\delta)\delta_0}
\Big\|T_{\widetilde{m_3}+\widetilde{m_4}}(\mathcal Rh, \mathcal Rh,\mathcal Rh)\Big\|_{2-\frac{\delta}{100}}\,ds\\
&+\int_0^t \langle s \rangle^{\frac{\delta}{100}+(13+2\delta)\delta_0}
\Big\|T_{\widetilde{m_1}}\left(P_{\lesssim\langle s\rangle^{\delta_0}}
e^{is\lnr}\mathcal F^{-1}\big(\partial_\eta( \widehat{\mathcal Rf} )\big),\mathcal Rh, \mathcal Rh\right)\Big\|_{2-\frac{\delta}{100}}\,ds\\
&+\int_0^t \langle s \rangle^{\frac{\delta}{100}+(13+2\delta)\delta_0}\Big\|
T_{\widetilde{m_2}}\left(\mathcal Rh, \mathcal Rh,P_{\lesssim\langle
s\rangle^{\delta_0}}e^{is\lnr}\mathcal F^{-1}\big(\partial_\sigma( \widehat{\mathcal Rf })\big)\right)\Big\|_{2-\frac{\delta}{100}}\,ds\\
\lesssim&\int_0^t \langle s \rangle^{\frac{\delta}{100}+(14+2\delta)\delta_0-2(1-2\delta)}\,ds\>\|h\|_{X_t}^3\\
\lesssim&\int_0^t \langle s \rangle^{-1-}\,ds\>\|h\|_{X_t}^3\lesssim \|h\|_{X_t}^3.
\end{align*}
Hence Case 3 is finished.

\texttt{Case 4}:
\begin{align}
\phi(\xi,\eta,\sigma) =
 \langle \xi \rangle+
\langle \xi -\eta \rangle + \langle \eta -\sigma \rangle -\langle \sigma \rangle. \label{phase_plus1}
\end{align}

In this case we decompose (see \eqref{J19_e1aa}),
\begin{align*}
m_{\text{low}}(\xi,\eta,\sigma)&=m_{\text{low}}(\xi,\eta,\sigma)\chi_{|\eta|\le \langle s \rangle^{-\delta_0}}+m_{\text{low}}(\xi,\eta,\sigma)\chi_{|\eta|> \langle s \rangle^{-\delta_0}}\\
&=m_{\text{low}}^{(1)}(\xi,\eta,\sigma)+m_{\text{low}}^{(2)}(\xi,\eta,\sigma),
\end{align*}
and denote the corresponding integral in \eqref{J19_e1} as $I_1$ and $I_2$ respectively.

\texttt{Subcase 4a}: estimate of $I_1$.

We again use the partial normal form trick. Note that
$$
\langle \sigma -\eta\rangle-\langle\sigma \rangle=\frac{(2\sigma-\eta)\cdot(-\eta)}{\langle \sigma -\eta\rangle+\langle\sigma \rangle}.
$$
Using the identity
$$
e^{-is(\langle \xi \rangle+\langle
\xi-\eta\rangle)}=\frac{i}{\langle \xi \rangle+\langle
\xi-\eta\rangle}\partial_s(e^{-is(\langle \xi \rangle+\langle
\xi-\eta\rangle)})
$$
and integrating by parts in the time variable $s$, we get
\begin{align}
I_1=&\int e^{-it\phi} \frac{i t\> \partial_{\xi}
\phi}{\phi_0(\xi,\eta)}\frac{\langle \xi
\rangle^{4+2\delta}}{\langle \xi \rangle+\langle
\xi-\eta\rangle}\langle\eta\rangle m_{\text{low}}^{(1)}\>
\notag\notag\\
&\quad\frac{\eta}{|\eta|} \> \widehat{\mathcal R f}(t,\xi-\eta)
 \widehat{\mathcal Rf}(t,\eta-\sigma) \widehat{\mathcal R
f}(t,\sigma)  d\sigma d\eta \label{J22_e10a}\\
&- \int_0^t \int e^{-is\phi}\frac{(2\sigma-\eta)\cdot(-\eta)}
{\langle \sigma -\eta\rangle+\langle\sigma \rangle}\frac{\langle \xi
\rangle^{4+2\delta}}{\langle \xi \rangle+\langle
\xi-\eta\rangle}\langle\eta\rangle
\frac{s\partial_{\xi} \phi}{\phi_0(\xi,\eta)}\frac{\eta}{|\eta|} m_{\text{low}}^{(1)}\>
\notag\\
&\quad\> \widehat{\mathcal R f}(s,\xi-\eta)
 \widehat{\mathcal Rf}(s,\eta-\sigma) \widehat{\mathcal R
f}(s,\sigma)  d\sigma d\eta ds\label{J22_e10b}\\
&- \int_0^t \int e^{-is\phi}\frac{\langle \xi
\rangle^{4+2\delta}}{\langle \xi \rangle+\langle
\xi-\eta\rangle}\langle\eta\rangle
\frac{i\partial_{\xi} \phi}{\phi_0(\xi,\eta)}\frac{\eta}{|\eta|}
\partial_s(sm_{\text{low}}^{(1)})\>
\notag\\
&\quad \widehat{\mathcal R f}(s,\xi-\eta)
 \widehat{\mathcal Rf}(s,\eta-\sigma) \widehat{\mathcal R
f}(s, \sigma)  d\sigma d\eta ds\label{J22_e10c}\\
&- \int_0^t \int e^{-is\phi}\frac{\langle \xi
\rangle^{4+2\delta}}{\langle \xi \rangle+\langle
\xi-\eta\rangle}\langle\eta\rangle
\frac{is\partial_{\xi} \phi}{\phi_0(\xi,\eta)}\frac{\eta}{|\eta|} m_{\text{low}}^{(1)}\>
\notag\\
&\quad\partial_s\Big(\widehat{\mathcal R f}(s,\xi-\eta)\>
 \widehat{\mathcal Rf}(s,\eta-\sigma) \widehat{\mathcal R
f}(s, \sigma)\Big)  d\sigma d\eta ds. \label{J22_e10d}
\end{align}

The estimate of \eqref{J22_e10a} is similar to \eqref{J19_e6a}, and we have
$$
\left\|\mathcal F^{-1}(\eqref{J22_e10a})\right\|_{2-\frac{\delta}{100}}\lesssim \|h\|_{X_t}^3.
$$

For \eqref{J22_e10b}, note that $\frac{(2\sigma-\eta)}{\langle \sigma -
\eta\rangle+\langle\sigma \rangle}$ is a Coifman-Meyer multiplier. We compute
\begin{align*}
\left\|\mathcal F^{-1}(\eqref{J22_e10b})\right\|_{2-\frac{\delta}{100}}&\lesssim
\int_0^t \langle s \rangle^{1+\frac{\delta}{100}}\big
\|\lnr^{4+3\delta} P_{\le \langle s \rangle^{\delta_0}} \mathcal Rh\big\|_{(\frac1{2-\frac{\delta}{100}}
-2\delta)^{-1}}\\
&\quad
\left\|\nabla P_{\lesssim \langle s \rangle^{-\delta_0}}
T_{\frac{(2\sigma-\eta)}{\langle \sigma -\eta\rangle+
\langle\sigma \rangle}}(P_{\le \langle s \rangle^{\delta_0}}
\mathcal Rh, P_{\le \langle s \rangle^{\delta_0}}\mathcal Rh)\right\|_{\frac{1}{2\delta}}^2 ds\\
&\lesssim \int_0^t \langle s \rangle^{1+\frac{\delta}{100}}\|h\|_{H^{N^{\prime}}}\>\langle s
 \rangle^{-\delta_0}\|h\|_{\frac{1}{\delta}}^2\,ds\\
&\lesssim \int_0^t \langle s \rangle^{1+\frac{\delta}{100}-\delta_0-2(1-2\delta)}\,ds\>
\|h\|_{X_t}^3\\
&\lesssim \int_0^t \langle s \rangle^{-1-}\,ds\>\|h\|_{X_t}^3\lesssim\|h\|_{X_t}^3.
\end{align*}

The estimate of \eqref{J22_e10c} is similar to \eqref{J19_e6b}, and we have
\begin{align*}
\left\|\mathcal F^{-1}(\eqref{J22_e10c})\right\|_{2-\frac{\delta}{100}}&\lesssim
\|h\|_{X_t}^3.
\end{align*}

The estimate of \eqref{J22_e10d} is also similar to that of \eqref{J19_e6d} and \eqref{J19_e6e}.  We have
\begin{align*}
\left\|\mathcal F^{-1}(\eqref{J22_e10d})\right\|_{2-\frac{\delta}{100}}&\lesssim
\|h\|_{X_t}^3.
\end{align*}

\texttt{Subcase 4b}: estimate of $I_2$.

It is not difficult to check that
\begin{equation}\label{J22_e10ee}
\langle \xi \rangle+
\langle \xi -\eta \rangle + \langle \eta -\sigma \rangle -\langle \sigma \rangle
\gtrsim \frac{1}{\langle\xi\rangle}, \quad \forall\; \xi,\eta,\sigma\in \R^2.
\end{equation}
Using the identity
$$
e^{-is\phi}=\frac{i}{\phi}\partial_s(e^{-is\phi}),
$$
we integrate by parts in the variable $s$. This gives
\begin{align}
I_2=&\int e^{-it\phi}\frac{i}{\phi} \frac{ t\> \partial_{\xi}
\phi}{\phi_0(\xi,\eta)}{\langle \xi
\rangle^{4+2\delta}}\langle\eta\rangle\frac{\eta}{|\eta|} \> m_{\text{low}}^{(2)}\>
\notag\\
&\quad \widehat{\mathcal R f}(t,\xi-\eta)
 \widehat{\mathcal Rf}(t,\eta-\sigma) \widehat{\mathcal R
f}(t,\sigma)  d\sigma d\eta \label{J22_e11a}\\
&- \int_0^t \int e^{-is\phi}\frac{i}{\phi} \frac{ \partial_{\xi}
\phi}{\phi_0(\xi,\eta)}{\langle \xi
\rangle^{4+2\delta}}\langle\eta\rangle\frac{\eta}{|\eta|}
\partial_s(sm_{\text{low}}^{(2)})\>
\notag\\
&\quad \widehat{\mathcal R f}(s,\xi-\eta)
 \widehat{\mathcal Rf}(s,\eta-\sigma) \widehat{\mathcal R
f}(s,\sigma)  d\sigma d\eta ds\label{J22_e11b}\\
&- \int_0^t \int e^{-is\phi}\frac{i}{\phi} \frac{ s\> \partial_{\xi}
\phi}{\phi_0(\xi,\eta)}{\langle \xi
\rangle^{4+2\delta}}\langle\eta\rangle\frac{\eta}{|\eta|}
 m_{\text{low}}^{(2)}\>
\notag\\
&\quad\partial_s\Big(\widehat{\mathcal R f}(s,\xi-\eta)\>
 \widehat{\mathcal Rf}(s,\eta-\sigma) \widehat{\mathcal R
f}(s, \sigma)\Big)  d\sigma d\eta ds. \label{J22_e11c}
\end{align}

For \eqref{J22_e11a}, by using \eqref{J22_e10ee} and Lemma \ref{lem_rbj_5}, it is not difficult to check that the function
\begin{align*}
\widetilde{m}(\xi,\eta,\sigma,s)=&\frac{i}{\phi} \frac{ \partial_{\xi}
\phi}{\phi_0(\xi,\eta)}{\langle \xi
\rangle^{4+2\delta}}\langle\eta\rangle\frac{\eta}{|\eta|}\\
&\quad\chi_{|\xi-\eta|\le \lsr}
\chi_{|\eta-\sigma|\le \lsr}\chi_{|\sigma|\le \lsr}\chi_{|\eta|>\langle s \rangle^{-\delta_0}}\langle s\rangle^{-(14+3\delta)\delta_0}
\end{align*}
satisfies \eqref{cond_three}. Therefore by Corollary \ref{cor_rbj_1}, we have
\begin{align*}
\|\mathcal F^{-1}(\eqref{J22_e11a})\|_{2-\frac{\delta}{100}}
\lesssim &\langle t\rangle^{1+\frac{\delta}{100}+(14+3\delta)\delta_0}\|h(t)\|_{(\frac1{2-\frac{\delta}{100}}-2\delta)^{-1}}\>\|h(t)\|_{\frac{1}{\delta}}^2\notag\\
\lesssim & \langle t\rangle^{1+\frac{\delta}{100}+(14+3\delta)\delta_0-2(1-2\delta)} \|h\|_{X_t}^3\lesssim \|h\|_{X_t}^3.
\end{align*}
Similarly,
\begin{align*}
\left\|\mathcal F^{-1}(\eqref{J22_e11b})\right\|_{2-\frac{\delta}{100}}&\lesssim
\int_0^t \langle s\rangle^{1+\frac{\delta}{100}+(14+3\delta)\delta_0-2(1-2\delta)} ds \|h\|_{X_t}^3\\
&\lesssim
\int_0^t \langle s\rangle^{-1-}\,ds \|h\|_{X_t}^3
\lesssim\|h\|_{X_t}^3.
\end{align*}
In a similar way, using Lemma \ref{lem62}, we have
\begin{align*}
\left\|\mathcal F^{-1}(\eqref{J22_e11c})\right\|_{2-\frac{\delta}{100}}&\lesssim
\int_0^t \langle s\rangle^{1+\frac{\delta}{100}+(14+3\delta)\delta_0} \|e^{is\lnr}\partial_s\big(\mathcal Rf\big)\|_{(\frac1{2-\frac{\delta}{100}}-2\delta)^{-1}}\|h\|_{\frac{1}{\delta}}^2\,ds\\
&\lesssim
\int_0^t \langle s\rangle^{1+\frac{\delta}{100}+(14+3\delta)\delta_0} \|h\|_{\frac{1}{\delta}}^3\|h\|_{H^3}\,ds\\
&\lesssim
\int_0^t \langle s\rangle^{1+\frac{\delta}{100}+(14+3\delta)\delta_0-3(1-2\delta)} \,ds\|h\|_{X_t}^4\\
&\lesssim
\int_0^t \langle s\rangle^{-1-}\,ds \|h\|_{X_t}^4
\lesssim\|h\|_{X_t}^4.
\end{align*}

\noindent
\texttt{Case 5:}
\begin{align*}
\phi(\xi,\eta,\sigma) = \langle \xi \rangle + \langle \xi-\eta \rangle
-\langle \eta-\sigma \rangle + \langle \sigma \rangle.
\end{align*}
This is exactly the same as Case 4 after the change of variable $\sigma\to \eta-\sigma$.

\noindent
\texttt{Case 6:}
\begin{align}
\phi(\xi,\eta,\sigma) = \langle \xi \rangle -\langle \xi-\eta\rangle
+ \langle \eta-\sigma \rangle + \langle \sigma \rangle.
\end{align}
In this case we decompose (see \eqref{J19_e1aa}),
\begin{align*}
m_{\text{low}}(\xi,\eta,\sigma)&=m_{\text{low}}(\xi,\eta,\sigma)\chi_{|\eta|\le \langle s \rangle^{-\delta_0}}+m_{\text{low}}(\xi,\eta,\sigma)\chi_{|\eta|> \langle s \rangle^{-\delta_0}}\\
&=m_{\text{low}}^{(1)}(\xi,\eta,\sigma)+m_{\text{low}}^{(2)}(\xi,\eta,\sigma),
\end{align*}
and denote the corresponding integral in \eqref{J19_e1} as $I_1$ and $I_2$ respectively. The estimate of $I_2$ is exactly the same as Subcase 4b. Hence we only need to estimate $I_1$. In this situation, note that
$$
\partial_\xi \phi=\frac{\xi}{\langle \xi\rangle}-\frac{\xi-\eta}{\langle \xi-\eta\rangle}=Q(\xi,\eta)\eta,
$$
where
$$
\big|\partial_\xi^\alpha\partial_\eta^{\beta} Q(\xi,\eta)\big|\lesssim_{\alpha,\beta}1.
$$
Therefore,
\begin{align*}
\left\|\mathcal F^{-1}(I_1)\right\|_{2-\frac{\delta}{100}}&\lesssim
\int_0^t \langle s\rangle^{1+\frac{\delta}{100}} \|\lnr^{5+3\delta}h\|_{(\frac1{2-\frac{\delta}{100}}-2\delta)^{-1}}\\
&\quad \left\|P_{\le \langle s \rangle^{-\delta_0}}\nabla\big(P_{\le \lsr}
\mathcal R(P_{\le \lsr}\mathcal R h \cdot P_{\le \lsr}\mathcal Rh)\big)\right\|_{\frac{1}{2\delta}} ds\\
&\lesssim
\int_0^t \langle s\rangle^{1+\frac{\delta}{100}}\|h\|_{H^{N^{\prime}}}\>\langle s \rangle^{-\delta_0}\|h\|_{\frac{1}{\delta}}^2\,ds\\
&\lesssim
\int_0^t \langle s\rangle^{1+\frac{\delta}{100}+\delta-\delta_0-2(1-2\delta)} \,ds\|h\|_{X_t}^3\\
&\lesssim
\int_0^t \langle s\rangle^{-1-}\,ds \|h\|_{X_t}^3
\lesssim\|h\|_{X_t}^3.
\end{align*}
This settles Case 6.

\texttt{Case 7}:
\begin{align}
\phi(\xi,\eta,\sigma) = \langle \xi \rangle -\langle \xi-\eta\rangle
-\langle \eta-\sigma \rangle -\langle \sigma \rangle.
\end{align}
In this case we again decompose
\begin{align*}
m_{\text{low}}(\xi,\eta,\sigma)&=m_{\text{low}}(\xi,\eta,\sigma)\chi_{|\eta|\le \langle s \rangle^{-\delta_0}}+m_{\text{low}}(\xi,\eta,\sigma)\chi_{|\eta|> \langle s \rangle^{-\delta_0}}\\
&=m_{\text{low}}^{(1)}(\xi,\eta,\sigma)+m_{\text{low}}^{(2)}(\xi,\eta,\sigma),
\end{align*}
and denote the corresponding integral in \eqref{J19_e1} as $I_1$ and $I_2$ respectively.  Note that
$$
|\phi(\xi,\eta,\sigma)|\gtrsim \frac{1}{\langle \xi\rangle}
$$
and
$$
\partial_\xi \phi=\frac{\xi}{\langle \xi\rangle}-\frac{\xi-\eta}{\langle \xi-\eta\rangle}.
$$
The estimates of $I_1$ and $I_2$ are exactly the same as in Case 6. Hence Case 7 is settled.

\texttt{Case 8}:
\begin{align}
\phi(\xi,\eta,\sigma) = \langle \xi \rangle +\langle \xi-\eta\rangle
+\langle \eta-\sigma \rangle + \langle \sigma \rangle.
\end{align}
In this case we again decompose (see \eqref{J19_e1aa})
\begin{align*}
m_{\text{low}}(\xi,\eta,\sigma)=&m_{\text{low}}(\xi,\eta,\sigma)\chi_{|\eta|\le \langle s \rangle^{-\delta_0}}\chi_{|\sigma|>2 \langle s \rangle^{-\delta_0}}\\
&+m_{\text{low}}(\xi,\eta,\sigma)\chi_{|\eta|\le \langle s \rangle^{-\delta_0}}\chi_{|\sigma|\le2 \langle s \rangle^{-\delta_0}}\\
&+m_{\text{low}}(\xi,\eta,\sigma)\chi_{|\eta|> \langle s \rangle^{-\delta_0}}.
\end{align*}
and denote the corresponding integral in \eqref{J19_e1} as $I_1$, $I_2$ and $I_3$ respectively.   We discuss three subcases.

\texttt{Subcase 8a}: estimate of $I_1$.
This subcase is exactly the same as Case 3c which was estimated before. Therefore,
$$
\left\|\mathcal F^{-1}(I_1)\right\|_{2-\frac{\delta}{100}}\lesssim \|h\|_{X_t}^3+\|h\|_{X_t}^4.
$$

\texttt{Subcase 8b}: estimate of $I_2$.
In this subcase, we shall again use the partial normal form trick. Write
$$
e^{-is\phi}=
\frac{i}{\langle \xi \rangle+\langle
\xi-\eta\rangle+2}\partial_s(e^{-is(\langle \xi \rangle+\langle
\xi-\eta\rangle+2)})e^{-is(\langle \eta-\sigma \rangle +\langle
\sigma\rangle-2)}.
$$

Note that by \eqref{J19_e5},
$$
\langle \eta-\sigma \rangle +\langle
\sigma\rangle-2=\frac{(\eta-\sigma)\cdot(\eta-\sigma)}{\langle \eta-\sigma \rangle+1}
+\frac{\sigma\cdot\sigma}{\langle \sigma \rangle+1}.
$$
Integrating by parts in $s$, we arrive at essentially the same situation as in Case 3b which was estimated before. Hence we have
$$
\left\|\mathcal F^{-1}(I_2)\right\|_{2-\frac{\delta}{100}}\lesssim \|h\|_{X_t}^3+\|h\|_{X_t}^4.
$$

\texttt{Subcase 8c}: estimate of $I_3$.

In this subcase we note that $|
\eta|\gtrsim \langle s \rangle^{-\delta_0}$ and
$$
\phi(\xi,\eta,\sigma)\gtrsim 1.
$$
We can integrate by parts in the time variable $s$ and use the same estimates as in Case 4b. Hence
$$
\left\|\mathcal F^{-1}(I_3)\right\|_{2-\frac{\delta}{100}}\lesssim \|h\|_{X_t}^3+\|h\|_{X_t}^4.
$$
We have completed the estimates of all phases. The proposition is now proved.

\section{Proof of Theorem \ref{thm_main}}
In this section we complete the proof of Theorem \ref{thm_main}. Define
\begin{align*}
a(t)=&\|\langle \tau\rangle^{-\delta}h(\tau)\|_{C_{\tau}^0 H^N([0,t])}+\|h(\tau)\|_{C_{\tau}^0 H^{N'}([0,t])}\\
&\quad+\|\langle \tau\rangle |\nabla|^\delta\langle\nabla\rangle h(\tau)\|_{L^\infty_\tau L^\infty_x([0,t])}+\|\langle \tau\rangle^{1-2\delta} \lnr h(\tau)\|_{L^\infty_\tau L^\frac{1}{\delta}_x([0,t])}\\
&\quad+\|x(1-\Delta)e^{-i\tau\lnr}h(\tau)\|_{L^\infty_\tau L^{2+\delta}_x([0,t])}.
\end{align*}
By the local theory in Section \ref{sec_local}, we have $a(t)$ is a continuous function
 of $t$. Also from the energy estimates therein, we have
\begin{align*}
\frac{d}{d\tau}\big(\|h(\tau)\|_{H^N}\big)&\lesssim
\big(\|u(\tau)\|_\infty+\|\nabla u(\tau)\|_\infty+\|\nabla \v v (\tau) \|_\infty\big)\>
\|h(\tau) \|_{H^N}\\
&\lesssim \||\nabla|^\delta\lnr h(\tau)\|_\infty \|h (\tau) \|_{H^N} \notag \\
& \lesssim a(\tau)^2 \langle \tau \rangle^{-1+\delta}.
\end{align*}
Integrating in time and using the monotonicity of $a(\tau)$ gives us
$$
\|h(s)\|_{H^N}\lesssim \|h_0\|_{H^N}+a(s)^2\langle s\rangle^\delta,
$$
or
$$
\|\langle \tau\rangle^{-\delta}h(\tau)\|_{C_{\tau}^0 H^N([0,t])}\lesssim  \| e^{i\tau \jnab} h_0 \|_{X_{\infty}}+a(t)^2.
$$
By the analysis in Section 4-7, we also have
\begin{align*}
\| \langle \tau \rangle |\nabla|^{\frac 12}  \langle \nabla \rangle h (\tau) \|_{L_{\tau}^\infty L_x^\infty([0,t])}
&+ \|  h(\tau) \|_{C_\tau^0 H_x^{N'}([0,t])}\\
+ \| \langle \tau \rangle^{1-2 \delta} h(\tau) \|_{L_\tau^\infty L_x^{\frac 1 {\delta}}([0,t])}
& + \| \langle x  \rangle  (1-\Delta) e^{-i\tau\lnr} h(\tau) \|_{L_\tau^\infty L_x^{2+\delta}([0,t])}\\
&\lesssim  \| e^{i\tau \jnab} h_0 \|_{X_{\infty}} +a(t)^2+a(t)^3+a(t)^4.
\end{align*}
Therefore we have proved for some constant $C>0$,
$$
a(t)\le C \cdot \bigl( \| e^{i\tau \jnab} h_0 \|_{X_{\infty}} +a(t)^2+a(t)^3+a(t)^4 \bigr).
$$
Since $a(t)$ is a continuous function of $t$ and $a(0) \le
\| e^{i\tau \jnab} h_0 \|_{X_{\infty}}$, by a standard argument,
we conclude that if $\| e^{i\tau \jnab} h_0 \|_{X_{\infty}} $ is sufficiently small, then $
a(t)$ is bounded for all $t\ge 0$. Note that the scattering of $H^{N'}$ norm is a simple consequence
of the analysis in Section 4. This concludes the proof of Theorem \ref{thm_main}.

\end{document}